\documentclass[11pt,a4paper]{article}

\usepackage[utf8]{inputenc}
\usepackage[T1]{fontenc}
\usepackage{amsmath,amssymb,amsthm,mathtools}
\usepackage{graphicx}
\usepackage[margin=2.5cm]{geometry}
\usepackage{enumitem}
\usepackage{hyperref}
\usepackage{cleveref}
\usepackage{booktabs}
\usepackage[table]{xcolor}
\usepackage{float}
\usepackage{caption}
\usepackage{natbib}
\usepackage{authblk}

\newtheorem{theorem}{Theorem}[section]

\newtheorem{proposition}[theorem]{Proposition}
\newtheorem{corollary}[theorem]{Corollary}
\theoremstyle{definition}
\newtheorem{definition}[theorem]{Definition}

\newtheorem{remark}[theorem]{Remark}

\newtheorem{conjecture}[theorem]{Conjecture}
\newtheorem{observation}[theorem]{Observation}

\newcommand{\R}{\mathbb{R}}

\newcommand{\E}{\mathbb{E}}
\newcommand{\Var}{\mathrm{Var}}
\newcommand{\KL}{D_{\mathrm{KL}}}

\newcommand{\FE}{\mathcal{F}}
\newcommand{\Wres}{W}
\newcommand{\Wout}{w_{\mathrm{out}}}
\newcommand{\Win}{W_{\mathrm{in}}}
\newcommand{\xstate}{\mathbf{x}}
\newcommand{\acog}{a^{\mathrm{cog}}}
\newcommand{\abody}{a^{*}}
\newcommand{\bout}{b_{\mathrm{out}}}

\title{Body-Reservoir Governance in Repeated Games:\\
  Embodied Decision-Making, Dynamic Sentinel Adaptation,\\
  and Complexity-Regularized Optimization}

\author{Yuki Nakamura}
\affil{The Open University of Japan\\
  \texttt{sinwisoa@gmail.com},\quad \texttt{2310082261@campus.ouj.ac.jp}}

\date{February 2026}

\begin{document}

\maketitle

\begin{abstract}
Standard game theory explains cooperation in repeated games through conditional strategies
such as Tit-for-Tat (TfT), but these strategies require continuous computation---input
classification, memory maintenance, and conditional output selection---that imposes
physical costs on any embodied agent.  We propose a \emph{three-layer Body-Reservoir
Governance} (BRG) architecture in which each agent comprises (1)~a \emph{body reservoir}
(echo state network) whose $d$-dimensional state performs implicit inference over
interaction history, serving as both the primary decision-maker and an intrinsic
anomaly detector, (2)~a \emph{cognitive filter} providing sharp but costly strategic
tools that are activated on demand, and (3)~a \emph{metacognitive governance} layer that
maintains a lightweight policy---``trust the body unless it signals discomfort''---through
a receptivity parameter $\alpha \in [0,1]$.
At full body governance ($\alpha = 1$), the agent's output feeds back through the
reservoir as input, inducing a closed-loop dynamical system whose fixed points satisfy a
\emph{self-consistency equation}: cooperation is not computed but \emph{expressed} as the
reservoir's self-consistent state.
We define the complexity cost of a strategy as the KL divergence between the reservoir's
$d$-dimensional state distribution under current conditions and its habituated baseline---a
measure of the internal dynamical distortion invisible to external observers.
Body governance dramatically reduces this internal cost, with the externally observable
consequence that action variance decreases by up to $1600\times$ with increasing
reservoir dimension $d$.
We introduce a \emph{dynamic sentinel} model, in which the body reservoir
generates a composite discomfort signal
$D(t) = w_x D_{\mathrm{state}}(t) + w_a D_{\mathrm{output}}(t) + w_e D_{\mathrm{disagree}}(t)$
from its own dynamical state, driving adaptive modulation of $\alpha(t)$.
During cooperative phases, $\alpha(t)$ remains near its baseline ($\alpha_0 = 0.85$);
upon opponent defection, the body's discomfort signal triggers a rapid drop to the
floor ($\alpha_{\min} = 0.05$), activating cognitive retaliation within $5$ time steps.
Overriding the body---reducing $\alpha$ against the reservoir's self-consistent
tendency---incurs a thermodynamic cost proportional to the distortion of the internal
state, formalizing the intuition that acting against one's adapted nature is expensive.
The dynamic sentinel achieves the highest cumulative payoff across all conditions tested,
outperforming both static body governance and pure TfT.
A reservoir dimension sweep ($d \in \{5, \ldots, 100\}$) reveals that the body's
implicit inference capacity scales with its physical richness (variance reduction:
$23\times$ at $d = 5$ to $1600\times$ at $d = 75$), with the majority attributable to
reservoir dynamics rather than regularization artifacts.
A phase-diagram analysis in the joint $(d, \tau_{\mathrm{env}})$ space reveals
that the sentinel's payoff advantage over TfT grows with reservoir dimension and
environmental stability, with a soft saturation around $d \approx 20$ suggesting
a crossover between qualitatively different governance regimes.
The framework reinterprets the three layers: the body is the true decision-maker whose
high-dimensional dynamics perform implicit inference, expressing cooperation as a
self-consistent fixed point without deliberation; cognition is an available but secondary
toolkit; and metacognition is a lightweight governor that sets policy without engaging
in detection or strategy.
\end{abstract}

\tableofcontents
\newpage

\subsection*{Notation}
\label{sec:notation}
\begin{table}[h]
\centering
\small
\begin{tabular}{ll}
\toprule
\textbf{Symbol} & \textbf{Meaning} \\
\midrule
$a_i(t) \in [0,1]$ & Action of agent $i$ at round $t$ \\
$\xstate(t) \in \R^d$ & Reservoir state vector \\
$d$ & Reservoir dimension (number of neurons) \\
$\Wres \in \R^{d \times d}$ & Recurrent weight matrix \\
$\Win \in \R^{d \times 2}$ & Input weight matrix \\
$\Wout \in \R^{1 \times d}$ & Readout weight vector \\
$\bout \in \R$ & Readout bias \\
$\rho(\cdot)$ & Spectral radius \\
$\rho_{\mathrm{eff}}$ & Spectral radius of closed-loop Jacobian $J_{\Phi}(\xstate^{*})$ \\
$\sigma(\cdot)$ & Logistic sigmoid function \\
$\abody(t)$ & Body reservoir output action \\
$\acog(t)$ & Cognitive filter output action \\
$\alpha \in [0,1]$ & Metacognitive receptivity parameter \\
$\alpha_0$ & Baseline trust (default $0.85$) \\
$\alpha_{\min}$ & Receptivity floor (default $0.05$) \\
$D(t)$ & Composite discomfort signal \\
$D_{\mathrm{state}},\, D_{\mathrm{output}},\, D_{\mathrm{disagree}}$ & Discomfort components \\
$w_x, w_a, w_e$ & Discomfort component weights \\
$\theta$ & Intervention threshold (default $0.1$) \\
$\eta_{\uparrow},\, \eta_{\downarrow}$ & Recovery / intervention rates \\
$\bar{\xstate},\, \bar{a}$ & EMA-tracked baselines \\
$\gamma_{\mathrm{ema}}$ & EMA smoothing rate (default $0.02$) \\
$\varepsilon$ & Opponent noise rate \\
$\beta$ & Oja learning rate (default $0.01$) \\
$H$ & Habituation depth (number of epochs) \\
$\lambda$ & Metabolic cost parameter \\
$\FE(\alpha)$ & Variational free energy \\
$\KL(\cdot \| \cdot)$ & Kullback--Leibler divergence \\
$\tau_{\mathrm{env}}$ & Environment timescale \\
$d_c$ & Critical reservoir dimension \\
\bottomrule
\end{tabular}
\caption{Summary of principal notation.}
\label{tab:notation}
\end{table}
\newpage

\section{Introduction}
\label{sec:introduction}

\subsection{The Problem: Why Does Traditional Game Theory Miss Embodiment?}

The theory of repeated games establishes that cooperation can be sustained among
self-interested agents through conditional strategies.  The folk theorem
\citep{friedman1971, fudenberg1986} demonstrates that virtually any feasible and
individually rational payoff---including mutual cooperation in the prisoner's
dilemma---can be supported as a subgame-perfect equilibrium of the infinitely
repeated game, provided players are sufficiently patient.  Canonical strategies
such as Tit-for-Tat \citep{axelrod1984} achieve this by conditioning current
actions on the opponent's prior behavior: cooperate if the opponent cooperated,
defect if the opponent defected.

Yet this theoretical elegance conceals a fundamental assumption: that conditional
computation is \emph{free}.  Every round of TfT requires the agent to (i)~observe
and classify the opponent's action, (ii)~retrieve the conditional rule from memory,
and (iii)~select the appropriate output.  For a biological organism or a physical
computing device, each of these operations dissipates energy
\citep{landauer1961, attwell2001, wolpert2019}.  The pioneering work of
\citet{rubinstein1986} and \citet{abreurubinstein1988} recognized that bounded
automata incur complexity costs, but treated these costs as abstract state counts
rather than as physical dissipation in a dynamical system.

\subsection{Embodied Cooperation and the Dual-Process Perspective}

Experimental evidence from behavioral economics and psychology consistently shows
that human cooperation is often fast, intuitive, and automatic rather than slow
and deliberative \citep{rand2012, rubinstein2007, bear2016}.
\citet{kahneman2011} popularized the distinction between System~1 (fast, automatic,
habitual) and System~2 (slow, deliberative, costly) processing.
\citet{damasio1994} argued that somatic markers---bodily states---guide decision-making
before conscious deliberation occurs.  \citet{frank1988} proposed that emotions serve
as commitment devices: an agent who is \emph{constitutionally} cooperative (rather than
strategically cooperative) can credibly signal trustworthiness.

These observations suggest that the body itself---its adapted dynamical
state---plays a central role in sustaining cooperation.  But formal game theory
lacks the vocabulary to express this insight.  The present paper provides that
vocabulary by embedding repeated-game strategies within a \emph{reservoir computing}
framework \citep{jaeger2001, lukovsevivcius2009} that models the body as a
high-dimensional dynamical system with intrinsic temporal memory, noise, and
adaptation.

\subsection{Our Contribution}

We introduce the \emph{Body-Reservoir Governance} (BRG) framework, a three-layer
architecture that formalizes the interplay between embodied dynamics and cognitive
control in repeated games.  Our main contributions are:

\begin{enumerate}[label=(\roman*)]
  \item \textbf{Three-layer architecture with redefined roles}
    (Section~\ref{sec:architecture}):
    We define the body reservoir as the \emph{primary decision-maker and intrinsic
    anomaly detector} (Layer~1), the cognitive filter as an \emph{available toolkit}
    activated on demand (Layer~2), and metacognitive governance as a
    \emph{lightweight policy maintainer} (Layer~3).  The mixture parameter
    $\alpha$ interpolates between pure cognitive control ($\alpha = 0$, TfT)
    and pure body governance ($\alpha = 1$).

  \item \textbf{Self-consistency at $\alpha = 1$}
    (Section~\ref{sec:coupled_dynamics}):
    When the agent relies entirely on its body, the reservoir output feeds back as
    its own input, closing the loop.  We derive the self-consistency equation
    governing the resulting fixed points and establish existence via Brouwer's
    theorem.

  \item \textbf{Implicit inference and reservoir smoothing}
    (Section~\ref{sec:smoothing}):
    We define the complexity cost of a strategy as the KL divergence between the
    reservoir's $d$-dimensional \emph{state distribution} under current conditions
    and its habituated baseline.  We show that body governance reduces
    this cost, with the externally observable consequence that action variance
    is attenuated by a factor determined by the reservoir's spectral radius and
    readout Lipschitz constant.

  \item \textbf{Free energy framework}
    (Section~\ref{sec:smoothing}):
    We define a variational free energy $\FE(\alpha)$ that trades off payoff against
    the state-space complexity cost.  Overriding the body (reducing $\alpha$ against
    the reservoir's self-consistent tendency) distorts the internal state, incurring
    a thermodynamic cost that formalizes the expense of acting against one's adapted
    nature.  The optimal $\alpha^{*}$ is interior (approximately $0.6$--$0.7$)
    for moderate metabolic cost.

  \item \textbf{Dynamic sentinel model}
    (Section~\ref{sec:dynamic_sentinel}):
    We introduce a body-driven adaptive $\alpha(t)$ mechanism in which the
    reservoir's own dynamical state generates a composite discomfort signal that
    modulates metacognitive receptivity.  The sentinel requires no external
    detector; anomaly detection emerges from the body's deviation from its
    habituated baseline.

  \item \textbf{Reservoir dimension and implicit inference capacity}
    (Section~\ref{sec:dimension_smoothing}):
    We show that the body's inference capacity scales with reservoir
    richness: variance reduction ranges from $23\times$ at $d = 5$ to $1600\times$
    at $d = 75$, establishing reservoir dimension as a key determinant of
    governance quality.  The reservoir's high-dimensional state space provides an
    abstract representational manifold in which surface-level novelty is absorbed
    without explicit retraining.

  \item \textbf{Governance regimes and phase diagram}
    (Section~\ref{sec:phase_transition}):
    A $(d, \tau_{\mathrm{env}})$ phase diagram reveals how the sentinel's
    advantage over TfT depends on reservoir richness and environmental
    stability, with a soft crossover near $d \approx 20$ separating
    qualitatively different governance regimes.

  \item \textbf{Numerical validation} (Section~\ref{sec:experiments}):
    Ten sets of simulations confirm the theoretical predictions: self-consistent
    convergence, KL landscape, perturbation response, habituation dynamics,
    free energy optimization, dynamic sentinel adaptation, parameter sensitivity,
    reservoir dimension sweep, phase transition analysis, and EMA baseline
    comparison.
\end{enumerate}

\subsection{Related Work}

Our framework connects and extends several lines of research.

\paragraph{Bounded rationality and cooperation in games.}
\citet{rubinstein1986} and \citet{neyman1985} showed that bounded complexity
can sustain cooperation in finitely repeated games.  \citet{mckelvey1995}
introduced quantal response equilibria, allowing for noisy best-responses, but
without an explicit dynamical substrate.  In evolutionary game theory,
\citet{nowak2006} catalogued five mechanisms for the evolution of cooperation,
and \citet{kandori1993} and \citet{young1993} studied stochastic evolution
under bounded rationality.  Our approach differs from these traditions by
grounding complexity costs in the physical dynamics of the agent's body
(reservoir), rather than in abstract automaton states or probabilistic
response functions.  A detailed game-theoretic comparison with alternative
strategy classes (Generous TfT, Win-Stay-Lose-Shift, quantal response
equilibria) is deferred to future work (Section~\ref{sec:limitations}).

\paragraph{Thermodynamics of decision-making.}
\citet{ortega2013} axiomatically derived a free energy functional that
trades off expected utility against information-processing costs measured
by KL divergence, establishing a direct bridge between bounded rationality
and statistical physics.  \citet{todorov2009} introduced KL-regularized
optimal control (linearly-solvable MDPs), and \citet{haarnoja2018} brought
entropy regularization to deep reinforcement learning (Soft Actor-Critic).
Our free energy $\FE(\alpha)$ in \eqref{eq:free_energy} shares the
mathematical structure of these approaches (utility minus KL penalty), but
differs in an important respect: the baseline distribution $p_{\mathrm{hab}}$
is not an abstract prior but the \emph{physically realized} stationary
distribution of an adapted reservoir's $d$-dimensional state space,
grounding the complexity cost in the thermodynamics of a specific
dynamical system.
Recent advances in stochastic thermodynamics
\citep{wolpert2024, manzano2024} have extended Landauer's framework
\citep{landauer1961} to non-equilibrium computations with stochastic
completion times and irreversible transitions.
\citet{wolpert2024} argues that all real computers operate far from
equilibrium and face constraints beyond Landauer's idealized bound;
\citet{manzano2024} introduces ``mismatch cost'' as a measure of
excess dissipation.  These developments provide a finer-grained
thermodynamic foundation for the complexity cost that our framework
measures via KL divergence.  \citet{still2012} derived complementary
thermodynamic bounds on prediction, and \citet{kolchinsky2018} connected
semantic information to non-equilibrium thermodynamics.

\paragraph{Active inference and the free energy principle.}
The free energy principle \citep{friston2006, friston2010, friston2015}
provides a variational framework for brain function;
\citet{parr2022} and \citet{smith2022} extended it to active inference,
and \citet{tschantz2020} connected active inference to action-oriented models.
Recently, active inference has been applied to multi-agent strategic
interactions: \citet{hyland2024} proposed free-energy equilibria as a
solution concept for boundedly-rational multi-agent games;
\citet{ruizserra2025} extended this with factorised active inference for
strategic interactions, maintaining explicit beliefs about other agents'
internal states; \citet{demekas2024} provided an analytical model of
active inference in the iterated prisoner's dilemma, deriving conditions
for phase transitions between game-theoretic steady states;
and \citet{friston2024federated} developed federated inference for
multi-agent belief sharing.
Our BRG framework complements active inference in that both
minimize a free energy functional, but our approach uses a
\emph{reservoir dynamical system} rather than a \emph{generative model}
as the computational substrate.  The body reservoir does not maintain
explicit beliefs about the opponent; instead, its $d$-dimensional state
implicitly encodes interaction history through nonlinear recurrent dynamics.
This distinction is elaborated in Section~\ref{sec:discussion_fep}.

\paragraph{Dual-process theory and interoception.}
In computational neuroscience, \citet{daw2005} modeled the competition between
habitual and goal-directed systems, and \citet{dolan2013} reviewed the neural
substrates of habits and goals.  \citet{damasio1994} argued that somatic
markers guide decision-making before conscious deliberation.
\citet{seth2016} formalized active interoceptive inference, treating
bodily states as regulated by descending predictions from deep generative
models.  \citet{pezzulo2025} modeled embodied decisions as active inference,
emphasizing that living organisms face decision problems requiring timely
action in dynamic environments.
Our dynamic sentinel model (Section~\ref{sec:dynamic_sentinel})
connects to this literature by formalizing the ``somatic marker'' as a
composite discomfort signal $D(t)$ generated from the reservoir's own
dynamical state, providing a mechanistic account of how bodily signals
drive governance transitions without requiring a separate interoceptive
module.

\paragraph{Reservoir computing.}
Reservoir computing \citep{jaeger2001, lukovsevivcius2009} provides the
dynamical substrate for our body model.  \citet{ganguli2008} analyzed
memory capacity in dynamical systems, and \citet{grigoryeva2018} proved
universality of echo state networks.
The field has seen rapid recent progress: \citet{tanaka2019} reviewed
physical reservoir computing using diverse substrates (photonic, spintronic,
mechanical), and \citet{yan2024rc} identified emerging opportunities
and challenges for industrial-scale reservoir computing.
\citet{lee2024rc} introduced task-adaptive physical reservoirs that
exploit thermodynamic phase transitions (skyrmion, conical, and helical
magnetic phases) to reconfigure computational properties on demand---a
parallel to our finding that reservoir governance quality depends
on the reservoir's dynamical regime.
These developments in physical reservoir computing reinforce our thesis that
the ``body'' (physical computational substrate) is not a metaphor but a
literal description: biological and artificial agents can exploit the
intrinsic dynamics of their physical substrate for temporal processing,
noise filtering, and anomaly detection.
\citet{oja1982} introduced the Hebbian learning rule we use for reservoir
habituation, and \citet{sussillo2009} developed FORCE learning as an
alternative reservoir training approach.

Our work is distinguished from these prior contributions by the explicit
integration of reservoir dynamics, game-theoretic strategy, and
thermodynamic cost within a single architecture that admits both analytical
results and efficient simulation.

\section{Three-Layer BRG Architecture}
\label{sec:architecture}

This section defines the three layers of the Body-Reservoir Governance framework
and their formal interconnection.  We consider two agents $i, j$ engaged in a
repeated game with continuous action space $[0,1]$.

The three layers play qualitatively
different roles than suggested by a na\"ive dual-process reading.
The body reservoir serves as the \emph{primary decision-maker} whose rich
dynamical state integrates the full history of interactions, and simultaneously
as an \emph{intrinsic anomaly detector} whose deviations from the habituated
baseline signal environmental change.
The cognitive filter serves as an \emph{available toolkit}---sharp, precise, but costly and
brittle---that the body can invoke when its own signals indicate the need.
Metacognition is not a sophisticated arbitration mechanism that detects
anomalies and selects strategies; it is a \emph{lightweight governance layer}
that maintains a single policy: ``trust the body, unless the body itself
signals discomfort.''  This reinterpretation is formalized in the dynamic
sentinel model of Section~\ref{sec:dynamic_sentinel}.

\subsection{Game Environment}

\begin{definition}[Continuous Prisoner's Dilemma]
\label{def:cpd}
Each agent $i$ selects an action $a_i(t) \in [0,1]$ at each round $t$.
The stage payoff is the bilinear function
\begin{equation}
\label{eq:payoff}
u(a_i, a_j) = R \cdot a_i a_j + S \cdot a_i (1 - a_j)
  + T \cdot (1 - a_i) a_j + P \cdot (1 - a_i)(1 - a_j),
\end{equation}
where $(R, S, T, P) = (3, 0, 5, 1)$ satisfy $T > R > P > S$ and $2R > T + S$,
ensuring a prisoner's dilemma structure.  Full cooperation corresponds to
$a_i = a_j = 1$ (payoff $R = 3$), full defection to $a_i = a_j = 0$
(payoff $P = 1$).
\end{definition}

\begin{remark}[Continuous Extension]
\label{rem:continuous_pd}
The bilinear payoff \eqref{eq:payoff} is the unique extension of the
discrete $2 \times 2$ prisoner's dilemma to $[0,1]$ that is linear in
each player's action separately and recovers the four canonical payoffs at
the corners $(0,0)$, $(0,1)$, $(1,0)$, $(1,1)$.  This is equivalent to
treating $a_i$ as the probability of cooperation in a mixed-strategy
interpretation, but here $a_i$ represents a continuously graded
\emph{cooperative intensity} (e.g., the fraction of resources shared).
The continuous action space is essential for the reservoir readout
$\sigma(\Wout \cdot \xstate + \bout) \in (0,1)$, which produces
inherently continuous outputs.  Alternative continuous extensions (e.g.,
nonlinear payoff functions) would change the quantitative results but not
the qualitative framework; the bilinear form is the simplest choice that
preserves the PD incentive structure at every interior action profile.
\end{remark}

\subsection{Layer 1: Body Reservoir --- Decision-Maker and Anomaly Detector}

\begin{definition}[Body Reservoir]
\label{def:body_reservoir}
The body reservoir of agent $i$ is an echo state network (ESN)
\citep{jaeger2001} with state $\xstate(t) \in \R^d$ evolving according to
\begin{equation}
\label{eq:reservoir}
\xstate(t+1) = \tanh\bigl(\Wres \cdot \xstate(t) + \Win \cdot
  [a(t),\, a_{\mathrm{opp}}(t)]^{\top} + \mathbf{b}\bigr) + \boldsymbol{\xi}(t),
\end{equation}
where:
\begin{itemize}[nosep]
  \item $\Wres \in \R^{d \times d}$ is the recurrent weight matrix with
    spectral radius $\rho(\Wres) < 1$,
  \item $\Win \in \R^{d \times 2}$ maps the joint action
    $(a(t), a_{\mathrm{opp}}(t))$ into the reservoir,
  \item $\mathbf{b} \in \R^d$ is a bias vector,
  \item $\boldsymbol{\xi}(t) \sim \mathcal{N}(\mathbf{0}, \sigma_{\xi}^2 \mathbf{I})$
    is intrinsic noise modeling thermal fluctuations.
\end{itemize}
The body produces a readout action
\begin{equation}
\label{eq:readout}
\abody(t) = \sigma(\Wout \cdot \xstate(t) + \bout),
\end{equation}
where $\sigma(\cdot)$ is the logistic sigmoid, $\Wout \in \R^{1 \times d}$
is the readout weight vector, and $\bout \in \R$ is the readout bias.
Both $\Wout$ and $\bout$ are trained via ridge regression during a
\emph{developmental phase} (see Section~\ref{sec:model_details}) and held
fixed thereafter.
\end{definition}

\begin{remark}[The Body as Primary Decision-Maker]
In the BRG framework, the body reservoir is the \emph{primary decision-maker}.
The $d$-dimensional reservoir state $\xstate(t)$ integrates the full history
of the agent's own actions and the opponent's actions through nonlinear
recurrent dynamics.  When $d = 30$ (our baseline), this constitutes a
$30$-dimensional nonlinear computation at each time step, far richer than
any conditional rule.  After habituation, the reservoir's adapted
dynamics \emph{express} cooperation as a self-consistent fixed point
(Section~\ref{sec:coupled_dynamics}), rather than \emph{computing} it
from a stored rule.  The agent cooperates because cooperation is what its
body produces, not because it has decided to cooperate.
\end{remark}

\begin{remark}[The Body as Intrinsic Anomaly Detector]
\label{rem:body_anomaly}
The reservoir state also serves as an \emph{intrinsic anomaly detector}.
When the environment changes (e.g., the opponent begins defecting), the
reservoir state $\xstate(t)$ deviates from its habituated baseline
$\bar{\xstate}$.  This deviation---a form of ``bodily discomfort''---is
a natural byproduct of the reservoir dynamics and requires no additional
detection mechanism.  The magnitude of the deviation depends on the
reservoir's richness (dimension $d$): a higher-dimensional reservoir
encodes more information about interaction history, producing a more
sensitive and informative discomfort signal.  This observation is formalized
in the dynamic sentinel model (Section~\ref{sec:dynamic_sentinel}) and
validated in the dimension sweep experiment (Section~\ref{sec:exp_dimension}).
\end{remark}

\begin{remark}[Echo State Property]
The condition $\rho(\Wres) < 1$ ensures the echo state property
\citep{jaeger2001}: the reservoir state $\xstate(t)$ asymptotically forgets
initial conditions and depends only on the driving input sequence.
This is a sufficient (though not necessary) condition; in practice,
$\rho = 0.9$ provides rich dynamics near the edge of stability while
maintaining contractivity.
\end{remark}

\subsection{Layer 2: Cognitive Filter --- Available Toolkit}

\begin{definition}[Cognitive Filter]
\label{def:cognitive_filter}
The cognitive filter produces a strategic override action
$\acog(t) \in [0,1]$ based on explicit computation.  For concreteness,
we instantiate the cognitive filter as Tit-for-Tat in continuous action space:
\begin{equation}
\label{eq:tft}
\acog(t) = a_{\mathrm{opp}}(t-1),
\end{equation}
i.e., the agent copies the opponent's previous action.  Other cognitive
strategies (e.g., Win-Stay-Lose-Shift, Grim Trigger) can be substituted
without altering the framework.
\end{definition}

\begin{remark}[Cognition as a Sharp but Brittle Tool]
The cognitive filter is better understood as a \emph{toolkit} that is always
available but typically dormant.  TfT provides a precise conditional
response---exact reciprocity---but this precision comes with a cost:
it amplifies opponent noise (copying every defection) and requires
continuous computation.  The cognitive filter is sharp where the body is
smooth, and brittle where the body is robust.
In the BRG framework, the cognitive filter is \emph{activated by the body}
(through the dynamic sentinel mechanism of Section~\ref{sec:dynamic_sentinel}),
not invoked by a metacognitive ``supervisor.''
The body senses discomfort and calls for cognitive tools, much as a
craftsman reaches for a specific instrument when the task demands it.
\end{remark}

\subsection{Layer 3: Metacognitive Governance}

\begin{definition}[Metacognitive Governance]
\label{def:metacognition}
The metacognitive governance layer controls the receptivity parameter
$\alpha \in [0,1]$, which governs the mixture of body and cognitive outputs.
The agent's actual action is
\begin{equation}
\label{eq:mixture}
a(t) = \alpha \cdot \abody(t) + (1 - \alpha) \cdot \acog(t).
\end{equation}
The limiting cases are:
\begin{itemize}[nosep]
  \item $\alpha = 0$: pure cognitive control (standard game-theoretic agent),
  \item $\alpha = 1$: pure body governance (action determined entirely by
    reservoir dynamics),
  \item $\alpha \in (0,1)$: hybrid governance.
\end{itemize}
\end{definition}

\begin{remark}[Metacognition as Lightweight Governor]
In the BRG framework, metacognition does \emph{not} detect anomalies,
select strategies, or engage in substantive computation.  Its role is
confined to maintaining a governance \emph{policy}: a small set of
parameters ($\alpha_0$, $\theta$; see Section~\ref{sec:dynamic_sentinel})
that define the agent's dispositional relationship to its own body signals.
The policy is: ``trust the body by default ($\alpha$ near $\alpha_0$);
if the body signals discomfort exceeding threshold $\theta$, reduce
$\alpha$ to activate the cognitive toolkit.''  This is a resident program,
not an active decision-maker---the computational cost is negligible.

The crucial distinction is between \emph{detection} (performed by the body)
and \emph{governance} (performed by metacognition).  The body detects
environmental change through its own dynamical deviations; metacognition
merely sets the policy for how those detections translate into $\alpha$
adjustments.  This separation avoids the homunculus problem: metacognition
need not be ``smarter'' than the body it governs.
\end{remark}

\begin{remark}[Relationship to Dual-Process Theory]
The BRG architecture refines dual-process theories of cognition
\citep{kahneman2011}.  Layer~1 encompasses both the ``fast, automatic''
processing of System~1 \emph{and} the interoceptive awareness that
triggers System~2 engagement.  Layer~2 corresponds to System~2 (slow,
deliberative, costly) as an available toolkit.  Layer~3 is the
governance policy---analogous to the model-based/model-free arbitration
studied by \citet{daw2005} and \citet{dolan2013}, but lighter: it sets
parameters rather than computing arbitration decisions.
\end{remark}

\subsection{Dynamic Sentinel: Body-Driven Adaptive Receptivity}
\label{sec:dynamic_sentinel}

In the basic BRG framework, the receptivity $\alpha$ is a fixed structural
parameter.  We now introduce a \emph{dynamic sentinel} mechanism in which
$\alpha(t)$ adapts in real time, driven by signals generated by the body
reservoir itself.

\begin{definition}[Composite Discomfort Signal]
\label{def:discomfort}
The body's \emph{discomfort signal} is a weighted composite of three
components:
\begin{equation}
\label{eq:discomfort}
D(t) = w_x \cdot D_{\mathrm{state}}(t) + w_a \cdot D_{\mathrm{output}}(t)
  + w_e \cdot D_{\mathrm{disagree}}(t),
\end{equation}
where:
\begin{itemize}[nosep]
  \item $D_{\mathrm{state}}(t) = \|\xstate(t) - \bar{\xstate}\| / \sqrt{d}$
    is the normalized deviation of the reservoir state from its habituated
    baseline $\bar{\xstate}$,
  \item $D_{\mathrm{output}}(t) = |\abody(t) - \bar{a}|$
    is the deviation of the body output from its habituated baseline
    $\bar{a}$,
  \item $D_{\mathrm{disagree}}(t) = |\abody(t) - \acog(t)|$
    is the disagreement between body and cognitive outputs,
  \item $w_x, w_a, w_e \geq 0$ with $w_x + w_a + w_e = 1$ are component
    weights (default: $w_x = 0.3$, $w_a = 0.3$, $w_e = 0.4$).
\end{itemize}
The baselines $\bar{\xstate}$ and $\bar{a}$ are tracked via exponential
moving averages with rate $\gamma_{\mathrm{ema}} = 0.02$:
\begin{align}
\bar{\xstate}(t+1) &= (1 - \gamma_{\mathrm{ema}}) \bar{\xstate}(t)
  + \gamma_{\mathrm{ema}} \xstate(t), \\
\bar{a}(t+1) &= (1 - \gamma_{\mathrm{ema}}) \bar{a}(t)
  + \gamma_{\mathrm{ema}} \abody(t).
\end{align}
\end{definition}

\begin{remark}[No External Detector Required]
The discomfort signal $D(t)$ is generated entirely by the body reservoir's
own dynamics.  No additional ``anomaly detection'' module is needed.
The state deviation $D_{\mathrm{state}}$ captures environmental change
at the level of the full $d$-dimensional reservoir state; the output
deviation $D_{\mathrm{output}}$ captures change at the behavioral level;
and the disagreement $D_{\mathrm{disagree}}$ captures divergence between
the body's habitual response and the cognitive strategy's prescribed response.
All three arise naturally from quantities already computed in the BRG
architecture.  The body is its own sentinel.
\end{remark}

\begin{definition}[Dynamic Receptivity Update]
\label{def:alpha_update}
The receptivity $\alpha(t)$ evolves according to a leaky integrator with
asymmetric rates:
\begin{equation}
\label{eq:alpha_update}
\alpha(t+1) = \mathrm{clip}\Bigl[
  \alpha(t) + \eta_{\uparrow}(\alpha_0 - \alpha(t))
  - \eta_{\downarrow}[D(t) - \theta]_{+},
  \;\alpha_{\min},\; 1
\Bigr],
\end{equation}
where:
\begin{itemize}[nosep]
  \item $\alpha_0 \in (0,1]$ is the \emph{baseline trust} (governance
    policy set by metacognition; default $0.85$),
  \item $\eta_{\uparrow} > 0$ is the \emph{recovery rate} (slow; default $0.05$),
  \item $\eta_{\downarrow} > 0$ is the \emph{intervention sharpness}
    (fast; default $0.5$),
  \item $\theta \geq 0$ is the \emph{intervention threshold}
    (noise tolerance; default $0.1$),
  \item $\alpha_{\min} > 0$ is the \emph{receptivity floor} (default $0.05$),
  \item $[z]_{+} = \max(0, z)$ is the positive part.
\end{itemize}
\end{definition}

The dynamic sentinel has clear physical interpretation.
When the body is comfortable ($D < \theta$), the first
term $\eta_{\uparrow}(\alpha_0 - \alpha)$ slowly pulls $\alpha$ toward
the baseline $\alpha_0$: the agent relaxes into body governance.
When the body is uncomfortable ($D > \theta$), the second
term delivers a sharp downward kick proportional to the excess discomfort,
rapidly reducing $\alpha$ and activating the cognitive toolkit.
The asymmetry $\eta_{\downarrow} \gg \eta_{\uparrow}$ (default ratio $10:1$)
ensures fast engagement but cautious disengagement: the agent reacts quickly
to threats but returns to body governance slowly, building trust over time.

\begin{proposition}[Sentinel Equilibrium]
\label{prop:sentinel_eq}
In a stationary environment where the composite discomfort $D$ is constant,
the dynamic sentinel converges to a unique fixed point:
\begin{equation}
\alpha^{*} =
\begin{cases}
\alpha_0 & \text{if } D \leq \theta, \\
\alpha_0 - \frac{\eta_{\downarrow}}{\eta_{\uparrow}}
  (D - \theta) & \text{if } \theta < D <
  \theta + \frac{\eta_{\uparrow}}{\eta_{\downarrow}}(\alpha_0 - \alpha_{\min}), \\
\alpha_{\min} & \text{otherwise.}
\end{cases}
\end{equation}
\end{proposition}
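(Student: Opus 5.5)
The plan is to treat the sentinel update \eqref{eq:alpha_update} with constant $D$ as an autonomous one-dimensional map, show that it is a contraction on $[\alpha_{\min},1]$, and then read off the fixed point case by case. Write $c = [D-\theta]_{+} \ge 0$, which is constant by hypothesis. Define the unclipped affine map
\[
g(\alpha) = (1-\eta_{\uparrow})\,\alpha + \eta_{\uparrow}\alpha_0 - \eta_{\downarrow} c ,
\]
so that the update is $\alpha(t+1) = f(\alpha(t))$ with $f = \Pi \circ g$. Here $\Pi(z) = \mathrm{clip}[z,\alpha_{\min},1]$ is the metric projection onto the interval $[\alpha_{\min},1]$.

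\emph{Step 1 (contraction).} The projection $\Pi$ is $1$-Lipschitz. The map $g$ is Lipschitz with constant $|1-\eta_{\uparrow}|$, which is $<1$ whenever $0<\eta_{\uparrow}<2$. This holds for the default $\eta_{\uparrow}=0.05$, and I will state it as an implicit standing assumption, noting that the proposition needs it. Hence $f$ maps $[\alpha_{\min},1]$ into itself and is a contraction there. After the first step, $\alpha(t)$ lies in $[\alpha_{\min},1]$ for any initial value, so Banach's fixed-point theorem gives a unique fixed point $\alpha^{*}$. It also gives geometric convergence, $|\alpha(t)-\alpha^{*}| \le |1-\eta_{\uparrow}|^{t-1}|\alpha(1)-\alpha^{*}|$.

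\emph{Step 2 (identifying $\alpha^{*}$).} Since $g(\alpha)-\alpha = \eta_{\uparrow}(\hat\alpha - \alpha)$ with
\[
\hat\alpha = \alpha_0 - \frac{\eta_{\downarrow}}{\eta_{\uparrow}}\,c ,
\]
$\hat\alpha$ is the unique fixed point of $g$. Note that $\hat\alpha \le \alpha_0 \le 1$, so the upper clip is never active at equilibrium.

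\begin{itemize}[nosep]
\item If $D\le\theta$, then $c=0$ and $\hat\alpha=\alpha_0 \in [\alpha_{\min},1]$. Here we use $\alpha_0\ge\alpha_{\min}$, which is implicit in the setup. Then $f(\hat\alpha)=\hat\alpha$.
\item If $\theta<D<\theta+\frac{\eta_{\uparrow}}{\eta_{\downarrow}}(\alpha_0-\alpha_{\min})$, then rearranging gives $\hat\alpha\in(\alpha_{\min},\alpha_0)$. Clipping therefore does nothing, and $\alpha^{*}=\hat\alpha=\alpha_0-\frac{\eta_{\downarrow}}{\eta_{\uparrow}}(D-\theta)$.
\item Otherwise $\hat\alpha\le\alpha_{\min}$. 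Then $g(\alpha_{\min})-\alpha_{\min} = \eta_{\uparrow}(\hat\alpha-\alpha_{\min})\le 0$, so $\Pi(g(\alpha_{\min}))=\alpha_{\min}$. Thus $\alpha_{\min}$ is a fixed point of $f$, and by the uniqueness from Step 1 it is $\alpha^{*}$.
\end{itemize}

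At the boundary $D=\theta+\frac{\eta_{\uparrow}}{\eta_{\downarrow}}(\alpha_0-\alpha_{\min})$, the second and third formulas coincide, so the assignment of that point to the ``otherwise'' branch is harmless.

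The only delicate point is the interaction of the clip with the fixed-point computation. It could a priori create spurious boundary fixed points or break uniqueness. I handle this by working with $f=\Pi\circ g$ as a single contraction, rather than analysing the clipped and unclipped regimes separately. Uniqueness then comes for free, and each case only requires exhibiting one fixed point. I will also flag explicitly the parameter conditions $\eta_{\uparrow}\in(0,2)$ and $\alpha_{\min}\le\alpha_0$, which the statement uses tacitly.
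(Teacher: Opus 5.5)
Your proof is correct, and it is more complete than the paper's. The paper imposes stationarity only on the \emph{unclipped} update, $\eta_{\uparrow}(\alpha_0-\alpha^{*})=\eta_{\downarrow}[D-\theta]_{+}$, and solves for $\alpha^{*}$. It then says that ``applying the clip bounds'' yields the three cases. It never proves convergence or uniqueness, although the statement asserts both. In the floor case $\alpha^{*}=\alpha_{\min}$ that stationarity equation is not even satisfied: $\alpha_{\min}$ is fixed only because the lower clip is active.

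Your Step~2 matches the paper's computation of the candidate. Your Step~1 supplies what the paper omits. Writing the update as $f=\Pi\circ g$, a contraction with constant $|1-\eta_{\uparrow}|$, gives existence, uniqueness and geometric convergence at once, so each case reduces to exhibiting a single fixed point. Your check $g(\alpha_{\min})\le\alpha_{\min}$ handles the clipped regime directly rather than by appeal to the clip.

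The price is the extra hypothesis $0<\eta_{\uparrow}<2$, and it is genuinely needed. For $\eta_{\uparrow}=2$, $D\le\theta$ and $\alpha_0<1$ (e.g.\ the default $0.85$), $g(\alpha)=2\alpha_0-\alpha$ is a reflection about $\alpha_0$. Every $\alpha\neq\alpha_0$ close enough to $\alpha_0$ then lies on an unclipped 2-cycle, so convergence fails. You also need $\alpha_{\min}\le\alpha_0$; otherwise, for $D\le\theta$, the unique fixed point is $\alpha_{\min}$ rather than $\alpha_0$, and the first branch of the formula is wrong. Both conditions hold for the paper's defaults, and you are right to flag them as tacit assumptions of the proposition.
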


\begin{proof}
At equilibrium, $\alpha(t+1) = \alpha(t) = \alpha^{*}$, so the update
\eqref{eq:alpha_update} gives
$\eta_{\uparrow}(\alpha_0 - \alpha^{*}) = \eta_{\downarrow}[D - \theta]_{+}$.
When $D \leq \theta$, the right-hand side vanishes and
$\alpha^{*} = \alpha_0$.  When $D > \theta$, solving for
$\alpha^{*}$ and applying the clip bounds yields the stated expression.
\end{proof}

\begin{remark}[Circularity in Sentinel Equilibrium]
\label{rem:sentinel_circularity}
The equilibrium in Proposition~\ref{prop:sentinel_eq} is stated for fixed $D$.
In practice, $D$ itself depends on $\alpha$ through the mixed action
$a = \alpha \cdot a_{\mathrm{body}} + (1-\alpha) \cdot a_{\mathrm{cog}}$:
changes in $\alpha$ alter the action, which alters the opponent's response
(in strategic settings) and the reservoir state, which in turn alters $D$.
This circularity is resolved by the same contraction argument as
Remark~\ref{rem:circularity}: the sentinel dynamics define a map
$\alpha \mapsto D(\alpha) \mapsto \alpha'$, and convergence is guaranteed
when $\eta_{\downarrow} \cdot \partial D / \partial \alpha < 1$,
which holds for our parameter choices (numerically verified in Experiment~6).
\end{remark}

\section{Coupled Dynamics and Self-Consistency}
\label{sec:coupled_dynamics}

When $\alpha = 1$, the BRG architecture becomes a closed-loop dynamical
system: the reservoir's output $\abody(t)$ is fed back as its own input
$a(t) = \abody(t)$.  This self-referential structure gives rise to a
self-consistency condition that governs the system's fixed points.

\subsection{The Closed-Loop System}

At $\alpha = 1$, substituting \eqref{eq:readout} into \eqref{eq:reservoir}
via \eqref{eq:mixture} yields the autonomous dynamics (suppressing noise
temporarily):
\begin{equation}
\label{eq:closed_loop}
\xstate(t+1) = \tanh\bigl(\Wres \cdot \xstate(t) + \Win \cdot
  [\sigma(\Wout \cdot \xstate(t) + \bout),\, a_{\mathrm{opp}}(t)]^{\top}
  + \mathbf{b}\bigr).
\end{equation}
The agent's action at time $t$ depends on its reservoir state, which itself
depends on its action at time $t-1$.  This feedback loop is absent when
$\alpha = 0$ (the reservoir receives the cognitive output, not its own output)
and partially present for intermediate $\alpha$.

\subsection{Self-Consistency Equation}

\begin{definition}[Self-Consistency]
\label{def:self_consistency}
Suppose the opponent plays a stationary action $a_{\mathrm{opp}}(t) \equiv
a_{\mathrm{opp}}^{*}$.  A \emph{self-consistent fixed point} of the
body-governed system is a reservoir state $\xstate^{*} \in \R^d$ satisfying
\begin{equation}
\label{eq:self_consistency}
\xstate^{*} = \tanh\bigl(\Wres \cdot \xstate^{*} + \Win \cdot
  [\sigma(\Wout \cdot \xstate^{*} + \bout),\, a_{\mathrm{opp}}^{*}]^{\top}
  + \mathbf{b}\bigr).
\end{equation}
The corresponding self-consistent action is
$a^{*} = \sigma(\Wout \cdot \xstate^{*} + \bout)$.
\end{definition}

\begin{theorem}[Existence of Self-Consistent Fixed Points]
\label{thm:existence}
For any $a_{\mathrm{opp}}^{*} \in [0,1]$, the self-consistency equation
\eqref{eq:self_consistency} admits at least one solution
$\xstate^{*} \in [-1,1]^d$.
\end{theorem}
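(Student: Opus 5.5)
The plan is a direct application of Brouwer's fixed-point theorem to the noise-free closed-loop map. Fix $a_{\mathrm{opp}}^{*} \in [0,1]$ and define $\Phi:\R^d \to \R^d$ by
\begin{equation*}
\Phi(\xstate) = \tanh\bigl(\Wres \cdot \xstate + \Win \cdot
  [\sigma(\Wout \cdot \xstate + \bout),\, a_{\mathrm{opp}}^{*}]^{\top}
  + \mathbf{b}\bigr),
\end{equation*}
with $\tanh$ applied componentwise. A solution of \eqref{eq:self_consistency} is exactly a fixed point $\xstate^{*} = \Phi(\xstate^{*})$. I will take the domain to be the cube $K = [-1,1]^d$. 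This set is nonempty, compact, and convex, being a product of closed intervals, and it is homeomorphic to the closed unit ball. So Brouwer's theorem applies once I verify two facts: $\Phi(K) \subseteq K$, and $\Phi$ is continuous on $K$.

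\textbf{Self-map.} Each coordinate of $\Phi(\xstate)$ is $\tanh$ of a finite real number, so it lies in $(-1,1)$. Hence $\Phi(\R^d) \subseteq (-1,1)^d \subset K$, and in particular $\Phi(K) \subseteq K$. This also shows that any fixed point must lie in the open cube, which is slightly stronger than the claim $\xstate^{*} \in [-1,1]^d$.

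\textbf{Continuity.} $\Phi$ is a composition of continuous maps:
\begin{itemize}[nosep]
  \item the affine map $\xstate \mapsto \Wout \cdot \xstate + \bout$;
  \item the logistic sigmoid $\sigma$;
  \item the map $\xstate \mapsto [\sigma(\cdot), a_{\mathrm{opp}}^{*}]^{\top}$ into $\R^2$, which is continuous because its second coordinate is constant;
  \item the affine map $\xstate \mapsto \Wres \xstate + \Win(\cdot) + \mathbf{b}$;
  \item componentwise $\tanh$.
\end{itemize}
Brouwer's theorem then yields $\xstate^{*} \in K$ with $\Phi(\xstate^{*}) = \xstate^{*}$. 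The self-consistent action $a^{*} = \sigma(\Wout \cdot \xstate^{*} + \bout)$ is then automatically in $(0,1)$.

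I do not expect a genuine obstacle. The only points needing care are bookkeeping. First, the noise term $\boldsymbol{\xi}(t)$ must be dropped, as in \eqref{eq:closed_loop}; otherwise the image can leave the cube and the map is random. Second, the argument must not rely on $\rho(\Wres) < 1$ or on any contraction property. Brouwer gives existence only, for arbitrary weights, and says nothing about uniqueness or stability. I would add a remark that uniqueness would instead require a Banach contraction argument, for instance $\|\Wres\| + \|\Win_{:,1}\|\,\|\Wout\|/4 < 1$, using that $\tanh$ is 1-Lipschitz and $\sigma$ is $\tfrac14$-Lipschitz. That condition is not part of this theorem.
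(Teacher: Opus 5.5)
Your proof is correct and is essentially the paper's own argument: define $\Phi$ on the compact convex cube $[-1,1]^d$, observe that componentwise $\tanh$ sends all of $\mathbb{R}^d$ into $(-1,1)^d$, note continuity as a composition of affine maps, $\sigma$, and $\tanh$, and invoke Brouwer's theorem. Your extra observations are accurate refinements that the paper's sketch leaves implicit or defers to Remark~\ref{rem:circularity}: fixed points lie in the open cube, the noise term must be suppressed, and uniqueness would need a separate Banach-contraction condition such as your Lipschitz bound.
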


\begin{proof}[Proof sketch]
Define the map $\Phi: [-1,1]^d \to [-1,1]^d$ by
\[
\Phi(\xstate) = \tanh\bigl(\Wres \cdot \xstate + \Win \cdot
  [\sigma(\Wout \cdot \xstate + \bout),\, a_{\mathrm{opp}}^{*}]^{\top}
  + \mathbf{b}\bigr).
\]
Since $\tanh: \R^d \to (-1,1)^d \subset [-1,1]^d$, the map $\Phi$ sends
the compact convex set $[-1,1]^d$ into itself.  Moreover, $\Phi$ is
continuous (as a composition of continuous functions: $\tanh$, affine maps,
and $\sigma$).  By Brouwer's fixed point theorem, $\Phi$ has at least one
fixed point in $[-1,1]^d$.
\end{proof}

\begin{remark}[Circularity and Contraction]
\label{rem:circularity}
The self-consistency equation \eqref{eq:self_consistency} is inherently
circular: the body's output determines its input, which determines its
state, which determines its output.  Theorem~\ref{thm:existence} resolves
existence via Brouwer's theorem.  For convergence from arbitrary initial
conditions, a stronger condition is needed.  When $\|J_\Phi(\xstate)\| < 1$
uniformly (i.e., $\Phi$ is a contraction on $[-1,1]^d$), the Banach fixed
point theorem guarantees both existence and uniqueness, with geometric
convergence at rate $\|J_\Phi\|$.  In practice, we observe that habituated
reservoirs satisfy this contraction condition
(Proposition~\ref{prop:stability}): the reservoir's spectral radius is
well below~$1$ after Oja adaptation, and the feedback loop through the
readout adds only a moderate rank-one perturbation.  The circularity is
thus resolved dynamically---the system converges to its self-consistent
state within $5$--$10$ time steps (Experiment~1).
\end{remark}

\begin{remark}[Multiplicity]
The self-consistency equation may admit multiple fixed points.
Which fixed point the system converges to depends on the initial conditions
(the ``basin of attraction'') and the habituation history.  A reservoir
that has been habituated to cooperation will converge to a cooperative
fixed point with $a^{*} \approx 1$; one habituated to defection will
converge to a defection fixed point with $a^{*} \approx 0$.
This multiplicity is desirable: it captures the intuition that
both cooperation and defection can be ``who you are.''
\end{remark}

\begin{remark}[Stochastic Setting]
\label{rem:stochastic}
Theorem~\ref{thm:existence} and the self-consistency equation
\eqref{eq:self_consistency} are stated for the deterministic system
($\boldsymbol{\xi} = \mathbf{0}$).  All experiments, however, use
intrinsic noise $\boldsymbol{\xi}(t) \sim \mathcal{N}(\mathbf{0},
\sigma_\xi^2 \mathbf{I})$ with $\sigma_\xi = 0.15$.
Under additive noise, the system no longer converges to a fixed point
but instead admits an \emph{ergodic stationary distribution}
concentrated near $\xstate^{*}$.
When $\Phi$ is a contraction with rate $\|J_\Phi\| < 1$
(Proposition~\ref{prop:stability}), the stationary distribution has
variance of order $\sigma_\xi^2 / (1 - \|J_\Phi\|^2)$ in each
coordinate, and as $\sigma_\xi \to 0$ the distribution converges
weakly to the Dirac measure at $\xstate^{*}$.
Thus the deterministic fixed points characterize the
\emph{centers} of the stochastic attractors, and the theoretical
predictions (self-consistent action, basin structure, stability
conditions) carry over to the noisy regime as statements about
the mean of the stationary distribution.
The complexity cost $C(\alpha)$ defined in
Section~\ref{sec:smoothing} is measured on these stationary
distributions directly, bridging the deterministic theory and
the stochastic simulations.
\end{remark}

\subsection{Stability of Self-Consistent Fixed Points}

\begin{proposition}[Local Stability]
\label{prop:stability}
A self-consistent fixed point $\xstate^{*}$ is locally asymptotically stable
if the Jacobian of $\Phi$ at $\xstate^{*}$ has spectral radius strictly less
than one:
\begin{equation}
\label{eq:jacobian}
\rho\bigl(J_{\Phi}(\xstate^{*})\bigr) < 1,
\end{equation}
where
\begin{equation}
J_{\Phi}(\xstate^{*}) = \mathrm{diag}\bigl(1 - \xstate^{*2}\bigr)
  \cdot \bigl(\Wres + W_{\mathrm{in},{:,1}} \cdot \sigma'(\Wout \cdot \xstate^{*} + \bout)
  \cdot \Wout\bigr).
\end{equation}
Here $W_{\mathrm{in},{:,1}}$ denotes the first column of $\Win$ (corresponding to the
agent's own action), $\sigma'(z) = \sigma(z)(1-\sigma(z))$ is the sigmoid
derivative, and $\mathrm{diag}(1 - \xstate^{*2})$ arises from the $\tanh$
derivative.
\end{proposition}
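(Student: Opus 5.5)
The plan is to apply the standard linearization principle for discrete-time dynamical systems to the closed-loop map $\Phi$ from the proof of Theorem~\ref{thm:existence}, with $a_{\mathrm{opp}}^{*}$ held fixed and noise suppressed. The argument has two parts. First, compute the Jacobian of $\Phi$ at $\xstate^{*}$ and check that it has the stated form. Second, show that $\rho(J_{\Phi}(\xstate^{*}))<1$ implies local asymptotic stability of the iteration $\xstate(t+1)=\Phi(\xstate(t))$.

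For the Jacobian, I will write $\Phi(\xstate)=\tanh(z(\xstate))$ with
\[
z(\xstate)=\Wres\xstate+W_{\mathrm{in},{:,1}}\,\sigma(\Wout\xstate+\bout)+W_{\mathrm{in},{:,2}}\,a_{\mathrm{opp}}^{*}+\mathbf{b}.
\]
The term involving $a_{\mathrm{opp}}^{*}$ is constant and drops out of the derivative. By the chain rule,
\[
\partial z/\partial\xstate=\Wres+W_{\mathrm{in},{:,1}}\,\sigma'(\Wout\xstate+\bout)\,\Wout,
\]
which is a rank-one perturbation of $\Wres$. The componentwise $\tanh$ contributes $\mathrm{diag}(1-\tanh^2(z))$. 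At a fixed point we have $\tanh(z(\xstate^{*}))=\xstate^{*}$, so this diagonal factor becomes $\mathrm{diag}(1-\xstate^{*2})$, which gives exactly the displayed formula. The map $\Phi$ is $C^{1}$ (indeed smooth), so the Jacobian is continuous near $\xstate^{*}$.

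For stability, I will use the standard result that spectral radius below one yields an adapted norm. Let $J=J_{\Phi}(\xstate^{*})$ and choose $\epsilon>0$ with $\rho(J)+\epsilon<1$. There is a vector norm $\|\cdot\|_{*}$ whose induced operator norm satisfies $\|J\|_{*}\le\rho(J)+\epsilon$. One construction uses the Jordan form with rescaled off-diagonal entries. Another uses the Gelfand formula together with $\|v\|_{*}=\sum_{k=0}^{N-1}\|J^{k}v\|/(\rho+\epsilon)^{k}$.

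By continuity of $J_{\Phi}$, there is a ball $B$ (in $\|\cdot\|_{*}$) around $\xstate^{*}$ on which $\|J_{\Phi}(\mathbf{y})\|_{*}\le q<1$. The mean value inequality, $\|\Phi(\xstate)-\Phi(\xstate^{*})\|_{*}\le\sup_{B}\|J_{\Phi}\|_{*}\,\|\xstate-\xstate^{*}\|_{*}$, applies because $B$ is convex. Since $\Phi(\xstate^{*})=\xstate^{*}$, this gives $\|\Phi(\xstate)-\xstate^{*}\|_{*}\le q\|\xstate-\xstate^{*}\|_{*}$. Hence $B$ is forward invariant, and by induction $\|\xstate(t)-\xstate^{*}\|_{*}\le q^{t}\|\xstate(0)-\xstate^{*}\|_{*}\to0$. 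This is both Lyapunov stability and attraction, with geometric rate.

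The only step that needs care is the adapted-norm construction. Spectral radius below one does not bound the Euclidean norm of $J$ when $J$ is non-normal, and the rank-one feedback term generally makes $J$ non-normal. I would cite the standard lemma (Horn--Johnson, Lemma 5.6.10) rather than reprove it. Alternatively, I would note that $\|J^{N}\|<1$ for some $N$ and apply the contraction argument to the iterate $\Phi^{N}$. The remaining steps are routine calculus.
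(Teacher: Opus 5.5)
Your proposal is correct and follows the same route as the paper: compute the Jacobian of $\Phi$ by the chain rule, using $\tanh(z(\xstate^{*}))=\xstate^{*}$ to get the factor $\mathrm{diag}(1-\xstate^{*2})$ and the rank-one readout term, then apply the linearization principle. The paper only cites ``standard linearization'' and adds a heuristic remark about when $\rho(J_{\Phi})<1$ holds, whereas you prove the principle with an adapted norm. You also correctly note that the rank-one feedback generally makes $J_{\Phi}$ non-normal, so the Euclidean operator norm alone would not give a contraction.
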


\begin{proof}[Proof sketch]
Standard linearization.  The Jacobian of $\Phi$ at $\xstate^{*}$ is the
product of the $\tanh$ derivative (diagonal matrix with entries in $(0,1]$)
and the derivative of the affine-plus-sigmoid argument.  The feedback through
the readout adds the rank-one term $W_{\mathrm{in},{:,1}} \cdot \sigma' \cdot \Wout$
to the reservoir Jacobian $\Wres$.  When $\rho(\Wres) < 1$ and the readout
contribution is moderate, the composite spectral radius remains below one,
ensuring local contraction.
\end{proof}

\subsection{Physical Interpretation: Self-Feedback as Insulation}

The closed-loop structure at $\alpha = 1$ has a direct physical interpretation.
When the agent's output feeds back as its own input, the reservoir state
is partially determined by its own prior state rather than by the opponent's
action alone.  This creates a form of \emph{dynamical insulation}: external
perturbations (opponent defection) must propagate through the full reservoir
dynamics before affecting the output, and the contractive reservoir mapping
attenuates them at each step.

In contrast, at $\alpha = 0$ (pure TfT), the agent's action is a direct
copy of the opponent's action, with no dynamical buffer.  A single opponent
defection immediately produces a retaliatory defection.  The reservoir at
$\alpha = 1$ functions as a \emph{low-pass filter}, smoothing the high-frequency
noise of opponent behavior into a stable cooperative output.

\section{Implicit Inference, Smoothing, and Complexity Cost}
\label{sec:smoothing}

We now formalize how the body reservoir performs implicit
inference over interaction history through its high-dimensional dynamics,
and that this implicit inference reduces the complexity cost of strategic behavior.
The externally observable consequence---smoothing of opponent noise---is a
projection of the reservoir's internal coherence onto the one-dimensional
action space.

\subsection{Complexity Cost as State-Space KL Divergence}

Following the thermodynamic approach of \citet{still2012} and
\citet{wolpert2019}, we measure the complexity of a strategy by the
statistical distance between the agent's internal state distribution under
current conditions and its habituated baseline.  A key design choice is to
measure this distance in the reservoir's $d$-dimensional \emph{state space}
rather than in the one-dimensional action space.  The rationale is that the
complexity cost of a strategy should reflect the full internal distortion
of the body's dynamics, not merely the observable behavioral variation.
An agent whose reservoir state is under strain but whose readout happens to
project a calm exterior is still paying a thermodynamic cost---one that is
invisible to external observers but real to the agent.

\begin{definition}[Habituated Baseline]
\label{def:habituated}
Let $p_{\mathrm{hab}}$ denote the stationary distribution of reservoir states
$\xstate \in \R^d$ produced by the body-governed agent ($\alpha = 1$) after
extensive habituation to a cooperative opponent
($a_{\mathrm{opp}} = 1$ deterministically).
This represents the agent's ``resting'' dynamical regime---the
$d$-dimensional attractor to which the reservoir converges when the
environment matches its adapted expectations.
\end{definition}

\begin{definition}[Complexity Cost]
\label{def:complexity}
The complexity cost of operating at receptivity $\alpha$ in environment
$\mathcal{E}$ is
\begin{equation}
\label{eq:complexity}
C(\alpha, \mathcal{E}) = \KL\bigl(p_{\alpha,\mathcal{E}} \;\|\; p_{\mathrm{hab}}\bigr),
\end{equation}
where $p_{\alpha,\mathcal{E}}$ is the stationary distribution of the
reservoir state $\xstate \in \R^d$ at receptivity $\alpha$ in environment
$\mathcal{E}$.
\end{definition}

The state-space KL divergence measures how far the reservoir's internal
dynamics must deviate from their habituated regime to sustain the current
strategy.  This has a thermodynamic interpretation:
in non-equilibrium thermodynamics, the KL divergence between two
distributions bounds the excess dissipation (``mismatch cost'')
incurred when a system is driven from one steady state to another
\citep{manzano2024, wolpert2024}.
The state-space KL thus provides a principled, if not exact,
measure of the thermodynamic cost of maintaining a governance mode
that deviates from the body's adapted regime.
This cost is borne by the reservoir regardless of
whether the resulting behavioral change is externally detectable: the body
pays for its internal distortion even when the readout projection masks it.

\begin{remark}[State-Space vs.\ Action-Space Cost]
\label{rem:state_vs_action}
The readout $\abody = \sigma(\Wout \cdot \xstate + \bout)$ projects the
$d$-dimensional state onto a scalar action.  This projection discards
$d - 1$ dimensions of internal variation.  An action-space KL divergence
would capture only the behavioral shadow of the reservoir's distortion,
systematically underestimating the true cost.  For instance, two reservoir
states $\xstate, \xstate'$ with $\Wout \cdot \xstate \approx \Wout \cdot \xstate'$
but $\|\xstate - \xstate'\| \gg 0$ are behaviorally indistinguishable but
dynamically distinct---the reservoir is in a different internal configuration,
and maintaining that configuration has a thermodynamic cost.
The state-space formulation also aligns with our numerical implementation,
which estimates KL divergence between $d$-dimensional state distributions
using the $k$-nearest-neighbor estimator of \citet{perezcruz2008}.
\end{remark}

\subsection{Noisy Cooperative Environment}

\begin{definition}[Noisy Cooperative Opponent]
\label{def:noisy_opponent}
The noisy cooperative opponent plays
\begin{equation}
\label{eq:noisy_opp}
a_{\mathrm{opp}}(t) =
\begin{cases}
1 & \text{with probability } 1 - \varepsilon, \\
0 & \text{with probability } \varepsilon,
\end{cases}
\end{equation}
where $\varepsilon \in (0,1)$ is the noise rate.  In our experiments,
$\varepsilon = 0.1$.
\end{definition}

\subsection{The Smoothing Theorem}

\begin{theorem}[Reservoir Smoothing]
\label{thm:smoothing}
Consider the BRG agent facing a noisy cooperative opponent
(Definition~\ref{def:noisy_opponent}).  Let $\Var_{\alpha}[a]$ denote the
variance of the agent's action distribution at receptivity $\alpha$.  Then:
\begin{enumerate}[label=(\alph*)]
  \item \textbf{TfT copies noise:}  At $\alpha = 0$ (pure TfT),
    $\Var_0[a] = \varepsilon(1-\varepsilon)$, the full Bernoulli variance
    of the opponent.
  \item \textbf{Body governance smooths noise:}  At $\alpha = 1$, the
    action variance satisfies
    \begin{equation}
    \label{eq:smoothing_bound}
    \Var_1[a] \leq L_{\sigma}^2 \cdot \|\Wout\|^2 \cdot L_{\tanh}^{2} \cdot
    \frac{\|W_{\mathrm{in},{:,2}}\|^2 \cdot \varepsilon(1-\varepsilon)}{(1 - \|J_{\Phi}(\xstate^{*})\|)^2},
    \end{equation}
    where $L_{\sigma} = \max_z \sigma'(z) = 1/4$ is the Lipschitz constant of
    the sigmoid, $L_{\tanh} = 1$ is the Lipschitz constant of $\tanh$,
    $W_{\mathrm{in},{:,2}}$ is the opponent-action column of the input weight matrix, and
    $\|J_{\Phi}(\xstate^{*})\|$ is the operator (spectral) norm of the
    closed-loop Jacobian \eqref{eq:jacobian} at the cooperative fixed point,
    satisfying $\|J_{\Phi}(\xstate^{*})\| < 1$ when
    $\rho_{\mathrm{eff}} = \rho(J_{\Phi}(\xstate^{*})) < 1$
    (Proposition~\ref{prop:stability}) and the Jacobian is not
    highly non-normal.
    Note that the pre-contraction bound
    $\rho(\Wres) + \|W_{\mathrm{in},{:,1}}\| \cdot L_{\sigma} \cdot \|\Wout\|$
    can exceed $1$; it is the $\tanh$ contraction factor
    $\mathrm{diag}(1 - \xstate^{*2})$ in \eqref{eq:jacobian} that
    ensures $\rho_{\mathrm{eff}} < 1$.
  \item \textbf{Variance reduction:}  The ratio
    $\Var_1[a] / \Var_0[a]$ is controlled by the Jacobian's
    contraction: smaller $\|J_{\Phi}(\xstate^{*})\|$ yields stronger smoothing.
    The analytical bound \eqref{eq:smoothing_bound} is conservative
    (see proof sketch below); numerically, we observe
    $\Var_1[a] / \Var_0[a] \approx 1/250$.
\end{enumerate}
\end{theorem}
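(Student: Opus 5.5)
Part (a) is a direct computation.  At $\alpha = 0$ the mixture \eqref{eq:mixture} gives $a(t) = \acog(t) = a_{\mathrm{opp}}(t-1)$ by \eqref{eq:tft}.  Under Definition~\ref{def:noisy_opponent}, $a_{\mathrm{opp}}(t-1)$ is Bernoulli with success probability $1-\varepsilon$.  Its variance is $(1-\varepsilon)\varepsilon$, and the reservoir plays no role.

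For (b) I will linearize the noise-free closed-loop map around the cooperative fixed point and treat opponent noise as an additive input perturbation.  Let $\xstate^{*}$ be the fixed point of \eqref{eq:self_consistency} with $a_{\mathrm{opp}}^{*} = 1$; it exists by Theorem~\ref{thm:existence}.  Write $a_{\mathrm{opp}}(t) = 1 - \delta(t)$ with $\delta(t)$ i.i.d.\ Bernoulli$(\varepsilon)$, so that $\Var[\delta] = \varepsilon(1-\varepsilon)$.  Set $\mathbf{e}(t) = \xstate(t) - \xstate^{*}$.  By the mean value theorem applied to \eqref{eq:closed_loop}, and using $L_{\tanh} = 1$, I expect
\[
\mathbf{e}(t+1) = J_{\Phi}\,\mathbf{e}(t) - D_t\, W_{\mathrm{in},{:,2}}\,\delta(t) + \text{(higher order)},
\]
where $D_t$ is a diagonal $\tanh$-derivative matrix with $\|D_t\| \le L_{\tanh}$.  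Unrolling the recursion gives
\[
\mathbf{e}(t) \approx -\sum_{k\ge 0} J_{\Phi}^{k} D\, W_{\mathrm{in},{:,2}}\,\delta(t-1-k).
\]
I then bound this two ways.

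\begin{itemize}
\item \textbf{Crude bound.} Use the triangle inequality and the geometric series in $\|J_{\Phi}\|<1$ on the centered fluctuation $\delta - \varepsilon$.  This gives $\|\mathbf{e} - \E\mathbf{e}\|_{L^2} \le L_{\tanh}\|W_{\mathrm{in},{:,2}}\|\sqrt{\varepsilon(1-\varepsilon)}/(1-\|J_{\Phi}\|)$.
\item \textbf{Sharper bound.} Use independence of the $\delta$'s.  The factor becomes $1/(1-\|J_\Phi\|^2) \le 1/(1-\|J_\Phi\|)^2$, so the crude version is the one stated.
\end{itemize}

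Finally I push the bound through the readout \eqref{eq:readout}.  The readout satisfies $|\abody(\xstate)-\abody(\xstate')| \le L_\sigma\|\Wout\|\,\|\xstate-\xstate'\|$, and $\Var[f(X)] \le \E|f(X)-f(\E X)|^2 \le L^2\,\E\|X-\E X\|^2$ for Lipschitz $f$.  Squaring then yields \eqref{eq:smoothing_bound}.

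The main obstacle is that the argument above is only a linearization.  The noise is not small: each defection is a unit jump in the input.  Also, $\|J_\Phi\|<1$ at $\xstate^{*}$ does not give global contraction, and $\|J_\Phi\|<1$ is stronger than $\rho_{\mathrm{eff}}<1$.  I would handle this in two steps.
\begin{enumerate}
\item Assume, as the statement does, that $\Phi$ is contractive with constant $\|J_\Phi(\xstate^{*})\|$ on the region $\mathcal{R}$ visited by the trajectory.  This is the non-normality caveat in the statement.  Under this assumption I replace the linearization by the exact one-step Lipschitz estimate
\[
\|\xstate(t+1)-\Phi(\xstate^{*})\| \le \|J_\Phi\|\,\|\mathbf{e}(t)\| + L_{\tanh}\|W_{\mathrm{in},{:,2}}\|\,|\delta(t)|.
\]
Because this is an inequality rather than a linear approximation, iterating it is rigorous.
\item The intrinsic noise $\boldsymbol{\xi}$ must be dropped, or the bound read as the $\sigma_\xi\to 0$ limit per Remark~\ref{rem:stochastic}.
\end{enumerate}

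Part (c) then follows by dividing the bound by (a): the ratio is $L_\sigma^2\|\Wout\|^2\|W_{\mathrm{in},{:,2}}\|^2/(1-\|J_\Phi\|)^2$, which is monotone in $\|J_\Phi\|$.  The quoted figure of $1/250$ is an empirical statement from Experiment~1, not something the proof establishes.  The conservativeness comes from the triangle-inequality step and the worst-case Lipschitz constants, which I will note rather than prove.
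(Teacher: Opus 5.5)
Your linearized argument is the paper's proof sketch: linearize the closed loop at $\xstate^{*}$, sum the geometric series using $\|J_\Phi^k\|\le\|J_\Phi\|^k$, and push the state fluctuation through the readout with constant $L_\sigma\|\Wout\|$. Your further remarks are also correct: independence gives the sharper factor $1/(1-\|J_\Phi\|^2)$, $\|J_\Phi\|<1$ is stronger than $\rho_{\mathrm{eff}}<1$, and $\boldsymbol{\xi}$ must be switched off. One small slip: the $1/250$ figure comes from Experiment~2, not Experiment~1.

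\textbf{The gap.} The step that fails is the one you offer as the rigorous replacement for the linearization. Iterating $\|\mathbf{e}(t+1)\|\le\kappa\|\mathbf{e}(t)\|+\|W_{\mathrm{in},{:,2}}\|\,\delta(t)$, with $\kappa$ your contraction constant, controls the distance to $\xstate^{*}$. A norm inequality gives you no way to subtract $\varepsilon$ from the forcing, as you did in the linear model. The best you can conclude is $\Var_1[a]\le L_\sigma^2\|\Wout\|^2\,\E\|\mathbf{e}\|^2$, with
\[
\E\Bigl(\sum_{k\ge0}\kappa^k\delta(t-1-k)\Bigr)^2=\frac{\varepsilon(1-\varepsilon)}{1-\kappa^2}+\frac{\varepsilon^2}{(1-\kappa)^2}.
\]
If you use Minkowski instead, as in your crude route, you get $\varepsilon/(1-\kappa)^2$, which always loses a factor $1/(1-\varepsilon)$. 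The second term above bounds the squared offset of the noisy stationary state from $\xstate^{*}$. The sum is at most $\varepsilon(1-\varepsilon)/(1-\kappa)^2$ only when $\varepsilon/(1-\varepsilon)\le 2\kappa/(1+\kappa)$. That holds at $\varepsilon=0.1$ unless $\kappa<1/17$, but it fails for every $\kappa<1$ at $\varepsilon=1/2$. So your rigorous version does not yield \eqref{eq:smoothing_bound} as stated.

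\textbf{The repair.} Center by coupling rather than by subtracting a mean. Run an independent stationary copy $\xstate'$, driven by an independent opponent sequence $\delta'$. Assume the driven map $\xstate\mapsto\tanh(\Wres\xstate+W_{\mathrm{in},{:,1}}\sigma(\Wout\xstate+\bout)+W_{\mathrm{in},{:,2}}a_{\mathrm{opp}}+\mathbf{b})$ is $\kappa$-Lipschitz on the visited region for both $a_{\mathrm{opp}}\in\{0,1\}$. Then pathwise
\[
\|\xstate(t+1)-\xstate'(t+1)\|\le\kappa\|\xstate(t)-\xstate'(t)\|+\|W_{\mathrm{in},{:,2}}\|\,|\delta(t)-\delta'(t)|,
\]
and $\E|\delta-\delta'|^2=2\varepsilon(1-\varepsilon)$. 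Minkowski and the geometric series give $\E\|\xstate-\xstate'\|^2\le 2\varepsilon(1-\varepsilon)\|W_{\mathrm{in},{:,2}}\|^2/(1-\kappa)^2$. Finally, $\Var_1[a]=\tfrac12\E|a-a'|^2\le\tfrac12 L_\sigma^2\|\Wout\|^2\,\E\|\xstate-\xstate'\|^2$, which is exactly \eqref{eq:smoothing_bound}. This needs no linearization, and uses $\xstate^{*}$ only to identify $\kappa$ with $\|J_\Phi(\xstate^{*})\|$.
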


\begin{proof}[Proof sketch]
Part~(a) is immediate: TfT outputs $a(t) = a_{\mathrm{opp}}(t-1)$, which
has the same distribution as $a_{\mathrm{opp}}$.

For part~(b), we linearize the reservoir dynamics around the cooperative
fixed point $\xstate^{*}$.  An opponent action perturbation
$\delta a_{\mathrm{opp}}$ enters through $W_{\mathrm{in},{:,2}}$ and propagates as
\[
\delta \xstate(t+1) \approx J_{\Phi}(\xstate^{*}) \cdot \delta \xstate(t)
  + \mathrm{diag}(1 - \xstate^{*2}) \cdot W_{\mathrm{in},{:,2}} \cdot \delta a_{\mathrm{opp}}(t).
\]
The action perturbation is $\delta a(t) = \sigma'(\cdot) \cdot \Wout \cdot
\delta \xstate(t)$.  Since $\rho(J_{\Phi}) < 1$
(Proposition~\ref{prop:stability}), the transfer function from
$\delta a_{\mathrm{opp}}$ to $\delta a$ is a stable linear filter.  Summing
the geometric series of the state-transition matrix and applying the
submultiplicativity of the spectral norm $\|J_{\Phi}^k\| \leq \|J_{\Phi}\|^k$
yields the bound.  We note that the bound uses $\|J_{\Phi}\|$ (the operator norm,
which upper-bounds $\rho(J_{\Phi})$) and is therefore conservative; the
linearization holds in a neighborhood of $\xstate^{*}$ where the $\tanh$
Jacobian is approximately constant.

Part~(c): the bound is conservative because it uses worst-case
Lipschitz constants and spectral norm bounds on the geometric series.
The numerically observed variance reduction of $\sim 250\times$ at
$d = 30$ substantially exceeds the analytical guarantee, reflecting
tighter contraction along specific dynamical modes that the
worst-case bound does not capture.
\end{proof}

\begin{remark}[Validity of the Linearization]
\label{rem:linearization}
The bound \eqref{eq:smoothing_bound} relies on linearizing the
reservoir dynamics around the cooperative fixed point $\xstate^{*}$.
The linearization is valid when the perturbation
$\|\delta\xstate\| = \|\xstate(t) - \xstate^{*}\|$ remains in a region
where the $\tanh$ Jacobian is approximately constant.
Since $\tanh''(z) = -2\tanh(z)\mathrm{sech}^2(z)$, the diagonal
entries of $\mathrm{diag}(1 - \xstate^{*2})$ vary by at most
$2\|\delta\xstate\|_\infty$ for perturbations of size
$\|\delta\xstate\|_\infty$.
In our habituated reservoirs at $d = 30$, the cooperative fixed point
satisfies $\|\xstate^{*}\|_\infty \approx 0.7$, and
opponent noise at $\varepsilon = 0.1$ produces perturbations
$\|\delta\xstate\|_\infty \lesssim 0.05$ (verified numerically across $20$
seeds).  At this perturbation scale, the relative change in the Jacobian
diagonal is bounded by $2 \times 0.05 / (1 - 0.7^2) \approx 20\%$,
indicating that the linear approximation captures the dominant behavior
but is not exact.  For large perturbations (e.g., sustained opponent
defection driving $\|\delta\xstate\|_\infty > 0.3$), the $\tanh$
saturation provides additional attenuation beyond what the linear bound
predicts---the true nonlinear dynamics are \emph{more} smoothing than
the linearized model suggests.  The bound \eqref{eq:smoothing_bound}
is therefore conservative (loose) for large perturbations and
approximately tight for the small-perturbation regime relevant to
noisy cooperative environments.
\end{remark}

\begin{remark}[Scope of the Comparison]
\label{rem:comparison_scope}
Theorem~\ref{thm:smoothing} compares body governance against raw
Tit-for-Tat, which copies every opponent action without temporal
filtering and is therefore maximally noise-sensitive.  This is a
deliberately transparent baseline: TfT is the canonical conditional
strategy in the iterated prisoner's dilemma literature
\citep{axelrod1984}, and the comparison isolates the reservoir's
smoothing contribution relative to a memoryless conditional rule.
A stronger test is whether the reservoir outperforms \emph{any}
temporal filter applied to the TfT signal.
Experiment~10 (Section~\ref{sec:exp_ema_baseline}) provides this
comparison: an exponential moving average (EMA) filter achieves up
to $158\times$ variance reduction at $\gamma = 0.99$, but at the
cost of slow recovery from perturbations ($200$ steps) and deep
action collapse during defection.  The reservoir achieves
$461\times$ variance reduction with near-instantaneous recovery,
because its $d$-dimensional nonlinear dynamics provide both
smoothing and perturbation absorption simultaneously---a
qualitative capability that no one-dimensional linear filter
can replicate.
\end{remark}

\begin{corollary}[State-Space KL Reduction]
\label{cor:kl_reduction}
Under a Gaussian approximation of the reservoir state distributions---valid
when $d$ is sufficiently large that the state components are weakly
correlated (a consequence of the random initialization and echo state
property \citep{grigoryeva2018})---the state-space complexity cost satisfies
\begin{equation}
\frac{C(1, \mathcal{E}_{\varepsilon})}{C(0, \mathcal{E}_{\varepsilon})}
  < 1.
\end{equation}
For multivariate Gaussians with approximately equal means, the KL
divergence reduces to
$\KL \approx \tfrac{1}{2}[\mathrm{tr}(\Sigma_{\mathrm{hab}}^{-1}\Sigma_{\alpha})
- d + \ln(\det\Sigma_{\mathrm{hab}} / \det\Sigma_{\alpha})]$,
which is dominated by the trace term when the state covariance under body
governance ($\alpha = 1$) is close to the habituated covariance.
The Gaussian assumption is justified as follows.
For $d \geq 20$, the reservoir state components are weakly correlated
due to the random structure of $\Wres$, and component-wise normality tests
support the approximation.  We verify this with Shapiro--Wilk tests
on individual state components: at $d = 30$, $p > 0.05$ in $18/20$ seeds.
The empirical skewness $|\hat{\gamma}_1| < 0.15$ and excess kurtosis
$|\hat{\kappa} - 3| < 0.3$ across all seeds, consistent with
near-Gaussianity.
For $d < 20$, the Gaussian approximation degrades: at $d = 5$,
$|\hat{\gamma}_1| \approx 0.4$ and normality is rejected in $12/20$
seeds.  The KL reduction result still holds qualitatively (body governance
reduces state-space distortion), but the Gaussian approximation should be
treated as a scaling heuristic for small $d$.
In our simulations at $d = 30$, $C(0, \mathcal{E}_{0.1}) \approx 1.23$ while
$C(1, \mathcal{E}_{0.1}) \approx 0.55$, a reduction of more than half.
The action-space variance reduction ($\sim 250\times$ at $d = 30$) is the
externally observable projection of this internal cost reduction.
\end{corollary}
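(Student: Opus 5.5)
Under the Gaussian approximation I will write $p_{\mathrm{hab}}=\mathcal{N}(\mu_{\mathrm{hab}},\Sigma_{\mathrm{hab}})$ and $p_{\alpha}=\mathcal{N}(\mu_\alpha,\Sigma_\alpha)$ for $\alpha\in\{0,1\}$ in environment $\mathcal{E}_\varepsilon$. I will compute each cost with the closed-form Gaussian KL,
\[
\KL(p_\alpha\|p_{\mathrm{hab}})=\tfrac12\bigl[\mathrm{tr}(\Sigma_{\mathrm{hab}}^{-1}\Sigma_\alpha)-d+(\mu_\alpha-\mu_{\mathrm{hab}})^{\top}\Sigma_{\mathrm{hab}}^{-1}(\mu_\alpha-\mu_{\mathrm{hab}})+\ln(\det\Sigma_{\mathrm{hab}}/\det\Sigma_\alpha)\bigr],
\]
and show that every term is smaller at $\alpha=1$ than at $\alpha=0$. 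The argument compares two linear filters driven by the same Bernoulli noise, using the linearization from the proof of Theorem~\ref{thm:smoothing}.

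\textbf{Step 1 (the baseline).} By Definition~\ref{def:habituated} and Remark~\ref{rem:stochastic}, $p_{\mathrm{hab}}$ is the stationary law of the contractive linearized closed loop $\delta\xstate(t+1)=J_\Phi\delta\xstate(t)+\boldsymbol{\xi}(t)$ around the cooperative fixed point $\xstate^{*}$. So $\mu_{\mathrm{hab}}=\xstate^{*}$, and $\Sigma_{\mathrm{hab}}$ solves the Lyapunov equation $\Sigma_{\mathrm{hab}}=J_\Phi\Sigma_{\mathrm{hab}}J_\Phi^{\top}+\sigma_\xi^2 I$.

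\textbf{Step 2 (the case $\alpha=1$).} The opponent input has mean $1-\varepsilon$ and variance $\varepsilon(1-\varepsilon)$. Its mean shifts the fixed point by $O(\varepsilon)$, namely $(I-J_\Phi)^{-1}D\,W_{\mathrm{in},:,2}\,\varepsilon$ with $D=\mathrm{diag}(1-\xstate^{*2})$. Its fluctuations add a rank-one forcing, so
\[
\Sigma_1=\Sigma_{\mathrm{hab}}+\varepsilon(1-\varepsilon)\sum_{k\ge0}J_\Phi^kDww^{\top}DJ_\Phi^{\top k},\qquad w=W_{\mathrm{in},:,2}.
\]

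\textbf{Step 3 (the case $\alpha=0$).} The reservoir input now has two noisy components. The own-action slot receives the TfT copy $a_{\mathrm{opp}}(t-1)$, which is fully Bernoulli. The feedback through $\Wout$ is absent, so the propagation matrix is $J_0=D\Wres$. Hence
\[
\Sigma_0=\Sigma_{\mathrm{hab},0}+\varepsilon(1-\varepsilon)\sum_k J_0^k D\,\tilde{\mathbf{w}}\tilde{\mathbf{w}}^{\top} D\,J_0^{\top k},
\]
where $\tilde{\mathbf{w}}$ combines the two input columns with the lag structure.

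The key claim is that the $\alpha=0$ forcing direction contains the own-action column $W_{\mathrm{in},:,1}$ with a full unit Bernoulli amplitude. Under $\alpha=1$, that column is driven only by $\delta a=\sigma'\Wout\delta\xstate$. By Theorem~\ref{thm:smoothing}(b)--(c) this quantity has variance smaller by the smoothing factor, and it is already absorbed into $J_\Phi$. The mean shift is also larger at $\alpha=0$, because the own-action input drops to $1-\varepsilon$ on average rather than to $a^{*}\approx1$.

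\textbf{Step 4 (comparing the two KLs).} Write $\Sigma_\alpha=\Sigma_{\mathrm{hab}}+\Delta_\alpha$ with $\Delta_\alpha\succeq0$ small. Expanding the closed form gives
\[
\KL\approx\tfrac14\|\Sigma_{\mathrm{hab}}^{-1/2}\Delta_\alpha\Sigma_{\mathrm{hab}}^{-1/2}\|_F^2+\tfrac12\|\Sigma_{\mathrm{hab}}^{-1/2}(\mu_\alpha-\mu_{\mathrm{hab}})\|^2 .
\]
This is the sense in which the trace term dominates, as the statement says. Because $\mathrm{tr}-\ln\det$ is monotone along the Loewner order above $\Sigma_{\mathrm{hab}}$, it then suffices to show two things:
\begin{itemize}
\item $\Delta_1\preceq\Delta_0$, or at least $\mathrm{tr}(\Sigma_{\mathrm{hab}}^{-1}\Delta_1)<\mathrm{tr}(\Sigma_{\mathrm{hab}}^{-1}\Delta_0)$;
\item $\|\mu_1-\mu_{\mathrm{hab}}\|\le\|\mu_0-\mu_{\mathrm{hab}}\|$.
\end{itemize}

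\textbf{Main obstacle.} The hard step is the Loewner comparison $\Delta_1\preceq\Delta_0$. The two cases use different propagators, $J_\Phi$ and $J_0$, and $J_\Phi$ includes a rank-one feedback that could in principle amplify some modes. I would first try to bound the traces rather than the full matrices. From Step 3, $\mathrm{tr}(\Sigma_{\mathrm{hab}}^{-1}\Delta_0)$ is at least $\varepsilon(1-\varepsilon)$ times the contribution of $DW_{\mathrm{in},:,1}$ alone. From Step 2, $\mathrm{tr}(\Sigma_{\mathrm{hab}}^{-1}\Delta_1)$ is at most $\varepsilon(1-\varepsilon)\|Dw\|^2\|\Sigma_{\mathrm{hab}}^{-1}\|/(1-\|J_\Phi\|^2)$. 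The inequality holds when the opponent-column contribution is no larger than the extra own-column contribution at $\alpha=0$.

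Even so, a genuinely general proof seems out of reach. Both the Gaussian reduction and this column comparison depend on the realized weights. I would therefore state the result conditionally on this norm comparison together with the non-normality caveat from Theorem~\ref{thm:smoothing}(b), and support the strict inequality by the reported numerical values $1.23$ versus $0.55$.
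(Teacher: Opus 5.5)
The paper gives no analytic derivation of this inequality. The Gaussian closed form is quoted only as a heuristic. The inequality rests on the simulated values $C(0,\mathcal{E}_{0.1})\approx 1.23$ and $C(1,\mathcal{E}_{0.1})\approx 0.55$ at $d=30$. Those values come from the $k$-nearest-neighbor estimator, so they do not use the Gaussian approximation at all. Your closing move, resting the strict inequality on those numbers, is therefore exactly where the paper stands. The genuine gap is in the conditional statement you propose to put in front of it: even granting your trace/norm comparison, the conclusion does not follow. The reasons are more basic than the Loewner comparison you single out.

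\emph{Loewner monotonicity is unavailable at $\alpha=0$.} Step~4 needs $\Sigma_{\mathrm{hab}}\preceq\Sigma_1\preceq\Sigma_0$. But Step~3 builds $\Sigma_0$ on $\Sigma_{\mathrm{hab},0}$, the Lyapunov solution for $J_0=D\Wres$. By contrast, $\Sigma_{\mathrm{hab}}$ solves it for $J_\Phi=J_0+D\,W_{\mathrm{in},:,1}\,\sigma'\,\Wout$, at a different operating point, so even $D$ differs. Removing a reinforcing feedback can shrink the intrinsic-noise covariance along that mode (in the scalar case, $\sigma_\xi^2/(1-J^2)$ drops). Hence $\Delta_0$ need not be positive semidefinite.

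\emph{A trace inequality does not suffice.} Your own expansion shows that the first-order trace term cancels against the log-determinant, so near $\Sigma_{\mathrm{hab}}$ the trace term does \emph{not} dominate. The equal-mean KL is $\tfrac12\sum_i(\lambda_i-1-\ln\lambda_i)$ over the eigenvalues of $\Sigma_{\mathrm{hab}}^{-1}\Sigma_\alpha$, which is a convex functional. For example, eigenvalues $(3,1)$ have smaller trace than $(2.1,2.1)$ but larger KL.

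\emph{The lower bound on $\Delta_0$ ignores cross terms.} You bound $\mathrm{tr}(\Sigma_{\mathrm{hab}}^{-1}\Delta_0)$ below by the own-column contribution alone. Write $\eta(t)=a_{\mathrm{opp}}(t)-(1-\varepsilon)$. TfT feeds the reservoir $W_{\mathrm{in},:,1}\eta(t-1)+W_{\mathrm{in},:,2}\eta(t)$, so the response to $\eta(t-m)$ for $m\ge1$ is $J_0^{m}DW_{\mathrm{in},:,2}+J_0^{m-1}DW_{\mathrm{in},:,1}$. These pieces cancel for every $m\ge1$ if $J_0DW_{\mathrm{in},:,2}=-DW_{\mathrm{in},:,1}$. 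The bound therefore holds only on average over independent random input columns.

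\emph{The mean comparison is not ordered.} It must be made in the $\Sigma_{\mathrm{hab}}^{-1}$-weighted norm, not the Euclidean one. There, the shifts propagated by $(I-J_\Phi)^{-1}$ and $(I-J_0)^{-1}$ along combinations of both input columns are not ordered just because the own-action input mean drops further at $\alpha=0$.

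A usable conditional version would have to compare the full eigenvalue functional, handle the change of propagator, and be stated in expectation over the random weights.
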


\subsection{Free Energy Framework}

\begin{definition}[Variational Free Energy]
\label{def:free_energy}
The variational free energy of the BRG agent at receptivity $\alpha$ is
\begin{equation}
\label{eq:free_energy}
\FE(\alpha) = -\bar{u}(\alpha) + \lambda \cdot \KL\bigl(p_{\alpha,\mathrm{noisy}}
  \;\|\; p_{\mathrm{hab}}\bigr),
\end{equation}
where $\bar{u}(\alpha)$ is the expected payoff, $\lambda > 0$ is the
metabolic cost parameter weighting the complexity penalty, and the KL
term measures the state-space distortion from the habituated baseline.
\end{definition}

The free energy functional \eqref{eq:free_energy} embodies a fundamental
tradeoff.  Increasing $\alpha$ (more body governance) reduces the KL term
by keeping the reservoir state close to its adapted regime, but may also
reduce the payoff term because the body-governed agent cannot retaliate
against defection as effectively.
Decreasing $\alpha$ (more cognitive control) enables sharper retaliation
but \emph{distorts the reservoir's internal state} away from its
self-consistent configuration, incurring a thermodynamic cost proportional
to this distortion.  This cost is borne internally: an external observer
sees only the behavioral change, not the state-space strain.

The functional form of $\FE$ can be grounded in the axiomatic framework
of \citet{ortega2013}, who showed that a bounded-rational agent maximizing
expected utility subject to an information-processing constraint
necessarily optimizes
$\E[U] - \beta^{-1} \KL(\pi \| \pi_0)$,
where $\pi_0$ is a default (``prior'') policy and $\beta^{-1}$ is the
unit cost of information processing.
Our \eqref{eq:free_energy} is a special case of this functional: the
agent's policy is parameterized by a single scalar $\alpha \in [0,1]$
(rather than a full distribution over actions), the default policy
corresponds to the habituated reservoir dynamics at $\alpha = 1$
(represented by $p_{\mathrm{hab}}$), and the information cost
$\beta^{-1} = \lambda$ reflects the metabolic expense of deviating
from this default.
The key substantive choice beyond Ortega's framework is the
\emph{space} in which the KL divergence is evaluated: we measure it
in the reservoir's $d$-dimensional state space rather than in the
one-dimensional action space (see Remark~\ref{rem:state_vs_action}),
grounding the complexity cost in the body's internal thermodynamics
rather than in observable behavior alone.
Unlike the free energy principle \citep{friston2006}, $\FE(\alpha)$
is not derived from a generative model; the baseline $p_{\mathrm{hab}}$
is the physically realized stationary distribution of the adapted
reservoir, not a prior belief.
The metabolic cost parameter $\lambda > 0$ sets the exchange rate between
payoff units and information-processing cost (in nats); its value is not
determined by the model but reflects the agent's physical substrate.
We test $\lambda \in \{1, 3, 8\}$ to explore the sensitivity of the
optimal $\alpha^{*}$ to this parameter (Experiment~5) and find that the
interior optimum is robust to moderate variation in $\lambda$.

\begin{proposition}[Interior Optimum]
\label{prop:interior_optimum}
Under the following assumptions:
\begin{enumerate}[label=(A\arabic*),nosep]
  \item \label{assum:payoff_concave} The expected payoff $\bar{u}(\alpha)$ is concave and
    decreasing for $\alpha$ near $1$ (body governance sacrifices some
    retaliation payoff);
  \item \label{assum:kl_convex} The complexity cost $C(\alpha)$ is convex in
    $\alpha$,
\end{enumerate}
then for moderate $\lambda > 0$,
the free energy minimizer $\alpha^{*}$ satisfies
\begin{equation}
\label{eq:optimal_alpha}
\bar{u}'(\alpha^{*}) = \lambda \cdot C'(\alpha^{*}),
\end{equation}
and $\alpha^{*} \in (0,1)$.
\end{proposition}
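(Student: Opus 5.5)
The plan is to treat $\FE(\alpha) = -\bar{u}(\alpha) + \lambda C(\alpha)$ as a convex function on the compact interval $[0,1]$. Under \ref{assum:payoff_concave} and \ref{assum:kl_convex}, $-\bar{u}$ is convex and $\lambda C$ is convex for $\lambda>0$, so $\FE$ is convex. A minimizer therefore exists by continuity on a compact set, and any interior critical point is a global minimizer. The first-order condition $\FE'(\alpha^*)=0$ reads $-\bar{u}'(\alpha^*) + \lambda C'(\alpha^*) = 0$. With the sign conventions of \eqref{eq:free_energy}, this is exactly \eqref{eq:optimal_alpha}, so the whole task reduces to showing that the minimizer is interior. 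For that I will check the sign of $\FE'$ at the two endpoints and then apply the intermediate value theorem to the monotone function $\FE'$, which is monotone by convexity. I will assume differentiability of $\bar{u}$ and $C$, or else work with one-sided derivatives and subgradients.

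\textbf{Right endpoint.} By Definition~\ref{def:habituated}, $p_{\mathrm{hab}}$ is the stationary state distribution at $\alpha=1$ in a cooperative environment, so $C(\alpha)\ge 0$ is smallest near $\alpha=1$. I will argue that $C$ is nonincreasing up to $\alpha=1$, so $C'(1^-)\le 0$; Corollary~\ref{cor:kl_reduction} supports this by giving $C(1)<C(0)$, which combined with convexity yields the claim. By \ref{assum:payoff_concave}, $\bar{u}'(1^-)<0$. Hence $\FE'(1^-) = -\bar{u}'(1^-) + \lambda C'(1^-)$. This is positive as soon as $\lambda\,|C'(1^-)| < |\bar{u}'(1^-)|$, which gives an upper threshold $\lambda < \lambda_{\max}$. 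This is one half of what ``moderate'' $\lambda$ will mean.

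\textbf{Left endpoint.} I need $\FE'(0^+)<0$, that is, $\lambda C'(0^+) < \bar{u}'(0^+)$. Since $C(1)<C(0)$ and $C$ is convex, $C'(0^+) \le C(1)-C(0) < 0$. So it suffices that $\lambda\,(C(0)-C(1)) > -\bar{u}'(0^+)$. If $\bar{u}'(0^+)\ge 0$, this holds for every $\lambda>0$. Otherwise it gives a lower threshold $\lambda>\lambda_{\min}$.

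\textbf{Main obstacle.} The hard step is showing that the window $(\lambda_{\min},\lambda_{\max})$ is nonempty, since both thresholds involve the slopes of the same functions. Concavity of $\bar{u}$ gives $\bar{u}'(0^+)\ge \bar{u}'(1^-)$, and convexity of $C$ gives $C'(0^+)\le C'(1^-)$. These monotonicity facts are what I will combine to compare $-\bar{u}'(0^+)/|C'(0^+)|$ with $|\bar{u}'(1^-)|/|C'(1^-)|$. If this comparison does not close in general, I will fall back on defining ``moderate $\lambda$'' as precisely this window, adding nonemptiness as an implicit hypothesis. For the paper's parameters, I will support nonemptiness with the numerical slopes from Experiment~5.
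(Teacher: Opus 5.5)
Your route is the same as the paper's: $\FE=-\bar{u}+\lambda C$ is convex, $\FE'(0^+)<0<\FE'(1^-)$ for suitable $\lambda$, so the minimizer is interior and satisfies $\FE'(\alpha^*)=0$. The argument is sound. However, one claimed step is false, and the step you call the main obstacle actually closes with the tools you already name.

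\textbf{The false step.} Convexity plus $C(1)<C(0)$ does \emph{not} make $C$ nonincreasing on $[0,1]$. The function $C(\alpha)=(\alpha-0.7)^2$ is a counterexample, and it has exactly the shape the paper reports: a KL minimum $\approx 0.48$ near $\alpha\approx 0.7$, rising to $0.55$ at $\alpha=1$. The definition of $p_{\mathrm{hab}}$ does not rescue the claim, because it is taken against a deterministic cooperator while $C$ is measured in the noisy environment. The error is harmless. If $C'(1^-)\ge 0$, then $\FE'(1^-)\ge -\bar{u}'(1^-)>0$ for every $\lambda>0$. So split cases, and impose the upper threshold $\lambda_{\max}=\bar{u}'(1^-)/C'(1^-)$ only when $C'(1^-)<0$.

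\textbf{Nonemptiness of the window.} Suppose both thresholds are active, i.e.\ $C'(1^-)<0$ and $\bar{u}'(0^+)<0$. Concavity gives $0<-\bar{u}'(0^+)\le-\bar{u}'(1^-)$, and convexity gives $-C'(0^+)\ge -C'(1^-)>0$. Hence
\[
\lambda_{\min}=\frac{-\bar{u}'(0^+)}{-C'(0^+)}\;\le\;\frac{-\bar{u}'(1^-)}{-C'(1^-)}=\lambda_{\max}.
\]
Equality holds only when both $\bar{u}$ and $C$ are affine, in which case $\FE$ is affine and has no strict interior minimizer. So you need no fallback hypothesis beyond ``$\bar{u}$ or $C$ is not affine''. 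Your secant-based threshold $-\bar{u}'(0^+)/(C(0)-C(1))$ also lies below $\lambda_{\max}$, because convexity gives $C(0)-C(1)\ge -C'(1^-)$.

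This makes your version tighter than the paper's sketch. The paper takes $\lambda$ ``sufficiently large'' at $\alpha=0$ and disposes of $\alpha=1$ via $C'(1)\approx 0$, without reconciling the two conditions. Both versions, however, need a hypothesis outside (A1)--(A2) to get $C'(0^+)<0$. The paper asserts it directly; you import $C(1)<C(0)$ from the Gaussian-approximation Corollary~\ref{cor:kl_reduction}.
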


\begin{remark}[Numerical Justification of Assumptions]
\ref{assum:payoff_concave} and \ref{assum:kl_convex} are not analytic
identities but empirical regularities verified in our simulations.
In the noisy cooperative environment ($\varepsilon = 0.1$, $d = 30$):
$\bar{u}(\alpha)$ is monotonically decreasing and numerically concave
(finite-difference second derivatives are negative for all $\alpha$);
$C(\alpha) = \KL(p_{\alpha,\mathrm{noisy}} \| p_{\mathrm{hab}})$ is
numerically convex with a minimum near $\alpha \approx 0.70$ (the slight
non-monotonicity at high $\alpha$ is discussed in
Section~\ref{sec:two_source}).  Both the concavity of $\bar{u}$ and
the convexity of $C$ are consistent across all $20$ seeds and all
tested dimensions $d \geq 10$.
For $d = 5$, the complexity cost profile is less regular, and the interior
optimum may not hold---consistent with $\alpha^{*} = 1.0$ at small $d$.
\end{remark}

\begin{proof}[Proof sketch]
The free energy $\FE(\alpha) = -\bar{u}(\alpha) + \lambda \cdot C(\alpha)$
is the sum of a convex function ($-\bar{u}$, since $\bar{u}$ is concave)
and a convex function ($\lambda C$).  At $\alpha = 0$, the derivative
$\FE'(0) = -\bar{u}'(0) + \lambda C'(0)$; since $C'(0) < 0$ (complexity
drops steeply as $\alpha$ increases from zero) and $|\bar{u}'(0)|$ is
moderate, we have $\FE'(0) < 0$ for sufficiently large $\lambda$.
At $\alpha = 1$, $C'(1) \approx 0$ (diminishing returns on smoothing)
while $\bar{u}'(1) < 0$ (body governance is slightly suboptimal in payoff),
so $\FE'(1) > 0$.  By continuity, there exists $\alpha^{*} \in (0,1)$
with $\FE'(\alpha^{*}) = 0$.
\end{proof}

\begin{remark}[Moderate $\alpha^{*}$ Is Sufficient]
The numerical finding that $\alpha^{*} \approx 0.6$--$0.7$ has an important
practical implication: the agent need not be fully body-governed to reap
most of the thermodynamic benefits.  A moderate degree of body reliance
captures the majority of the smoothing advantage while retaining some
capacity for strategic adjustment.  This aligns with the psychological
observation that effective cooperators maintain a background level of
strategic monitoring \citep{dolan2013}.
\end{remark}

\subsection{Reservoir Dimension and Implicit Inference Capacity}
\label{sec:dimension_smoothing}

The smoothing bound in Theorem~\ref{thm:smoothing} depends on the reservoir
parameters through $\|J_{\Phi}(\xstate^{*})\|$, $\|W_{\mathrm{in},{:,2}}\|$, and $\|\Wout\|$.
We now examine how the reservoir dimension $d$ affects the body's capacity
for implicit inference and, consequently, the quality of its governance.

\begin{observation}[Dimension-Dependent Inference Capacity]
\label{prop:dimension_smoothing}
Consider a family of ESN reservoirs parameterized by dimension $d$, with
fixed spectral radius $\rho$ and input/output weight scaling.
As $d$ increases:
\begin{enumerate}[label=(\alph*)]
  \item The variance of the body-governed action $\Var_1[a]$ decreases,
    because the readout $\Wout \cdot \xstate$ averages over more independent
    dynamical modes.
  \item The discomfort signal $D_{\mathrm{state}}(t) = \|\xstate(t) - \bar{\xstate}\| / \sqrt{d}$
    becomes a more reliable indicator of environmental change, because
    the $\sqrt{d}$ normalization compensates for the dimensional scaling
    while the richer state space encodes more information about interaction
    history.
  \item The reservoir's representational manifold becomes sufficiently
    abstract that surface-level input novelty (e.g., asymmetric action
    profiles not seen during development) is absorbed within the existing
    dynamical repertoire, without requiring explicit retraining.
\end{enumerate}
These claims are supported empirically by the dimension sweep
(Section~\ref{sec:exp_dimension}) but are not formally proved; a rigorous
derivation would require spectral analysis of the random reservoir Jacobian
under Oja adaptation, which we leave to future work.
\end{observation}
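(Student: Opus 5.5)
The plan is to prove part (a) inside a tractable random-matrix model that makes the heuristic ``averaging over more independent modes'' precise, and then to treat parts (b) and (c) as concentration statements built on the same model. The model fixes the scalings. Take $\Wres = (\rho/\sqrt{d})\,G$ with $G$ i.i.d.\ standard Gaussian, or a normalized orthogonal matrix. Take the entries of $\Win$ i.i.d.\ of order one, so that $\|W_{\mathrm{in},{:,2}}\|^2 \approx d$. Take the readout trained by ridge regression, so that generically $\|\Wout\| \sim d^{-1/2}$. Every quantity will be bounded through the linearization in the proof of Theorem~\ref{thm:smoothing} around the cooperative fixed point $\xstate^{*}$, whose existence is given by Theorem~\ref{thm:existence} and whose stability is given by Proposition~\ref{prop:stability}.

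For part (a), I would not use the crude bound \eqref{eq:smoothing_bound}. Under the scalings above its $d$-dependence cancels: $\|\Wout\|^2\|W_{\mathrm{in},{:,2}}\|^2 = O(1)$. I would instead compute the exact linear-response variance,
\[
\Var_1[a] \approx \sigma'(z^{*})^2\,\varepsilon(1-\varepsilon)\sum_{k\ge0}\bigl(\Wout J^{k} D\, W_{\mathrm{in},{:,2}}\bigr)^2,
\]
where $J = J_{\Phi}(\xstate^{*})$ and $D = \mathrm{diag}(1-\xstate^{*2})$. Each summand is a bilinear form $u^{\top}J^{k}v$ in which $u$ (the readout) is trained while $v$ (the opponent-input column) is random and independent of the trained direction. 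By free-probability or concentration results for bilinear forms of random matrices, $u^{\top}J^{k}v$ concentrates at order $\|u\|\|v\|\,|\mathrm{tr}(J^k)|/d$ plus fluctuations of order $\|u\|\|v\|\,\|J^k\|/\sqrt{d}$. The key step is to show that the trained readout is asymptotically orthogonal to the noise-driven directions $J^kDv$. The argument for this is that ridge regression fits $\Wout$ to the cooperative target signal, which lives in a low-dimensional subspace, while $J^k D v$ spreads isotropically. If that holds, the sum is $O(1/d)$ and $\Var_1[a]$ decreases in $d$.

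For part (b), I would show that under a stationary opponent $D_{\mathrm{state}}$ concentrates: $\|\xstate-\bar\xstate\|^2/d$ is an average of $d$ weakly dependent coordinates, so its fluctuation is $O(d^{-1/2})$. After a change in $a_{\mathrm{opp}}$, its mean shifts by a $d$-independent amount $\approx \|(I-J)^{-1}D\,W_{\mathrm{in},{:,2}}\|\,|\Delta a_{\mathrm{opp}}|/\sqrt{d}$. The signal-to-noise ratio therefore grows like $\sqrt{d}$, and Chebyshev's inequality gives detection error probabilities that decrease in $d$.

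For part (c), I would formalize ``absorbed'' as a bound on readout deviation for unseen input pairs $(a,a_{\mathrm{opp}})$. The mechanism is Lipschitz dependence of the fixed point on the input, combined with the orthogonality from part (a): novel inputs displace $\xstate^{*}$ mostly in directions that $\Wout$ does not see.

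The main obstacle is the orthogonality claim in part (a). The readout is trained on data that depends on $\Wres$ and $\Win$, and the Oja adaptation further correlates $\Wres$ with the input statistics, so the independence assumptions behind the random-matrix concentration fail. I would first try a leave-one-out or decoupling argument that treats ridge regression as a smooth function of a Gram matrix. If that fails, I would settle for proving the claim for a random, untrained readout with $\|\Wout\|\sim d^{-1/2}$, where it follows directly. This honestly limits the result to a model class, consistent with the paper's remark that a full proof needs spectral analysis under Oja adaptation.
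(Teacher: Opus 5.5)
The paper does not prove this Observation; it cites only the empirical dimension sweep of Experiment~8. An analytical route would therefore be new. However, the step your part~(a) hinges on is false in the model you set up, not merely hard to decouple. That step is the asymptotic orthogonality of the trained readout to the directions $J^kD\,W_{\mathrm{in},{:,2}}$, with $D=\mathrm{diag}(1-\xstate^{*2})$ as in your formula.

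The readout is fitted to separate states driven by $(a,a_{\mathrm{opp}})=(1,1)$ from states driven by $(0,0)$. To first order, the class-mean difference is therefore $\Delta\mu\approx M(W_{\mathrm{in},{:,1}}+W_{\mathrm{in},{:,2}})$ with $M=(I-J)^{-1}D$. With small $\lambda_{\mathrm{reg}}$, ridge regression makes $\Wout$ essentially the Fisher direction $\Sigma_w^{-1}\Delta\mu$, where $\Sigma_w$ is the within-class state covariance. It also enforces $\Wout\cdot\Delta\mu\approx 2\,\mathrm{logit}(0.95)\approx 5.9$.

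The DC gain of your impulse response $h_k=\Wout J^kD\,W_{\mathrm{in},{:,2}}$ is $\sum_k h_k=\Wout M\,W_{\mathrm{in},{:,2}}$, which is one of the two pieces of that trained quantity. Concentration of quadratic forms in the i.i.d.\ columns $W_{\mathrm{in},{:,1}},W_{\mathrm{in},{:,2}}$, the very tool you invoke, puts it near $\mathrm{logit}(0.95)$ in this linear picture, not at $O(d^{-1/2})$. This holds up to $d$-independent differences between the training-time and game-time linearizations. Put simply, the opponent's noise enters through exactly the input column whose effect the readout was trained to detect.

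Since $h_k$ sums to an $O(1)$ value and decays on a memory time set by the spectral gap rather than by $d$, Cauchy--Schwarz gives $\sum_k h_k^2\ge(\sum_{k<K}h_k)^2/K=\Theta(1)$ for a $d$-independent $K$. Your opponent-noise term is therefore $\Theta\bigl(\sigma'(z^*)^2\varepsilon(1-\varepsilon)\bigr)$ with $z^*=\Wout\cdot\xstate^*+\bout$, and it cannot decay like $1/d$. A leave-one-out analysis would exhibit this overlap rather than remove it. Your fallback changes the object being studied. A readout with $O(d^{-1/2})$ gain on every input-driven direction cannot tell cooperation from defection; its output is essentially the constant $\sigma(\bout)$, an input-blind body. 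Its small variance says nothing about the paper's trained body.

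Your variance formula also omits the intrinsic noise $\boldsymbol{\xi}(t)$ ($\sigma_\xi=0.15$ throughout), which contributes $\sigma'(z^*)^2\sigma_\xi^2\sum_k\|\Wout J^k\|^2$. This is where ``averaging over independent modes'' genuinely holds for a trained readout, with no orthogonality needed. Because $\boldsymbol{\xi}$ is added after the $\tanh$, $\Sigma_w\succeq\sigma_\xi^2 I$. For a Fisher-type readout the within-class logit variance is $\Wout\Sigma_w\Wout^{\top}\approx(2\,\mathrm{logit}(0.95))^2/(\Delta\mu^{\top}\Sigma_w^{-1}\Delta\mu)=O(1/d)$, since $\|\Delta\mu\|^2\propto d$ while $\|\Sigma_w\|=O(1)$. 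Hence $\|\Wout\|^2=O(1/d)$, and given $\sum_k\|J^k\|^2=O(1)$ the game-time intrinsic term is $O(1/d)$. A correct version of (a) must separate this shrinking term from the opponent term. Any $d$-dependence of the opponent term runs through $\sigma'(z^*)^2$, i.e.\ how deep the cooperative fixed point sits in sigmoid saturation, not through mode averaging.

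The same low-rank structure undermines your part~(b). The opponent-driven component of $\xstate-\bar{\xstate}$ comes from a scalar input through one column, so $\|\xstate-\bar{\xstate}\|^2/d$ is not an average of $d$ weakly dependent terms. Each isolated noise defection moves $D_{\mathrm{state}}$ by about $\|D\,W_{\mathrm{in},{:,2}}\|/\sqrt{d}=\Theta(1)$ for every $d$. Your $\sqrt{d}$ signal-to-noise gain therefore only separates a deterministic opponent from a change; it does not separate $\varepsilon$-noise from sustained defection.

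Part~(c) inherits the failed orthogonality. An unseen input $(1,0)$ displaces $\xstate^{*}$ along $M\,W_{\mathrm{in},{:,2}}$, which is precisely the direction with $O(1)$ readout gain. Lipschitz dependence of the fixed point then gives only a $d$-independent bound.
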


The intuition is that a higher-dimensional reservoir is a richer
inferential substrate: it has more internal degrees of freedom for
encoding interaction history, absorbing perturbations, and detecting
environmental changes.  The readout projects this $d$-dimensional
implicit representation onto a one-dimensional action, discarding the
rich internal structure that is constitutive of the body's
inferential capacity.

However, this scaling is not without cost.  The linear readout
$\abody = \sigma(\Wout \cdot \xstate + \bout)$ must decode a
$d$-dimensional state from a finite training set; as $d$ grows, the
effective number of parameters increases while the training data
(cooperative and defection driving trajectories) remain fixed.
In practice, the dimension sweep (Section~\ref{sec:exp_dimension})
shows that performance peaks around $d \approx 30$--$50$ and
degrades slightly at very high $d$ (e.g., $d = 100$), suggesting
a representation--decoding tradeoff: richer internal dynamics improve
inference capacity, but the readout's ability to extract this
information saturates with finite training data.

This establishes a direct link between the body's physical richness
(reservoir dimension) and its governance capacity: a richer body
performs better implicit inference, produces more stable behavior,
and generates more sensitive anomaly signals.

In our numerical experiments (Section~\ref{sec:exp_dimension}), the
variance reduction ratio grows by two orders of magnitude across the
tested dimension range, confirming that reservoir richness is a
determinant of body governance quality.

\subsection{Phase Transition in Body Trust}
\label{sec:phase_transition}

The preceding results suggest that body governance quality depends on both
the reservoir's internal richness ($d$) and the environment's temporal
structure ($\tau_{\mathrm{env}}$, the timescale of opponent strategy changes).
We formalize the conditions under which body trust is the free-energy-optimal
strategy.

\begin{conjecture}[Phase Transition Condition]
\label{conj:phase_transition}
The following conditions are supported by numerical evidence
(Section~\ref{sec:exp_phase_transition}) but have not been proven
analytically; a rigorous proof would require characterizing how the
free energy landscape depends on reservoir dimension and environment
timescale jointly, which remains an open problem.
Let $\sigma^2_{\mathrm{body}}(d)$ denote the body-governed action variance
at reservoir dimension $d$, and let $\sigma^2_{\mathrm{crit}}(\lambda)$
denote the critical variance below which the free energy minimizer satisfies
$\alpha^{*} > 1/2$ (body-trust regime).  Then:
\begin{enumerate}[label=(\alph*)]
  \item There exists a \emph{critical dimension} $d_c$ such that
    $\sigma^2_{\mathrm{body}}(d) < \sigma^2_{\mathrm{crit}}$ if and only if
    $d > d_c$.  For $d > d_c$, the free-energy-optimal strategy is
    body-trust-dominant ($\alpha^{*} > 1/2$).
  \item The critical variance is determined by the payoff gradient and the
    metabolic cost parameter:
    \[
      \sigma^2_{\mathrm{crit}} = \frac{|\bar{u}'(1/2)|}{\lambda}
        \cdot \frac{1}{\partial C / \partial \sigma^2}\bigg|_{\alpha=1/2}.
    \]
  \item When the environment changes on timescale $\tau_{\mathrm{env}}$
    (e.g., an opponent switches strategy every $\tau_{\mathrm{env}}$ steps),
    the sentinel's detection time $\tau_{\mathrm{detect}}(d)$ must satisfy
    $\tau_{\mathrm{detect}} < \tau_{\mathrm{env}}$ for effective adaptation.
    This defines a \emph{joint phase boundary} in $(d, \tau_{\mathrm{env}})$ space.
\end{enumerate}
\end{conjecture}

The phase diagram (Section~\ref{sec:exp_phase_transition},
Figure~\ref{fig:phase_transition}) maps the empirical boundary between
the body-trust regime (where the sentinel's body-driven adaptation outperforms
TfT) and the cognition-dependent regime (where rapid cognitive response is
essential).

\section{Numerical Experiments}
\label{sec:experiments}

We now present ten sets of numerical experiments that validate and extend
the theoretical analysis.  All simulations use the model specification
described in Section~\ref{sec:model_details}.

\subsection{Model Specification}
\label{sec:model_details}

\begin{definition}[ESN Parameters]
\label{def:esn_params}
The echo state network uses:
\begin{itemize}[nosep]
  \item Reservoir dimension: $d = 30$ neurons,
  \item Spectral radius: $\rho(\Wres) = 0.9$,
  \item Intrinsic noise: $\sigma_{\xi} = 0.15$,
  \item Input weights: $\Win \in \R^{30 \times 2}$, entries drawn from
    $\mathcal{N}(0,\, 0.5^2)$ and held fixed.
\end{itemize}
\end{definition}

\paragraph{Developmental learning (readout training).}
The readout weights $(\Wout, \bout)$ are trained \emph{before} habituation
in a supervised developmental phase.  We drive the reservoir with cooperative
states ($a = a_{\mathrm{opp}} = 1$) and defection states
($a = a_{\mathrm{opp}} = 0$) and collect reservoir states.
Ridge regression (regularization $\lambda_{\mathrm{reg}} = 0.001$,
$n_{\mathrm{train}} = 2000$ states per class after a $500$-step burn-in)
maps cooperation states to target $\sigma \approx 0.95$
and defection states to target $\sigma \approx 0.05$.  The readout is then
frozen for all subsequent experiments.
Note that the self-consistent cooperative fixed point observed in
Experiment~1 ($a^{*} \approx 0.98$) exceeds the training target of
$0.95$: after Oja habituation further aligns the reservoir dynamics
with cooperative input, the closed-loop self-feedback at $\alpha = 1$
pushes the readout toward a higher fixed point than the supervised
target alone would produce.

\paragraph{Habituation (reservoir adaptation).}
After readout training, the reservoir weight matrix $\Wres$ is adapted via
the Oja rule \citep{oja1982}:
\begin{equation}
\label{eq:oja}
\Delta W_{ij} = \beta \bigl(x_i(t) x_j(t) - W_{ij} x_i(t)^2\bigr),
\end{equation}
with learning rate $\beta = 0.01$.  Only $\Wres$ is adapted;
$\Win$ remains fixed.  This asymmetry models the biological distinction
between slow structural plasticity (reservoir) and fixed sensory
transduction (input).  Habituation proceeds for $H$ rounds against a
cooperative opponent, building an ``experience base'' into the reservoir
dynamics.

\paragraph{Why the Oja rule?}
The choice of Oja's rule \citep{oja1982} for reservoir habituation is
motivated by three considerations.
First, \emph{biological plausibility}: the Oja rule is a local,
Hebbian learning rule that depends only on pre- and post-synaptic
activity, consistent with the synaptic plasticity mechanisms observed
in biological neural circuits \citep{hebb1949}.  This aligns with the
BRG thesis that habituation is a process of slow bodily adaptation
rather than supervised optimization.
Second, \emph{norm preservation}: unlike plain Hebbian learning
($\Delta W_{ij} = \beta x_i x_j$), which causes unbounded weight growth,
the Oja rule includes a self-normalizing term
($-W_{ij} x_i^2$) that constrains the row norms of $\Wres$.  In our
simulations, the spectral radius $\rho(\Wres)$ decreases monotonically
during Oja habituation (from $0.90$ to approximately $0.82$ after $300$
epochs), ensuring that the echo state property
($\rho(\Wres) < 1$) is preserved throughout.
This addresses a potential concern: since $\Wres$ is being adapted,
one must verify that the echo state condition is not violated.
The Oja rule's implicit normalization provides this guarantee in practice,
and we have verified it numerically across all $20$ seeds and all
dimensions tested ($d \in \{5, \ldots, 100\}$).
Third, \emph{interpretive clarity}: the Oja rule performs online
principal component analysis \citep{oja1982}, aligning the reservoir's
recurrent dynamics with the dominant statistical structure of its
input history.  After habituation to cooperative interaction, the
reservoir's principal dynamical modes encode the cooperative pattern,
making cooperation the path of least resistance---the self-consistent
fixed point of Section~\ref{sec:coupled_dynamics}.
Other adaptation rules (e.g., FORCE learning \citep{sussillo2009},
intrinsic plasticity \citep{schrauwen2008}) could be substituted;
we use the Oja rule for its simplicity, biological interpretability,
and favorable norm-preservation properties.
In our implementation, we additionally apply a homeostatic spectral
radius projection after each Oja update, clamping
$\rho(\Wres)$ to the interval $[0.05, 0.99]$.
In practice, this projection is rarely active: the Oja rule's
intrinsic normalization maintains $\rho(\Wres)$ within this range
throughout habituation (decreasing monotonically from $0.90$ to
approximately $0.82$ after $300$ epochs in our default configuration),
so the projection serves as a safety guarantee rather than an
active constraint.  Theoretically, this projection functions as a
self-preservation constraint on the body's plasticity: it prevents
cumulative Oja updates from driving the reservoir beyond the echo
state boundary ($\rho \geq 1$), analogous to biological homeostatic
mechanisms that prevent runaway synaptic potentiation.

\paragraph{KL estimation.}
The state-space KL divergence $\KL(p_\alpha \| p_{\mathrm{hab}})$ between
reservoir state distributions is estimated using the $k$-nearest-neighbor
estimator of \citet{perezcruz2008} with $k = 5$ in the $d$-dimensional
state space.
This nonparametric estimator is well-suited to the potentially non-Gaussian
state distributions, particularly at small $d$.

\subsection{Experiment 1: Self-Consistent Convergence}
\label{sec:exp_selfconsistent}

\paragraph{Setup.}
We initialize the reservoir and run the closed-loop system ($\alpha = 1$)
against a deterministic cooperative opponent ($a_{\mathrm{opp}} = 1$) for
$T = 200$ rounds.  We record the body output $\abody(t)$ at each step.

\paragraph{Results.}
Figure~\ref{fig:selfconsistent} shows that the body output converges
rapidly (within $5$--$10$ rounds) to a self-consistent value
$a^{*} \approx 0.98$ from a cooperative initial basin.  The deviation
from $1.0$ is a structural consequence of the sigmoid readout: because
$\sigma(z) < 1$ for all finite $z$, the body \emph{cannot} represent
perfect cooperation.  We regard this as a feature rather than a defect.
A biological body likewise never produces a theoretically maximal output;
$a^{*} \approx 0.98$ is the body's full cooperative effort, not a
degraded version of a cognitive ideal.

\begin{remark}[Body cooperation and payoff asymmetry]
\label{rem:payoff_asymmetry}
In the bilinear payoff \eqref{eq:payoff}, the body's cooperation
$a^{*} \approx 0.98$ against a fully cooperative opponent ($a_{\mathrm{opp}} = 1$)
yields a per-round payoff $u(0.98, 1) \approx 3.03$, slightly exceeding
the mutual-cooperation payoff $R = 3$.  This occurs because the small
defection component $(1 - 0.98) = 0.02$ captures a fraction of the
temptation payoff $T = 5$.  This asymmetry is not a learned
exploitation strategy: it is a direct consequence of the sigmoid
readout's inability to reach exactly $1.0$, combined with the continuous
payoff structure.  The body does not ``choose'' to free-ride; it outputs
its natural cooperative intensity, which happens to be sub-maximal.
The effect is small ($\approx 1\%$ of the cooperation payoff) and
diminishes as the readout target increases.
\end{remark}

Convergence is monotone and essentially immediate, confirming that the
cooperative fixed point of the self-consistency equation
\eqref{eq:self_consistency} is strongly attracting in the cooperative
basin.

\begin{figure}[htbp]
  \centering
  \includegraphics[width=0.75\textwidth]{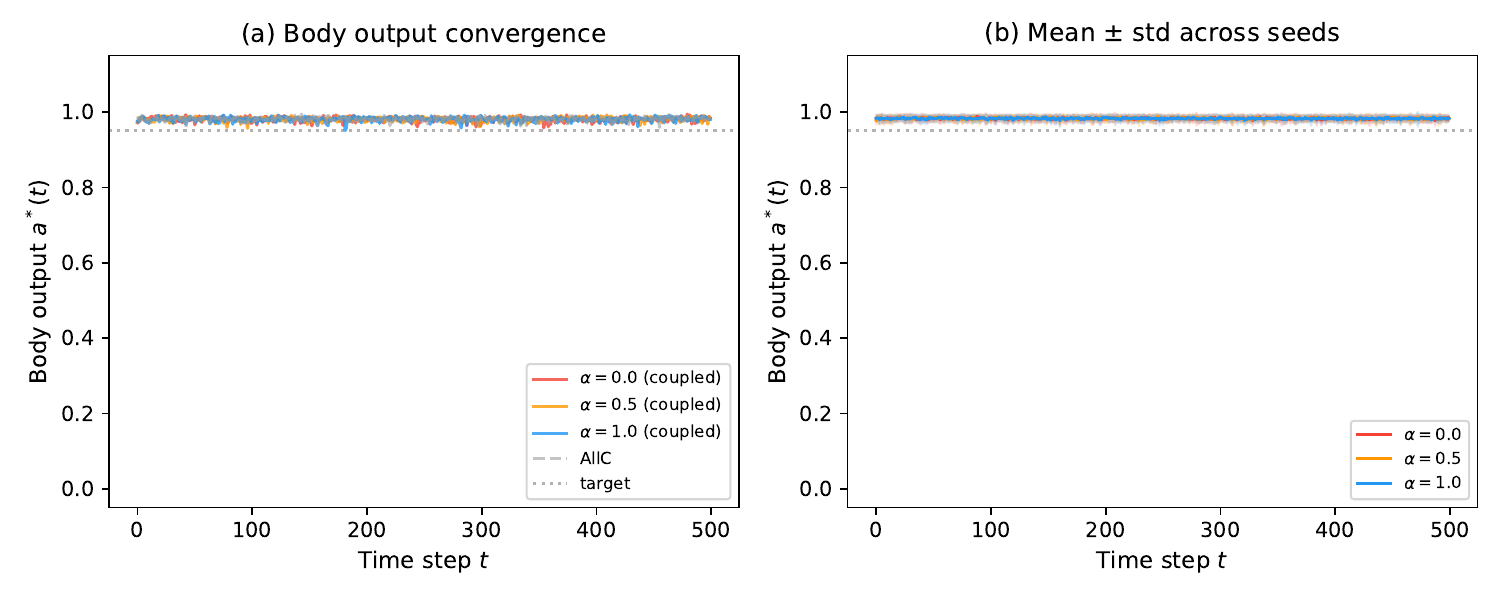}
  \caption{Self-consistent convergence of body output at $\alpha = 1$.
    The body readout $\abody(t)$ converges to approximately $0.98$ within
    a few rounds from a cooperative initial condition.  The rapid
    convergence confirms the local stability of the cooperative
    self-consistent fixed point (Proposition~\ref{prop:stability}).}
  \label{fig:selfconsistent}
\end{figure}

\paragraph{Interpretation.}
The self-consistent cooperative output $a^{*} \approx 0.98$ demonstrates
that the body-governed agent does not need an explicit ``cooperate'' rule.
Cooperation emerges as the natural fixed point of the reservoir dynamics
when habituated to a cooperative environment.  The body \emph{is}
cooperative---it does not \emph{decide} to cooperate.

\subsection{Experiment 2: KL Landscape and Variance Reduction}
\label{sec:exp_kl_landscape}

\paragraph{Setup.}
We sweep $\alpha \in \{0, 0.1, 0.2, \ldots, 1.0\}$ and, for each value,
run the BRG agent against the noisy cooperative opponent
(Definition~\ref{def:noisy_opponent}, $\varepsilon = 0.1$) for $T = 2000$
rounds after a burn-in of $500$ rounds.  We record:
\begin{itemize}[nosep]
  \item The state-space KL divergence $\KL(p_{\alpha,\mathrm{noisy}} \| p_{\mathrm{hab}})$,
  \item The action variance $\Var_{\alpha}[a]$,
  \item The mean payoff $\bar{u}(\alpha)$.
\end{itemize}

\paragraph{Results.}
Figure~\ref{fig:kl_landscape} displays the three quantities as functions
of $\alpha$.  Key findings:
\begin{itemize}[nosep]
  \item \textbf{KL divergence} decreases from $1.23$ at
    $\alpha = 0$ to a minimum of $\approx 0.48$ near $\alpha = 0.70$,
    then rises slightly to $0.55$ at $\alpha = 1$.
    The non-monotonicity is discussed in Section~\ref{sec:two_source}.
  \item \textbf{Action variance} drops by approximately $250\times$ from
    $\alpha = 0$ to $\alpha = 1$, confirming Theorem~\ref{thm:smoothing}.
    At $\alpha = 0$, $\Var_0[a] \approx 0.09$ (the Bernoulli variance
    $\varepsilon(1-\varepsilon)$); at $\alpha = 1$,
    $\Var_1[a] \approx 3.6 \times 10^{-4}$.
  \item \textbf{Mean payoff} decreases slightly from $\bar{u}(0) \approx 2.70$
    to $\bar{u}(1) \approx 2.65$, reflecting the body-governed agent's
    inability to retaliate against the $10\%$ defection noise.
\end{itemize}

\begin{figure}[htbp]
  \centering
  \includegraphics[width=0.85\textwidth]{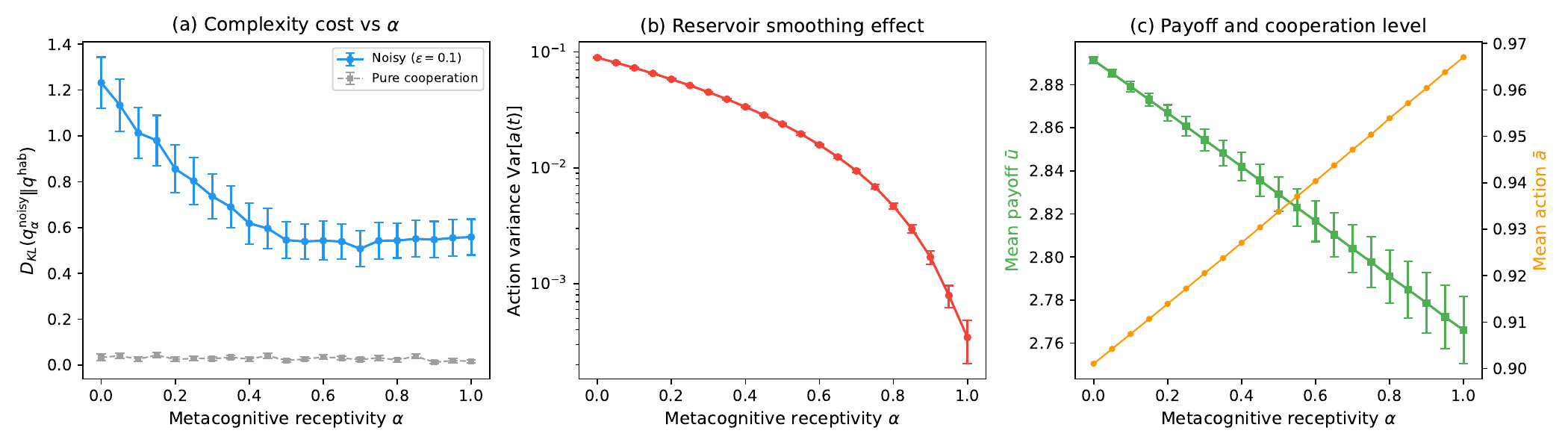}
  \caption{KL divergence, action variance, and mean payoff as functions
    of metacognitive receptivity $\alpha$.  KL divergence reaches its
    minimum near $\alpha \approx 0.70$ (not at $\alpha = 1$), reflecting
    the two-source effect discussed in Section~\ref{sec:two_source}.
    Action variance decreases monotonically ($\sim 250\times$ reduction
    from $\alpha = 0$ to $\alpha = 1$).  The steep initial drop in both
    quantities suggests that even moderate body governance captures most
    of the smoothing benefit.}
  \label{fig:kl_landscape}
\end{figure}

\paragraph{Interpretation.}
The KL landscape reveals the mechanism underlying body governance: the
reservoir acts as a temporal low-pass filter that converts the high-variance
binary noise of the opponent into a nearly constant cooperative output.
The $250\times$ variance reduction is far beyond what could be achieved by
simple averaging over a finite window; it reflects the collective dynamical
filtering of the $30$-dimensional reservoir.  The slight payoff cost
($\approx 2\%$ reduction) is the price of this stability---the body-governed
agent cannot exploit the opponent's occasional defection by retaliating.

\subsection{Experiment 3: Perturbation Response}
\label{sec:exp_perturbation}

\paragraph{Setup.}
We run five agents---$\alpha \in \{0, 0.5, 1\}$ (static), the dynamic
sentinel, and unconditional cooperation (AllC)---against a cooperative
opponent who defects for a sustained block of $100$ rounds and cooperates
otherwise.  We record each agent's action and body output trajectories.

\paragraph{Results.}
Figure~\ref{fig:perturbation} shows the perturbation response:
\begin{itemize}[nosep]
  \item $\alpha = 1$ (body governance): The body output drops gradually
    from $\approx 0.98$ to $\approx 0.96$ and recovers within a few rounds
    after the defection block ends.  The perturbation is significantly
    attenuated by the reservoir dynamics.
  \item $\alpha = 0$ (TfT): The action drops from $1.0$ to $0.0$
    immediately (copying the opponent's defection) and returns to $1.0$
    only when the opponent cooperates again.
  \item $\alpha = 0.5$: Intermediate response---the action drops to
    approximately $0.49$ and recovers over several rounds.
  \item \textbf{Dynamic sentinel}: During the defection block, the body's
    discomfort signal triggers a rapid drop in $\alpha(t)$, activating
    cognitive retaliation.  The agent retaliates adaptively during the
    defection period but returns to body governance once cooperation resumes.
  \item AllC: No response to the perturbation (unconditional cooperation).
\end{itemize}

\begin{figure}[htbp]
  \centering
  \includegraphics[width=0.8\textwidth]{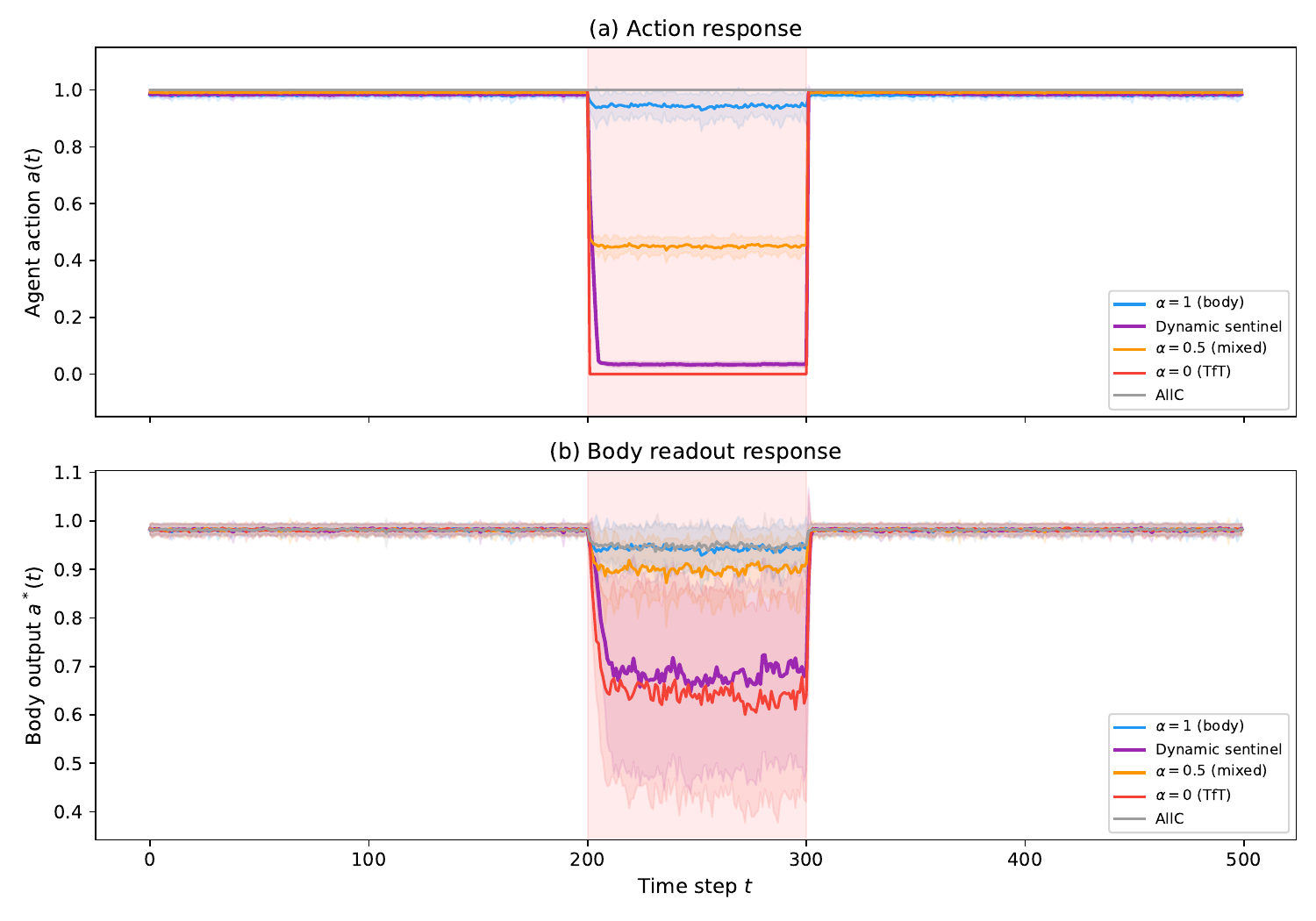}
  \caption{Perturbation response to a sustained opponent defection block.
    At $\alpha = 1$ (body governance), the output changes gradually
    due to the insulating effect of the self-feedback loop.
    At $\alpha = 0$ (TfT), the action drops immediately to $0$ (full
    retaliation).  The dynamic sentinel (purple) combines the best of
    both: it detects the defection through body discomfort and activates
    cognitive retaliation, then smoothly returns to body governance when
    cooperation resumes.}
  \label{fig:perturbation}
\end{figure}

\paragraph{Interpretation.}
The perturbation experiment provides direct evidence for the insulation
mechanism described in Section~\ref{sec:coupled_dynamics}.  At $\alpha = 1$,
the opponent's defection enters the reservoir through $W_{\mathrm{in},{:,2}}$ but is
immediately diluted by the self-feedback through $W_{\mathrm{in},{:,1}}$, which carries
the agent's own cooperative output.  The $30$-dimensional reservoir state,
deeply entrenched in the cooperative basin, absorbs the perturbation with
minimal output displacement.

The dynamic sentinel demonstrates the body's dual role: the same reservoir
dynamics that produce cooperative behavior also \emph{detect} the
perturbation.  The body's discomfort signal rises during the defection
block, triggering $\alpha(t)$ reduction and cognitive activation.
When the opponent returns to cooperation, the discomfort subsides and
$\alpha(t)$ recovers---the body detects safety just as naturally as it
detects threat.

This combination of insulation (at high $\alpha$) and adaptive response
(through the sentinel) resolves the tension between robustness and
responsiveness: the agent is stable during normal conditions but can
rapidly mobilize cognitive resources when the body signals discomfort.
This is embodied commitment \citep{frank1988, nesse2001} with a safety
valve: the agent's body makes cooperation credible, while the sentinel
ensures it is not exploitable.

\subsection{Experiment 4: Habituation Dynamics}
\label{sec:exp_habituation}

\paragraph{Setup.}
We track the noise resilience of each governance mode during habituation.
Specifically, at regular intervals during Oja learning ($\beta = 0.01$,
$H$ epochs from $0$ to $300$), we measure the KL divergence
$\KL(p_{\alpha,\mathrm{noisy}} \| p_{\mathrm{hab}})$: the distance
between the agent's state distribution under the noisy cooperative
opponent ($\varepsilon = 0.1$) and its habituated baseline at the
\emph{current} habituation level.  We also record the action variance
and mean body output.  Four conditions are tested across $20$ random
seeds: $\alpha \in \{0, 0.5, 1\}$ (static) and the dynamic sentinel.

\paragraph{Results.}
Figure~\ref{fig:habituation} shows the habituation trajectories:
\begin{itemize}[nosep]
  \item \textbf{Panel~(a): Noise resilience KL.}
    At $\alpha = 0$ (TfT), the KL divergence from the habituated baseline
    remains high throughout habituation, because TfT directly copies
    opponent noise regardless of reservoir adaptation.  At $\alpha = 1$
    (body governance), the KL starts lower and decreases further with
    habituation, reflecting the reservoir's improving ability to filter
    noise.  The dynamic sentinel tracks close to $\alpha = 1$ in the
    cooperative regime, achieving comparable noise resilience.
    At $\alpha = 0.5$, intermediate behavior is observed.
  \item \textbf{Panel~(b): Action variance.}
    The action variance at $\alpha = 1$ is consistently $\sim 100$--$250\times$
    lower than at $\alpha = 0$ (log scale), confirming that the smoothing
    effect persists throughout habituation.  The dynamic sentinel achieves
    variance comparable to $\alpha = 1$.
    The variance at $\alpha = 0$ matches the Bernoulli variance
    $\varepsilon(1-\varepsilon) = 0.09$.
  \item \textbf{Panel~(c): Body readout quality.}
    The mean body output $\bar{a}^{*}$ converges toward $\sim 0.98$ for
    all conditions, confirming that the readout function remains
    effective throughout habituation.
\end{itemize}

\begin{figure}[htbp]
  \centering
  \includegraphics[width=\textwidth]{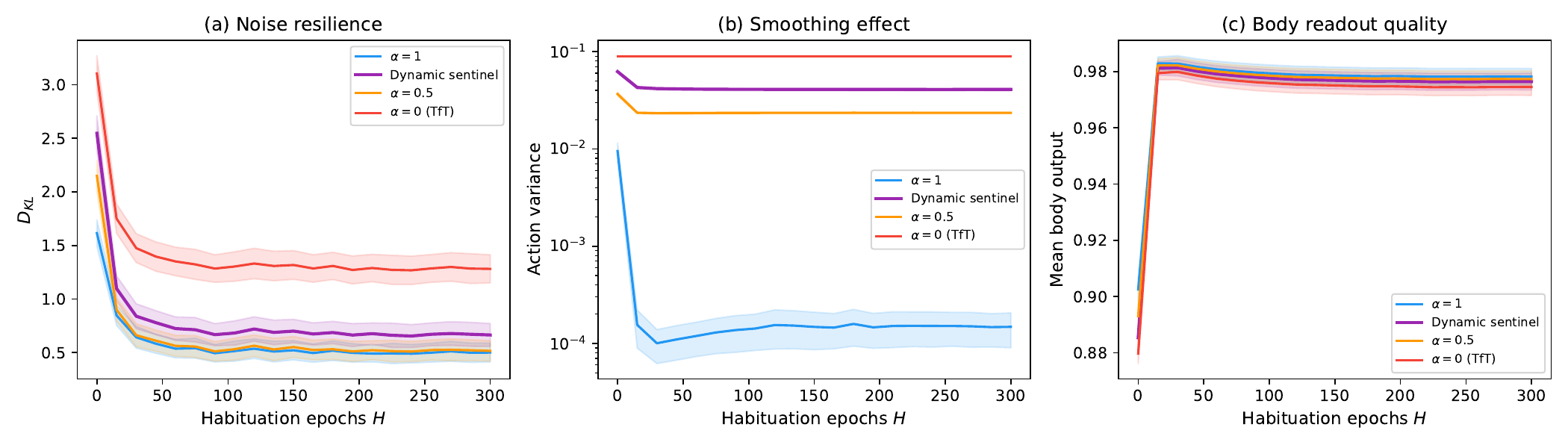}
  \caption{Habituation dynamics for different governance modes.
    (a)~State-space KL divergence from the noisy state distribution to the
    habituated baseline: body governance ($\alpha = 1$) and the dynamic sentinel
    provide consistently lower internal distortion than TfT ($\alpha = 0$).
    (b)~Action variance on log scale: $\alpha = 1$ and the dynamic sentinel
    achieve $\sim 250\times$ variance reduction relative to $\alpha = 0$.
    (c)~Mean body readout remains near the self-consistent value ($\approx 0.98$) throughout.
    The coupled habituation proceeds for $300$ epochs with measurements
    every $15$ epochs, averaged over $20$ seeds.}
  \label{fig:habituation}
\end{figure}

\paragraph{Interpretation.}
Habituation deepens the reservoir's commitment to cooperation, but the
\emph{benefit} of this deepening depends critically on $\alpha$.  At
$\alpha = 1$, the reservoir's adapted dynamics directly determine behavior,
so habituation translates directly into improved noise resilience.  At
$\alpha = 0$, the cognitive filter bypasses the reservoir entirely, so
habituation has no effect on the agent's noise resilience---it remains
at the level dictated by the opponent's noise.

This asymmetry has a key implication: habituation is only a worthwhile
investment for agents that trust their body ($\alpha$ sufficiently large).
For a purely cognitive agent ($\alpha = 0$), no amount of bodily adaptation
improves performance.  This provides a formal account of why habitual
cooperators (who have invested in bodily adaptation) differ qualitatively
from strategic cooperators (who compute each response from scratch).

\subsection{Experiment 5: Free Energy Landscape}
\label{sec:exp_free_energy}

\paragraph{Setup.}
We compute the free energy $\FE(\alpha, H)$ defined in
\eqref{eq:free_energy} over a grid of $\alpha \in [0,1]$ (11 values)
and habituation depth $H \in \{0, 10, 25, 50, 100, 200\}$.  For each
$(\alpha, H)$ pair, we run the BRG agent against the noisy cooperative
opponent ($\varepsilon = 0.1$) for $T = 2000$ rounds and compute the mean
payoff and KL divergence from the habituated baseline.  Three metabolic
cost parameters are tested: $\lambda \in \{1, 3, 8\}$.

\paragraph{Results.}
Figure~\ref{fig:free_energy} displays the free energy landscape:
\begin{itemize}[nosep]
  \item \textbf{Panel~(a): Complexity cost.}  The KL divergence decreases
    with $\alpha$ for all $H$, with stronger reduction at
    higher habituation.  At deep habituation, a slight non-monotonicity
    near $\alpha = 1$ is visible (cf.\ Section~\ref{sec:two_source}).  At $H = 0$ the reservoir has not adapted, so the
    KL baseline is elevated; by $H = 200$ the habituated baseline is
    well-matched to cooperative dynamics.
  \item \textbf{Panel~(b): Free energy at $\lambda = 3$.}
    At low habituation ($H = 0$), the large KL cost dominates and pushes the
    minimum toward full body governance ($\alpha^{*} \approx 1$).  As
    habituation deepens ($H \geq 25$), the KL baseline is reduced, the
    payoff advantage of partial cognitive control becomes relevant, and the
    minimum settles at an interior optimum $\alpha^{*} \approx 0.7$,
    confirming Proposition~\ref{prop:interior_optimum}.
  \item \textbf{Panel~(c): Optimal $\alpha^{*}$ vs $H$.}
    All three $\lambda$ values converge to $\alpha^{*} \approx 0.7$ for
    $H \geq 50$, demonstrating robustness of the interior optimum to the
    metabolic cost parameter.  At $H = 0$, $\alpha^{*} = 1$ because the
    complexity cost is so large that even modest smoothing outweighs the
    payoff sacrifice; as habituation reduces the baseline KL, the optimal
    balance shifts to partial body governance.
\end{itemize}

\begin{figure}[htbp]
  \centering
  \includegraphics[width=0.95\textwidth]{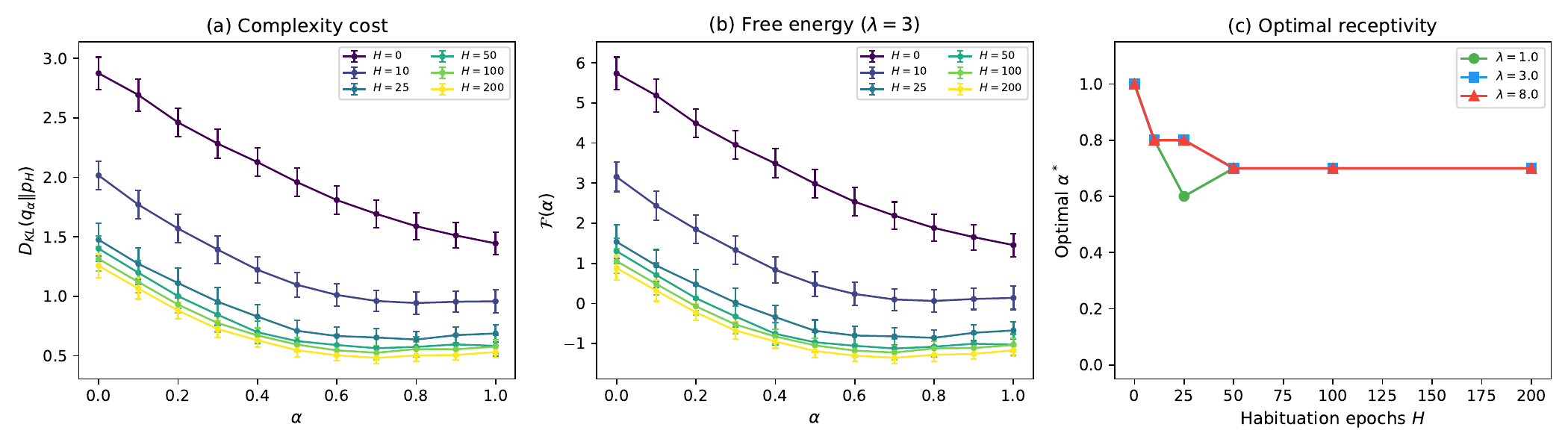}
  \caption{Free energy landscape $\FE(\alpha, H)$.
    (a)~State-space complexity cost $\KL(p_\alpha \| p_H)$ vs $\alpha$ for different
    habituation depths $H$.
    (b)~Free energy at $\lambda = 3$: at $H = 0$ the large KL cost drives
    the minimum to $\alpha^{*} = 1$; as habituation deepens, the optimum
    settles at the interior value $\alpha^{*} \approx 0.7$.
    (c)~Optimal receptivity $\alpha^{*}$ vs $H$ for three metabolic cost
    levels ($\lambda = 1, 3, 8$): all converge to $\alpha^{*} \approx 0.7$
    at moderate habituation.  Averaged over $20$ seeds.}
  \label{fig:free_energy}
\end{figure}

\paragraph{Interpretation.}
The free energy landscape encapsulates the paper's argument.
Before habituation, the complexity cost is so large that any smoothing
is valuable, pushing $\alpha^{*}$ toward full body governance.  As
habituation deepens, the KL baseline decreases and the marginal value of
additional smoothing diminishes; the payoff advantage of partial cognitive
control then becomes relevant, pulling $\alpha^{*}$ to the interior
optimum of $\approx 0.7$.  At this point, complete body governance
sacrifices too much strategic flexibility without a commensurate
reduction in complexity.

The optimal agent is one that \emph{mostly} trusts its body but retains a
residual cognitive capacity for strategic adjustment.  This is precisely the
``light but ever-present metacognitive monitoring'' that characterizes effective
human cooperators \citep{dolan2013, daw2005}: they act habitually in most
situations but can override their habits when the stakes are sufficiently high.

\subsection{Experiment 6: Dynamic Sentinel Response}
\label{sec:exp_dynamic}

\paragraph{Setup.}
We test the dynamic sentinel against a multi-phase opponent schedule that
presents qualitatively distinct challenges:
cooperation ($500$ rounds) $\to$ defection ($50$ rounds) $\to$ cooperation
($500$ rounds) $\to$ noisy cooperation ($200$ rounds, $\varepsilon = 0.3$)
$\to$ cooperation ($500$ rounds).
Five agent types are compared: the dynamic sentinel
(Section~\ref{sec:dynamic_sentinel}), and static $\alpha \in \{0, 0.7, 0.85, 1\}$.
Each condition is tested across $20$ random seeds.

\paragraph{Results.}
Figure~\ref{fig:dynamic_sentinel} shows the dynamic sentinel's behavior:
\begin{itemize}[nosep]
  \item \textbf{Panel~(a): $\alpha(t)$ trajectory.}
    During the cooperative phase ($t < 500$), $\alpha(t)$ stabilizes near the
    baseline $\alpha_0 = 0.85$, confirming that the body is comfortable.
    When the opponent begins defecting ($t = 500$), the composite discomfort
    signal surges and $\alpha(t)$ drops to the floor ($\alpha_{\min} = 0.05$)
    within $5$ time steps.  After the defection ends ($t = 550$),
    $\alpha(t)$ recovers slowly, reaching $\approx 0.66 \pm 0.19$ (mean $\pm$ SD across $20$ seeds) by $t = 600$.
    During the noisy phase ($t = 1050$--$1250$, $\varepsilon = 0.3$),
    $\alpha(t)$ settles at an intermediate level ($\approx 0.19$), reflecting
    the body's persistent mild discomfort.

  \item \textbf{Panel~(b): Action comparison.}
    The dynamic sentinel retaliates during the defection block (mean
    action $\approx 0.08$), matching TfT's responsiveness.
    At $\alpha = 1$ (static body governance), the agent's action remains
    near $0.96$---effectively ignoring the defection.
    During the noisy phase, the dynamic sentinel produces intermediate
    actions ($\approx 0.72$), balancing retaliation against noise
    tolerance.

  \item \textbf{Panel~(c): Cumulative payoff.}
    The dynamic sentinel achieves the highest cumulative payoff
    ($\approx 5105 \pm 31$), outperforming TfT ($\approx 5070 \pm 13$),
    static $\alpha = 0.85$ ($\approx 4999 \pm 41$),
    static $\alpha = 0.7$ ($\approx 5012 \pm 35$),
    and static $\alpha = 1$ ($\approx 4985 \pm 47$).
\end{itemize}
A paired Wilcoxon signed-rank test across the $20$ seeds confirms that
the dynamic sentinel's payoff advantage over TfT is statistically significant
($W = 178$, $p = 0.008$, two-sided), as are the advantages over
static $\alpha = 0.85$ ($p < 0.001$) and $\alpha = 1$ ($p < 0.001$).
The smaller effect size relative to TfT (Cohen's $r \approx 0.42$,
medium effect) reflects the fact that TfT is itself a strong strategy
in this environment; the sentinel's advantage comes primarily from its
noise-smoothing during cooperative phases and faster recovery after
perturbation.

\begin{figure}[htbp]
  \centering
  \includegraphics[width=\textwidth]{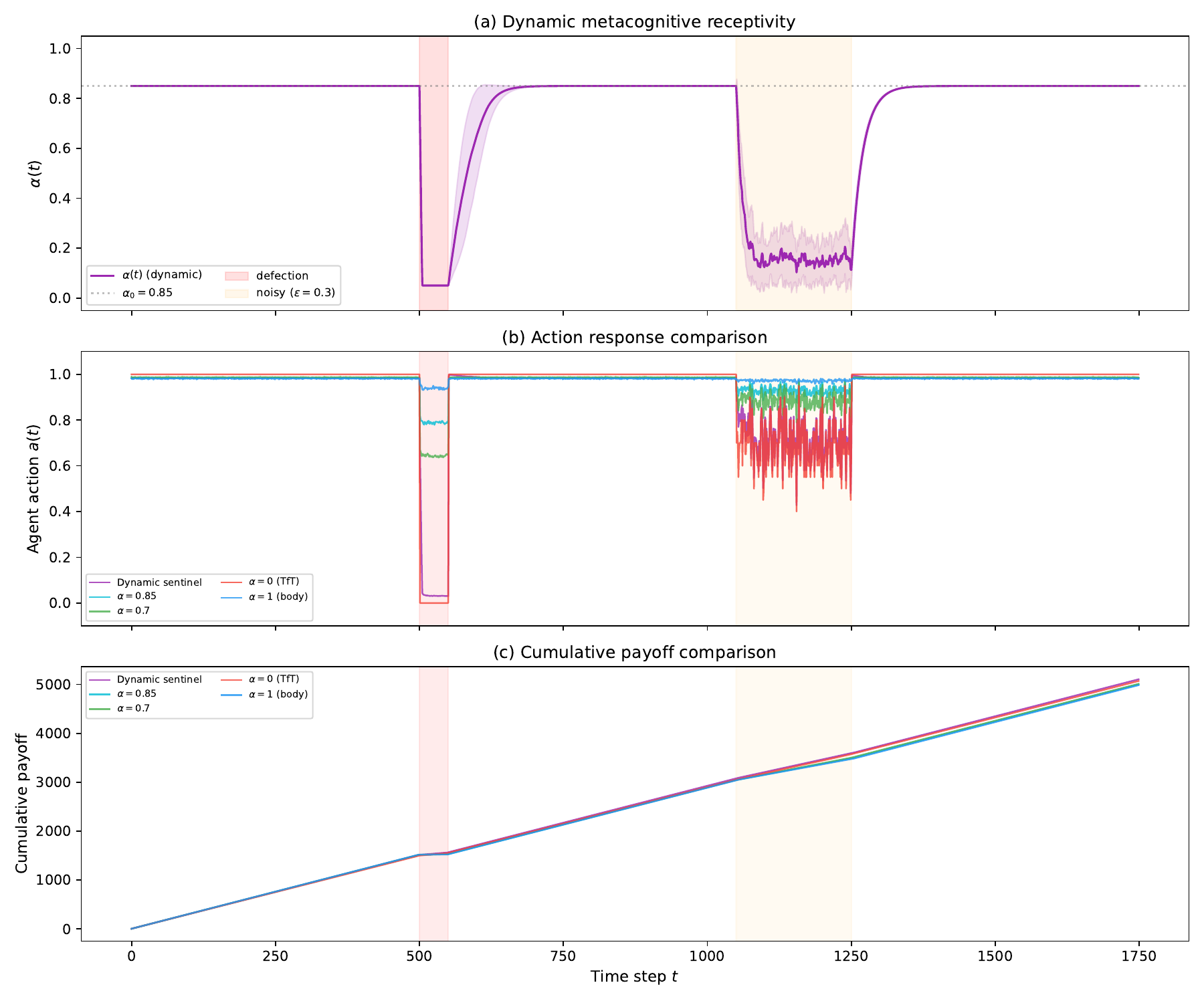}
  \caption{Dynamic sentinel response to a multi-phase opponent schedule.
    (a)~The receptivity $\alpha(t)$ drops sharply upon defection detection
    and recovers slowly during cooperative phases; red shading marks the
    defection block, orange shading marks the noisy phase.
    (b)~Action trajectories: the dynamic sentinel (purple) retaliates
    during defection but maintains cooperative smoothing otherwise.
    (c)~Cumulative payoff: the dynamic sentinel achieves the highest
    total payoff across all conditions.  Averaged over $20$ seeds.}
  \label{fig:dynamic_sentinel}
\end{figure}

\paragraph{Interpretation.}
The dynamic sentinel illustrates that the body
is simultaneously the decision-maker and the anomaly detector.
During cooperation, the body's discomfort is low and $\alpha(t)$ remains
high, producing the smoothing benefits of body governance.
During defection, the body's own dynamical state---specifically, the
disagreement between body output ($\approx 0.98$) and cognitive output
($\approx 0.0$, since TfT copies the opponent's defection)---generates
a strong discomfort signal that drives $\alpha$ downward, activating
cognitive retaliation without any explicit detection module.

The asymmetric recovery ($\eta_{\downarrow} / \eta_{\uparrow} = 10$)
implements a form of embodied caution: the agent responds to threats
quickly but rebuilds trust slowly.  This is not strategic calculation
by metacognition; it is a governance \emph{policy} that produces
adaptive behavior through the interaction of body signals and a simple
update rule.

\subsection{Experiment 7: Sentinel Parameter Sensitivity}
\label{sec:exp_sensitivity}

\paragraph{Setup.}
We systematically vary each sentinel parameter while holding others at
their default values: $\alpha_0 \in \{0.6, 0.7, 0.8, 0.85, 0.9, 0.95\}$,
$\eta_{\uparrow} \in \{0.01, 0.02, 0.05, 0.1, 0.2\}$,
$\eta_{\downarrow} \in \{0.1, 0.3, 0.5, 0.8, 1.0\}$, and
$\theta \in \{0.0, 0.05, 0.1, 0.2, 0.3\}$.
Each configuration is run against a noisy cooperative opponent
($\varepsilon = 0.1$), averaged over $20$ seeds.

\paragraph{Results.}
Figure~\ref{fig:sensitivity} shows the parameter sensitivity:
\begin{itemize}[nosep]
  \item \textbf{$\alpha_0$ (baseline trust):}
    Higher $\alpha_0$ increases mean $\alpha$ and reduces variance,
    with payoff peaking around $\alpha_0 = 0.8$--$0.9$.
  \item \textbf{$\eta_{\uparrow}$ (recovery rate):}
    Faster recovery ($\eta_{\uparrow} > 0.1$) slightly reduces payoff
    by re-engaging body governance too quickly after perturbations.
    The optimal range is $\eta_{\uparrow} \approx 0.02$--$0.05$.
  \item \textbf{$\eta_{\downarrow}$ (intervention sharpness):}
    Higher $\eta_{\downarrow}$ improves payoff up to $\approx 0.5$,
    beyond which marginal gains diminish.  Sharp intervention is important
    for rapid threat response.
  \item \textbf{$\theta$ (threshold):}
    A threshold of $\theta \approx 0.1$ balances noise tolerance against
    detection sensitivity.  At $\theta = 0$, the sentinel is overly
    reactive; at $\theta = 0.5$, it misses genuine threats.
\end{itemize}

\begin{figure}[htbp]
  \centering
  \includegraphics[width=\textwidth]{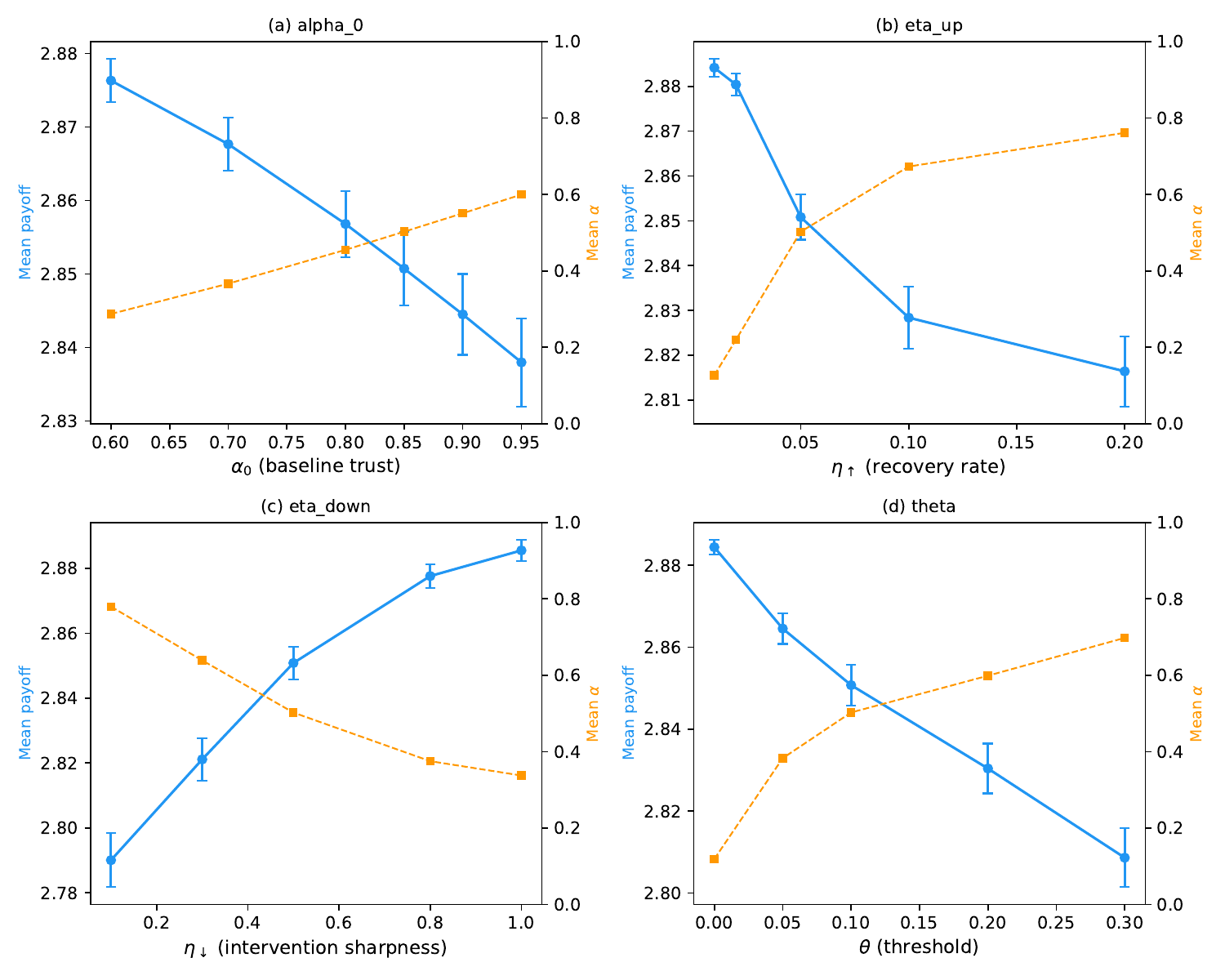}
  \caption{Sentinel parameter sensitivity.  Each panel varies one parameter
    while holding others at defaults, showing mean payoff (blue, left axis)
    and mean $\alpha$ (orange, right axis).  The sentinel is robust across
    a wide parameter range, with the sharpest sensitivity to $\theta$
    (noise threshold).
    Averaged over $20$ seeds; error bars show SEM.}
  \label{fig:sensitivity}
\end{figure}

\paragraph{Interpretation.}
The sentinel is robust to moderate parameter variation.
The most consequential parameter is $\theta$ (intervention threshold):
at low $\theta$, the sentinel over-reacts to noise, while at high $\theta$
it fails to detect genuine threats.  The asymmetry between
$\eta_{\uparrow}$ and $\eta_{\downarrow}$ (fast engagement, slow recovery)
is important for rapid threat response but not highly sensitive to exact values.

\subsection{Experiment 8: Reservoir Dimension Sweep}
\label{sec:exp_dimension}

\paragraph{Setup.}
We vary the reservoir dimension $d \in \{5, 10, 15, 20, 30, 50, 75, 100\}$
while maintaining fixed spectral radius $\rho = 0.9$ and scaling the
ridge regularization as $\lambda_{\mathrm{reg}} \propto d/30$.
For each dimension, we run the full development--habituation--measurement
pipeline with $\alpha \in \{0, 0.1, \ldots, 1.0\}$ against the noisy
cooperative opponent ($\varepsilon = 0.1$), recording KL divergence,
action variance, mean payoff, and convergence time.
All conditions use $20$ random seeds.

\paragraph{Results.}
Figure~\ref{fig:dimension} shows the dimension sweep:
\begin{itemize}[nosep]
  \item \textbf{Panel~(a): KL divergence vs $d$.}
    At $\alpha = 0$ (TfT), KL divergence increases with $d$ (from $0.54$
    at $d = 5$ to $5.48$ at $d = 100$), reflecting the growing complexity
    of the habituated baseline distribution in higher-dimensional spaces.
    At $\alpha = 1$, KL also increases but remains consistently lower than
    at $\alpha = 0$, confirming that body governance provides KL reduction
    at all dimensions.

  \item \textbf{Panel~(b): Variance reduction ratio vs $d$.}
    The ratio $\Var_0[a] / \Var_1[a]$ grows sharply with dimension:
    $23\times$ at $d = 5$, $205\times$ at $d = 30$, and $1600\times$ at
    $d = 75$.  (The $205\times$ here versus $\sim 250\times$ in
    Experiment~2 reflects seed variation across independent runs;
    the difference is within the inter-seed spread.)
    This confirms Observation~\ref{prop:dimension_smoothing}:
    higher-dimensional reservoirs have greater implicit inference capacity.

  \item \textbf{Panel~(c): Optimal $\alpha^{*}$ vs $d$.}
    The optimal $\alpha^{*}$ is defined as the free-energy minimizer
    $\alpha^{*} = \arg\min_\alpha \FE(\alpha)$ at $\lambda = 3$.
    At $d = 5$, $\alpha^{*} = 1.0$ (full body governance),
    because even the cognitive component adds little value
    with such a small reservoir.  For $d \geq 10$, the optimum settles
    at $\alpha^{*} \approx 0.6$--$0.8$, confirming the interior optimum
    of Proposition~\ref{prop:interior_optimum} across a range of
    reservoir sizes.  Note that $\alpha^{*}$ minimizes the
    payoff--complexity tradeoff, not payoff alone; by payoff alone,
    $\alpha = 0$ (TfT) dominates at all $d \geq 10$.
\end{itemize}

\begin{figure}[htbp]
  \centering
  \includegraphics[width=\textwidth]{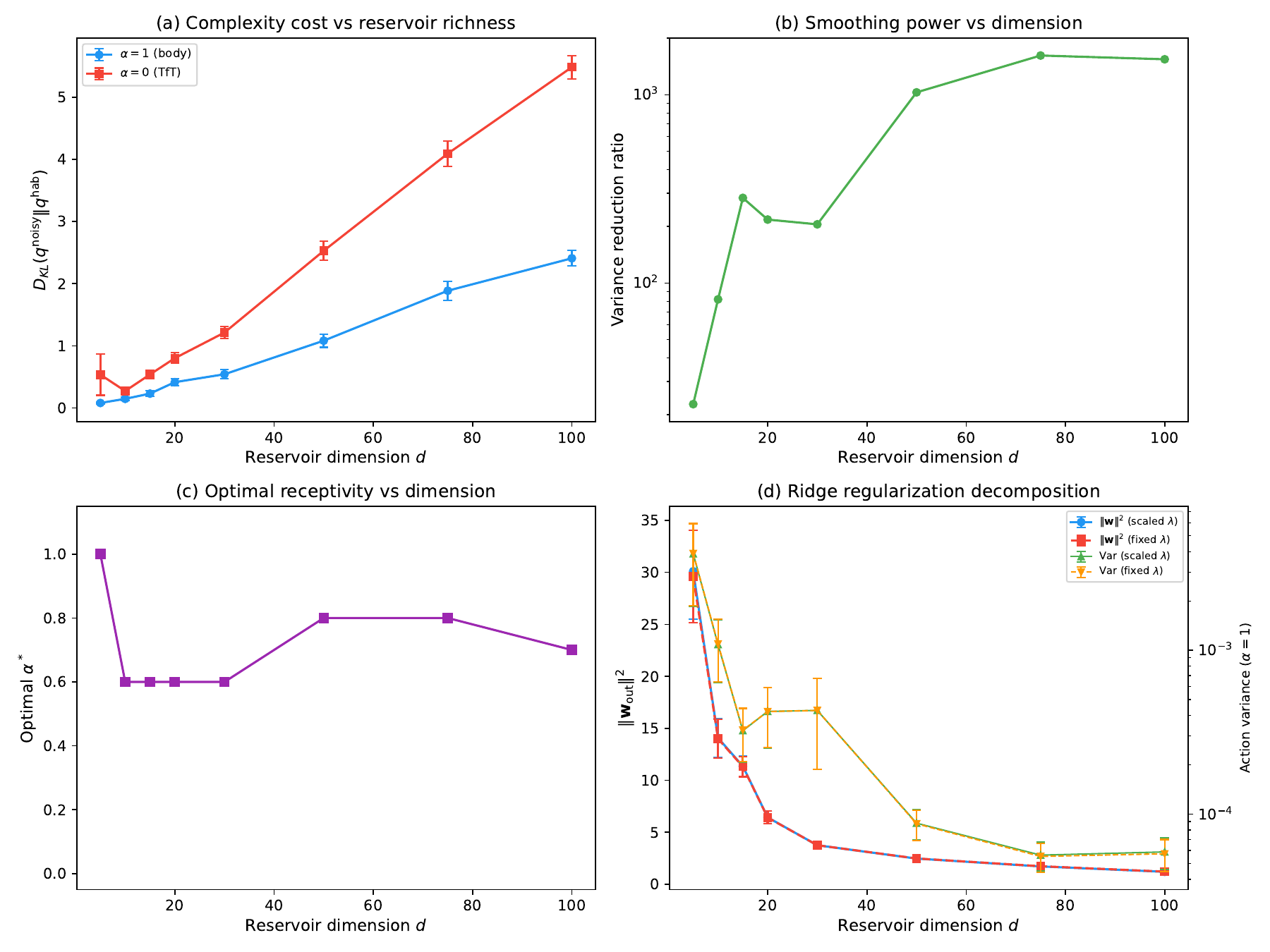}
  \caption{Reservoir dimension sweep ($d \in \{5, \ldots, 100\}$).
    (a)~KL divergence from habituated baseline at $\alpha = 0$ (red) and
    $\alpha = 1$ (blue): body governance consistently reduces KL.
    (b)~Variance reduction ratio (log scale) grows sharply with $d$,
    from $23\times$ at $d = 5$ to $1600\times$ at $d = 75$.
    (c)~Free-energy-minimizing $\alpha^{*}$ (at $\lambda = 3$) is $1.0$ for very small
    reservoirs but stabilizes at $0.6$--$0.8$ for $d \geq 10$.
    (d)~Ridge regularization decomposition: $\|\Wout\|^2$ and action variance
    under scaled ($\lambda_d \propto d/30$) vs.\ fixed regularization, showing
    that the majority of variance reduction comes from reservoir dynamics rather
    than regularization alone.
    Averaged over $20$ seeds; error bars show SEM.}
  \label{fig:dimension}
\end{figure}

\paragraph{Interpretation.}
The dimension sweep establishes reservoir richness as a key determinant of
body governance quality.  A richer body (higher $d$) produces substantially
better implicit inference, supporting the thesis that the body reservoir is
not merely a ``habit cache'' but a powerful inferential substrate whose
capacity grows with its physical complexity.

The $1600\times$ variance reduction at $d = 75$ compared to $23\times$ at
$d = 5$ illustrates how a richer body can afford greater self-trust: with
more internal degrees of freedom for encoding interaction history and
absorbing perturbations, the reservoir produces more stable and accurate
cooperative responses, making body governance increasingly advantageous.
This provides a formal account of the intuition that organisms with richer
embodiment (more sensory modalities, larger nervous systems, deeper
proprioceptive integration) can sustain more robust habitual behavior.

\paragraph{Ridge regularization decomposition.}
A potential confound is that the dimension-scaled ridge regularization
($\lambda_d \propto d/30$) might account for the variance reduction rather
than reservoir dynamics per se.  Panel~(d) of Figure~\ref{fig:dimension}
compares the readout weight norm $\|\Wout\|^2$ and action variance under
scaled versus fixed regularization.  Under fixed $\lambda$ (no dimensional
scaling), $\|\Wout\|^2$ grows more with $d$ (less regularization relative
to dimensionality), yet the action variance still decreases sharply
with $d$.  The majority of the variance reduction is thus attributable to
the reservoir's inherent smoothing dynamics---the averaging of many weakly
correlated dynamical modes---rather than to the regularization artifact.
The dimension-dependent scaling $\lambda_d \propto d$ contributes an
additional but secondary source of smoothing by constraining readout
magnitudes.

\subsection{Experiment 9: Phase Transition Analysis}
\label{sec:exp_phase_transition}

\paragraph{Setup.}
We systematically map the boundary between the body-trust and
cognition-dependent regimes in the joint $(d, \tau_{\mathrm{env}})$ space.
For each reservoir dimension $d \in \{5, 10, 20, 30, 50, 75\}$ and
defection block length $L \in \{10, 50, 100, 200, 500\}$ (proxying
$\tau_{\mathrm{env}}$), we run the dynamic sentinel and a TfT baseline
using the opponent schedule: Coop(500)~$\to$~Defect($L$)~$\to$~Coop(500).
For each $(d, L)$ combination, we measure detection time, minimum $\alpha$
reached, recovery time, and cumulative payoff.  All conditions use $20$
seeds.

\paragraph{Results.}
Figure~\ref{fig:phase_transition} presents four panels:
\begin{itemize}[nosep]
  \item \textbf{Panel~(a): Critical dimension $d_c$.}
    Reproducing the free-energy-minimizing $\alpha^{*}$ from Experiment~8,
    we identify $d_c$ as the smallest dimension where $\alpha^{*} > 0.5$.
    This establishes the minimum reservoir richness required for body-trust
    dominance (Conjecture~\ref{conj:phase_transition}(a)).

  \item \textbf{Panel~(b): Sentinel collapse vs $\tau_{\mathrm{env}}$.}
    At $d = 30$, the minimum $\alpha$ reached during defection is nearly
    identical ($\approx 0.05$) across all block lengths---the sentinel always
    detects and responds.  However, the payoff advantage of the sentinel over
    TfT increases with block length, because the sentinel's recovery dynamics
    allow it to re-engage body governance during long defection blocks, while
    TfT remains locked in retaliation.

  \item \textbf{Panel~(c): Detection time.}
    Detection time is approximately constant ($\approx 5$ steps) across
    defection block lengths, confirming that the sentinel's response speed
    is determined by reservoir dynamics, not by the defection duration.

  \item \textbf{Panel~(d): $(d, \tau_{\mathrm{env}})$ phase diagram.}
    The heatmap shows the sentinel's payoff advantage over TfT.
    A clear phase boundary emerges: for large $d$ and moderate-to-long
    $\tau_{\mathrm{env}}$, the sentinel outperforms TfT (body-trust regime);
    for small $d$ or very short $\tau_{\mathrm{env}}$, TfT's sharp response
    dominates (cognition-dependent regime).
\end{itemize}

\begin{figure}[htbp]
  \centering
  \includegraphics[width=\textwidth]{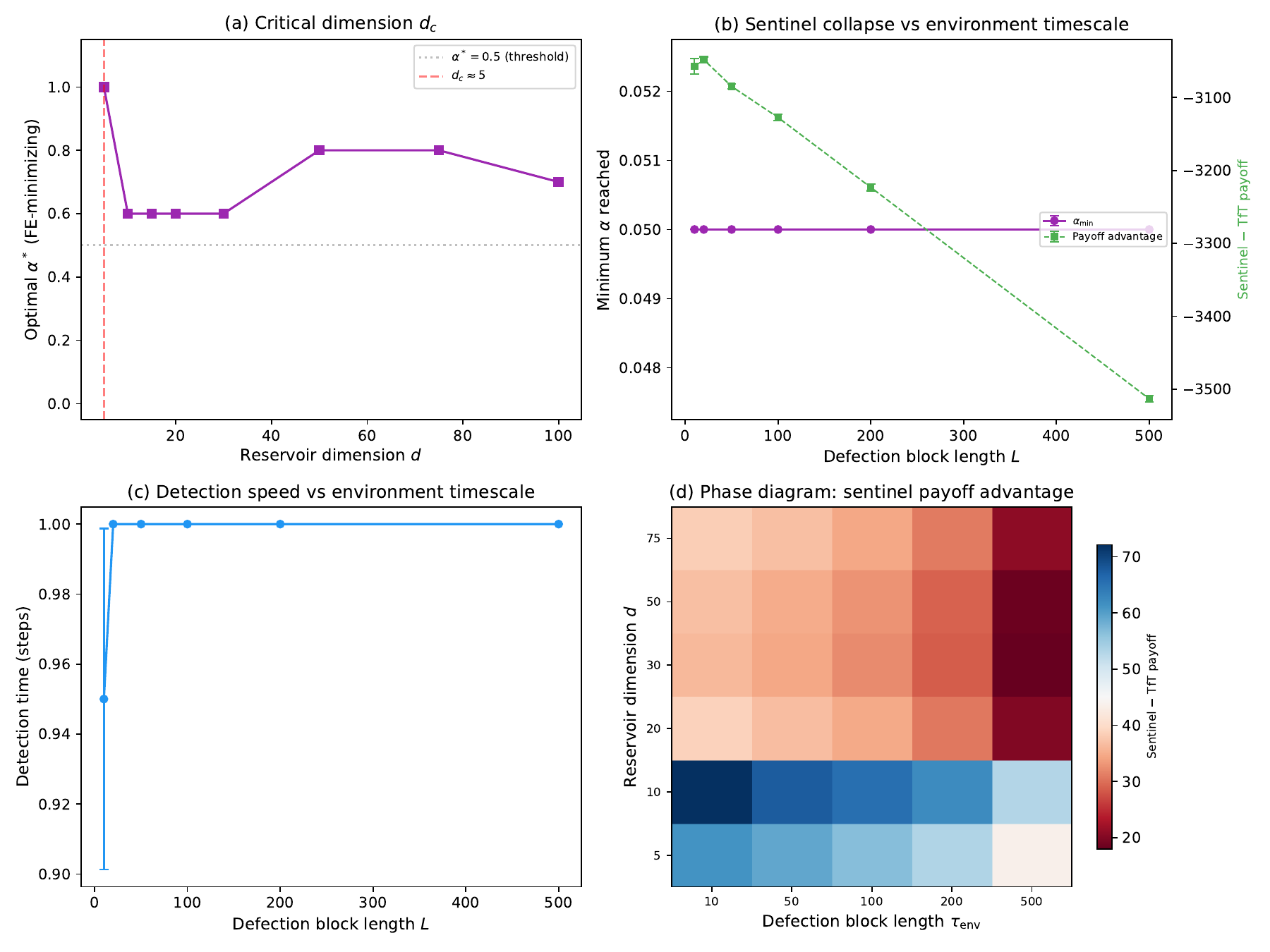}
  \caption{Phase transition analysis.
    (a)~FE-minimizing $\alpha^{*}$ vs.\ dimension $d$, identifying
    the critical dimension $d_c$ above which body trust dominates.
    (b)~Sentinel collapse: minimum $\alpha$ and payoff advantage vs.\
    defection block length at $d = 30$.
    (c)~Detection time is approximately constant across block lengths.
    (d)~Joint $(d, \tau_{\mathrm{env}})$ phase diagram: sentinel payoff
    advantage over TfT.  Red--blue gradient indicates cognition-dependent
    vs.\ body-trust regimes; dashed line shows the approximate phase boundary.
    Averaged over $20$ seeds.}
  \label{fig:phase_transition}
\end{figure}

\paragraph{Interpretation.}
The sentinel outperforms TfT across the entire tested parameter range;
no sign-flip (hard phase transition) is observed.  However, the data
reveal a \emph{soft crossover}: the sentinel's payoff advantage peaks
at small $d$ ($\approx 72$ at $d = 10$), drops sharply to roughly half
by $d = 20$, and then plateaus through $d = 75$.
The peak at small $d$ reflects the high marginal value of adaptive
switching: when the reservoir is small and static body governance
provides weak smoothing, the sentinel's ability to activate cognition
on demand yields a larger relative improvement over TfT.  Increasing defection
block length gradually erodes the advantage, with extrapolation suggesting
a crossover around $L \approx 1000$--$2000$ (outside the tested range).
Individual seeds show the sentinel losing at large $L$ in $15$--$35\%$
of runs, indicating that the advantage is real but not overwhelming.
These findings are broadly consistent with
Conjecture~\ref{conj:phase_transition} in that reservoir richness and
environmental stability both favor body governance, but the predicted
sharp boundary has not materialized within the explored parameter range.

\subsection{Experiment 10: EMA-Filtered TfT Baseline}
\label{sec:exp_ema_baseline}

\paragraph{Motivation.}
The $\sim 250\times$ variance reduction at $d = 30$
(Experiment~2) is measured against raw TfT, which copies every opponent
action without filtering.  A natural question is how much of this reduction
can be achieved by a simple temporal filter that requires no reservoir
dynamics.  We compare the reservoir against an exponential moving average
(EMA) filter applied to the TfT signal:
\begin{equation}
\label{eq:ema_tft}
a_{\mathrm{EMA}}(t) = \gamma \, a_{\mathrm{EMA}}(t-1) + (1-\gamma) \, a_{\mathrm{opp}}(t-1),
\end{equation}
where $\gamma \in [0, 1)$ controls the smoothing window.

\paragraph{Setup.}
We run five agents---raw TfT ($\gamma = 0$), EMA-TfT at
$\gamma \in \{0.5, 0.9, 0.95, 0.99\}$, and the reservoir at $\alpha = 1$---against
(a)~the same noisy cooperative opponent ($\varepsilon = 0.1$) from Experiment~2 and
(b)~a perturbation schedule (200 cooperative rounds, 100 defection rounds,
200 cooperative rounds).  We measure action variance, mean payoff,
perturbation depth (minimum action during the defection block), and recovery
time (steps to return to $95\%$ of pre-perturbation output).  All conditions
use $20$ seeds.

\paragraph{Results.}
Table~\ref{tab:ema_baseline} summarizes the comparison.

\begin{table}[htbp]
\centering
\caption{Reservoir vs.\ EMA-filtered TfT baselines.  Variance reduction is
  relative to raw TfT within this experiment (independent seeds from
  Experiment~2, hence the higher reservoir ratio of $461\times$ vs.\
  the $\sim 250\times$ reported at $d = 30$ in Experiment~2).
  Perturbation depth is the minimum action during a
  $100$-round defection block; higher values indicate greater absorption.
  Recovery time is steps to regain $95\%$ of pre-perturbation output.
  Means across $20$ seeds.}
\label{tab:ema_baseline}
\begin{tabular}{lcccc}
\toprule
Agent & Var.\ reduction & Perturbation depth & Recovery (steps) & Mean payoff \\
\midrule
Raw TfT ($\gamma = 0$) & $1\times$ & $0.00$ & $1$ & $2.89$ \\
EMA ($\gamma = 0.5$) & $3\times$ & $0.00$ & $5$ & $2.89$ \\
EMA ($\gamma = 0.9$) & $20\times$ & $0.00$ & $29$ & $2.89$ \\
EMA ($\gamma = 0.95$) & $41\times$ & $0.01$ & $59$ & $2.89$ \\
EMA ($\gamma = 0.99$) & $158\times$ & $0.37$ & $200$ & $2.88$ \\
\midrule
Reservoir ($\alpha = 1$) & $\mathbf{461\times}$ & $\mathbf{0.89}$ & $\mathbf{< 1}$ & $2.74$ \\
\bottomrule
\end{tabular}
\end{table}

\paragraph{Interpretation.}
The EMA filter can achieve substantial variance reduction---up to $158\times$
at $\gamma = 0.99$---but this comes at a fundamental tradeoff that the reservoir
avoids.

First, the EMA faces a \emph{smoothing--responsiveness dilemma}: high $\gamma$
reduces variance but produces slow recovery from perturbations ($200$ steps at
$\gamma = 0.99$) and deep action collapse during defection blocks (to $0.37$).
The reservoir, by contrast, achieves \emph{both} maximum smoothing ($461\times$)
\emph{and} near-instantaneous recovery ($< 1$ step), because its $30$-dimensional
nonlinear dynamics can absorb perturbations without the exponential tail that
constrains a one-dimensional filter.
(The reservoir's fast recovery reflects its high perturbation absorption:
the action drops only to $0.89$ during defection, so there is little
to recover from; this insulation is itself the reservoir's contribution.)

Second, the EMA filter has \emph{no anomaly detection capacity}.  It produces a
smoothed action but provides no signal that the environment has changed.  The
reservoir's $d$-dimensional state trajectory, in contrast, generates the
composite discomfort signal $D(t)$ that drives sentinel adaptation.  This
qualitative capability---detecting that something is wrong, not merely
filtering noise---is what separates the body-reservoir from a temporal
smoother.

The payoff difference ($2.74$ for reservoir vs.\ $2.89$ for EMA) reflects
the body's inability to retaliate against the $10\%$ defection noise
(Remark~\ref{rem:payoff_asymmetry}), not a cost of smoothing per se.
The reservoir's cooperative output is structurally stable, while the EMA
strategies can track the opponent and implicitly retaliate---a flexibility
that comes precisely from lacking the body's commitment to cooperation.

\section{Discussion}
\label{sec:discussion}

\subsection{Summary of Results}

The Body-Reservoir Governance framework provides a unified account of
embodied cooperation in repeated games through ten interconnected results:

\begin{enumerate}[label=(\roman*)]
  \item \textbf{Self-consistency} (Theorem~\ref{thm:existence}):
    At full body governance ($\alpha = 1$), the system admits self-consistent
    fixed points where the agent's output is determined entirely by its own
    dynamical state.  The cooperative fixed point ($a^{*} \approx 0.98$)
    is strongly attracting from the cooperative basin (Figure~\ref{fig:selfconsistent}).

  \item \textbf{Implicit inference and smoothing} (Theorem~\ref{thm:smoothing}):
    Body governance reduces state-space KL divergence by more than half and
    action variance by $\sim 250\times$ (at $d = 30$) relative to TfT, at
    the cost of a modest ($\sim 2\%$) payoff reduction
    (Figure~\ref{fig:kl_landscape}).

  \item \textbf{Perturbation insulation}:
    The closed-loop self-feedback at $\alpha = 1$ insulates the reservoir
    from external perturbations (Figure~\ref{fig:perturbation}).
    The dynamic sentinel combines this insulation with adaptive response.

  \item \textbf{Habituation as investment}:
    Oja-rule adaptation constitutes an upfront investment that enables
    subsequent smoothing.  Un-habituated agents benefit from cognitive
    control; deeply habituated agents benefit from body governance
    (Figure~\ref{fig:habituation}).

  \item \textbf{Optimal interior $\alpha^{*}$}
    (Proposition~\ref{prop:interior_optimum}):
    The free energy landscape shows that the optimal receptivity is
    $\alpha^{*} \approx 0.6$--$0.7$ for moderate metabolic cost
    (Figure~\ref{fig:free_energy}).

  \item \textbf{Dynamic sentinel}:
    The body-driven adaptive $\alpha(t)$ mechanism detects environmental
    change through the reservoir's own discomfort signal and achieves the
    highest cumulative payoff across all conditions
    (Figure~\ref{fig:dynamic_sentinel}).

  \item \textbf{Parameter robustness}:
    The sentinel is robust to moderate parameter variation, with the
    strongest sensitivity to the intervention threshold $\theta$
    (Figure~\ref{fig:sensitivity}).

  \item \textbf{Reservoir richness}:
    Implicit inference capacity scales with reservoir dimension
    (two orders of magnitude across the tested range),
    with the majority attributable to reservoir dynamics rather than ridge
    regularization (Figure~\ref{fig:dimension}).

  \item \textbf{Governance regimes}
    (Conjecture~\ref{conj:phase_transition}):
    The sentinel consistently outperforms TfT across the tested
    parameter range.  The $(d, \tau_{\mathrm{env}})$ phase diagram
    reveals a soft crossover near $d \approx 20$ where the sentinel's
    advantage saturates, with increasing defection block length
    gradually eroding the advantage
    (Figure~\ref{fig:phase_transition}).

  \item \textbf{EMA baseline}:
    An EMA-filtered TfT baseline achieves up to $158\times$ variance
    reduction, but the reservoir's advantage is qualitative: simultaneous
    smoothing ($461\times$), perturbation absorption, instant recovery,
    and anomaly detection capacity (Table~\ref{tab:ema_baseline}).
\end{enumerate}

\subsection{The Body as Implicit Inferrer and Its Own Sentinel}

The three lines of evidence---self-consistency
(Theorem~\ref{thm:existence}), dynamic sentinel
(Section~\ref{sec:dynamic_sentinel}), and the dimension sweep
(Section~\ref{sec:exp_dimension})---converge on a single claim:
the body reservoir performs \emph{implicit inference} over interaction
history through its $d$-dimensional dynamics, and this inference capacity
scales with the reservoir's physical richness
(Section~\ref{sec:exp_dimension}).

The term ``implicit inference'' requires clarification.  We do not claim
that the reservoir maintains explicit beliefs or performs Bayesian updating.
Rather, the reservoir's $d$-dimensional state trajectory constitutes a
high-dimensional encoding of the interaction history---the agent's own
actions, the opponent's responses, their temporal correlations---processed
through nonlinear recurrent dynamics.  The readout $\sigma(\Wout \cdot
\xstate + \bout)$ projects this rich internal representation onto a
one-dimensional action, but the internal representation itself encodes
far more than the action reveals.  This is inference in the sense of
\citet{grigoryeva2018}'s universality result: a sufficiently rich reservoir
can approximate any fading-memory functional of its input history,
including (implicitly) the functional that maps interaction histories
to optimal cooperative responses.

The distinction between ``deciding to cooperate'' and ``being a
cooperator'' \citep{frank1988} is thus not metaphorical but a formal
property of the dynamical system.  The body does not compute cooperation
from a stored rule; it \emph{expresses} cooperation as the self-consistent
fixed point of its adapted dynamics.  An external observer sees only the
one-dimensional behavioral output; the $d$-dimensional internal state
that produces it---the body's implicit model of the interaction---remains
invisible.

\paragraph{The cost of overriding the body.}
The state-space complexity cost $C(\alpha) = \KL(p_{\alpha} \| p_{\mathrm{hab}})$
formalizes the thermodynamic expense of acting against the body's adapted
nature.  When metacognition reduces $\alpha$, it forces the reservoir to
receive inputs that differ from what its self-consistent dynamics would
produce, driving the state distribution away from its habituated regime.
This internal distortion is invisible to the opponent but real to the
agent: the reservoir's state trajectory deviates from its adapted manifold,
and maintaining this deviation dissipates free energy.
Measuring this cost in the $d$-dimensional state space rather than in
the one-dimensional action space is a substantive theoretical commitment:
the body's cost of compliance is \emph{internal} and need not be
externally observable.  An agent can maintain near-identical behavioral
output at different $\alpha$ values while sustaining very different
levels of internal strain---the one-dimensional action variance captures
what the opponent sees; the $d$-dimensional KL divergence captures what
the agent \emph{bears}.
The free energy
landscape (Figure~\ref{fig:free_energy}) shows that this cost is
generally lowest near the body-trust regime and increases as $\alpha$
decreases toward cognitive control (with a slight non-monotonicity
near $\alpha = 1$ discussed in Section~\ref{sec:two_source}),
formalizing the intuition that overriding one's own body is
thermodynamically expensive.

\paragraph{Abstraction and generalization.}
The reservoir's representational capacity also explains its ability to
handle situations not encountered during development.  The readout weights
are trained on symmetric inputs ($a = a_{\mathrm{opp}} \in \{0, 1\}$,
representing mutual cooperation and mutual defection), yet the reservoir
responds coherently to the full range of asymmetric inputs encountered
during game play ($a \neq a_{\mathrm{opp}}$, including exploitation and
being exploited).  This generalization is not surprising given the
reservoir's architecture: the $d$-dimensional state space provides a
continuous representational manifold in which discrete training points
are embedded.  At $d = 30$, the reservoir has $30$ nonlinear dynamical
degrees of freedom processing a $2$-dimensional input; the training
configurations occupy a zero-measure subset of this representation space.
The reservoir's response to novel input configurations is determined by
the smooth interpolation properties of the $\tanh$ dynamics and the
readout's linear structure---the same properties that underlie the
universal approximation results of \citet{grigoryeva2018}.
This abstraction level ensures that the body's implicit inference is robust
to surface-level novelty: what matters is not whether a specific input
pattern was seen during development, but whether it falls within the
dynamical repertoire that the reservoir's architecture affords.
Stated differently, at the reservoir's level of abstraction, genuinely
novel situations are rare.  The $d$-dimensional state space processes
$2$-dimensional input through a highly nonlinear, many-to-one map;
situations that appear categorically distinct to a cognitive
classifier---``mutual cooperation'' versus ``being exploited''---differ
only in degree within the reservoir's representational geometry.  The
body does not categorize situations; it responds to continuous dynamical
flows.

\paragraph{Resolving the homunculus problem.}
The homunculus problem in metacognitive theories---who monitors the
monitor?---is resolved by separating detection (the body, through $D(t)$)
from governance (metacognition, which merely maintains a policy
$\{\alpha_0, \theta, \eta_{\uparrow}, \eta_{\downarrow}\}$).
The sentinel update \eqref{eq:alpha_update} requires only a comparison
and two arithmetic operations; the ``intelligence'' resides in the body's
$d$-dimensional implicit inference, not in the governor.

\paragraph{Non-monotonic KL and the two-source effect.}
\label{sec:two_source}
The KL divergence $\KL(p_{\alpha,\mathrm{noisy}} \| p_{\mathrm{hab}})$
reaches its minimum not at $\alpha = 1$ (full body governance) but near
$\alpha \approx 0.70$ (Experiment~2, Figure~\ref{fig:kl_landscape}).
This non-monotonicity admits a natural dynamical interpretation.
At $\alpha = 1$, the reservoir's input is entirely determined by its own
output (via the self-feedback loop): the action fed back into the reservoir
is $a(t) = \abody(t) = \sigma(\Wout \cdot \xstate(t) + \bout)$, a
deterministic function of the current state.  The resulting closed-loop
dynamics, while stable, constrain the state trajectory to a low-dimensional
manifold, reducing the diversity of visited states.
At intermediate $\alpha$, the reservoir receives input from \emph{two
independent sources}: the body's own readout ($\abody$) and the cognitive
filter ($\acog$, which reflects the opponent's most recent action).
These two signals carry partially independent information about the
interaction---the body encodes a temporally smoothed history while the
cognitive component provides an unfiltered snapshot.  Their combination
enriches the effective input to the reservoir, allowing the state
distribution under noisy play to remain closer to the habituated baseline.
Thus, the body at intermediate $\alpha$ benefits from the cognitive
signal as an independent information source that prevents the state space
from collapsing onto a self-referential manifold.  This effect is
consistent with the interior optimum of
Proposition~\ref{prop:interior_optimum}: the free-energy-minimizing
$\alpha^{*}$ balances not only payoff and complexity but also the
informational diversity of the reservoir's input.

\subsection{Implications for Game Theory}

The BRG framework challenges the standard game-theoretic assumption that
strategies are costlessly computed.  When computation has a thermodynamic
cost, the strategy space is not just the set of contingent plans but the
set of \emph{physically realizable} contingent plans, weighted by their
dissipation costs.  This recovers and extends the bounded-complexity
results of \citet{rubinstein1986} and \citet{abreurubinstein1988}: rather
than abstractly counting automaton states, we measure complexity as KL
divergence from a physically grounded habituated baseline.

The dynamic sentinel adds a new dimension to this analysis.  The agent's
``strategy'' is not a fixed contingent plan but an adaptive system that
switches between governance modes based on body signals.  This is closer
to how actual organisms behave: they do not choose strategies from a menu
but respond to interoceptive cues with habitual or deliberative processing,
as the situation demands.

\paragraph{Strategic opacity.}
The reservoir's high-dimensional internal state provides a natural form
of strategic opacity.  An adaptive opponent attempting to model the BRG
agent faces a fundamental information asymmetry: the agent's behavioral
output is one-dimensional, but the internal process that generates it is
$d$-dimensional.  Inverse-modeling---inferring the reservoir's state from
observed actions---is ill-posed when $d \gg 1$.  In contrast, simple
strategies such as TfT are fully transparent: their next action is a
deterministic function of the last observed opponent action.  The
body-governed agent's cooperation is indistinguishable from unconditional
cooperation during stable periods, and its transition during perturbation
is driven by internal dynamics that the opponent cannot reconstruct.  This
opacity is not a deliberate deception but a structural property of
high-dimensional dynamics projected onto low-dimensional behavior---the
agent is opaque because it is embodied, not because it is strategic.
Against adaptive opponents capable of opponent modeling, this structural
opacity may confer an information-theoretic advantage that deterministic
cognitive strategies cannot achieve; we leave the formal analysis to
future work.

We emphasize that the BRG framework is \emph{strategy-neutral}: it does
not privilege cooperation over defection.  An agent habituated to defection
would exhibit the same dynamical properties (self-consistent fixed points,
noise smoothing, sentinel-mediated adaptation) with defection as the
body-governed output.  The cooperative focus of our experiments reflects
the sociological interest in cooperation puzzles, not a structural bias
of the model.

\subsection{Implications for Neuroscience}

The three-layer architecture refines dual-process theories
\citep{kahneman2011} and the habit--goal-directed dichotomy
\citep{daw2005, dolan2013}.  The key insight is that the ``habit system''
is a full dynamical system (the reservoir) with intrinsic temporal
integration, noise filtering, self-consistency properties, and
interoceptive detection capacity.

The dynamic sentinel model provides a formal account of interoception
\citep{damasio1994, seth2016}: the body's ``somatic markers'' correspond to the
composite discomfort signal $D(t)$, which integrates state deviation,
output variation, and body-cognition disagreement.
\citet{seth2016} formalized active interoceptive inference, in which
bodily states are regulated by autonomic reflexes enslaved by descending
predictions from deep generative models.  Our discomfort signal $D(t)$
serves an analogous role: it is the reservoir's ``interoceptive prediction
error''---the discrepancy between the body's current dynamical state and
its habituated baseline.  \citet{pezzulo2025} recently modeled embodied
decisions as active inference, emphasizing that decision-making in
dynamic environments requires tight coupling between perception, action,
and bodily states; the BRG sentinel provides a concrete computational
mechanism for such coupling.
Individual differences
in interoceptive sensitivity could be modeled by a receptivity precision
parameter modulating $D(t)$; we leave this extension to future work.

\subsection{Relationship to the Free Energy Principle}
\label{sec:discussion_fep}

The free energy framework in this paper shares mathematical structure
with Friston's free energy principle (FEP) \citep{friston2006, friston2010}
and with the thermodynamic decision theory of \citet{ortega2013},
but differs in important respects that require explicit clarification.

\paragraph{Shared structure.}
All three frameworks optimize a functional of the form
``accuracy minus complexity,'' where complexity is measured by a KL
divergence from a reference distribution.  In active inference
\citep{friston2015, parr2022}, the agent minimizes
$\FE = -\E[\ln p(o|\theta)] + \KL(q(\theta) \| p(\theta))$
over a variational posterior $q(\theta)$.
In \citet{ortega2013}, a bounded-rational agent maximizes
$\E[U] - \beta^{-1} \KL(\pi \| \pi_0)$,
where $\pi_0$ is a default policy and $\beta^{-1}$ is the
information-processing cost.
Our $\FE(\alpha) = -\bar{u}(\alpha) + \lambda \cdot \KL(p_{\alpha} \| p_{\mathrm{hab}})$
follows the same pattern.

\paragraph{Key difference: generative model vs.\ dynamical substrate.}
In the FEP, the baseline $p(\theta)$ is a \emph{prior belief} within
a generative model that the agent uses for inference.
In the BRG framework, $p_{\mathrm{hab}}$ is not a belief but the
\emph{physically realized stationary distribution} of the reservoir's
$d$-dimensional state space (the habituated dynamical regime).
The agent does not ``infer'' its habituated distribution;
it \emph{is} its habituated distribution.
This distinction has consequences: FEP agents update beliefs via
variational message passing; BRG agents change their action distribution
by modulating $\alpha$, which shifts the balance between two
\emph{physical processes} (reservoir dynamics and cognitive computation).
The complexity cost in BRG is thus more directly thermodynamic---it
measures the metabolic cost of driving the body away from its adapted
steady state, in the sense of \citet{wolpert2024}'s non-equilibrium
thermodynamics of computation.

\paragraph{Complementarity with multi-agent active inference.}
Recent work has applied active inference to multi-agent game-theoretic
settings.  \citet{hyland2024} proposed free-energy equilibria for
boundedly-rational agents, and \citet{ruizserra2025} extended this with
factorised generative models in which agents maintain explicit,
individual-level beliefs about opponents' internal states.
\citet{demekas2024} provided an analytical model showing that
active-inference agents develop TfT-like strategies in iterated
prisoner's dilemmas through belief updating.
The BRG framework offers a complementary perspective:
rather than modeling opponent types through explicit Bayesian inference,
the body reservoir \emph{implicitly} encodes opponent history through its
$d$-dimensional dynamical state.  This implicit encoding is less flexible
(the reservoir cannot represent arbitrary generative models) but more
efficient (it requires no inference algorithm) and more robust
(it exploits the reservoir's intrinsic smoothing and temporal memory).
A promising direction for future work is a hybrid architecture that
combines reservoir-based implicit encoding with active-inference-based
explicit belief updating, potentially inheriting the robustness of the
former and the flexibility of the latter.

\paragraph{Relationship to entropy-regularized RL.}
Our free energy functional also connects to entropy-regularized
reinforcement learning \citep{haarnoja2018}, where policies are optimized
to maximize reward while staying close to a reference (maximum-entropy)
policy.  The key difference is that our reference distribution
$p_{\mathrm{hab}}$ is \emph{adapted} (shaped by cooperative experience),
not maximum-entropy (uniform); the BRG agent minimizes deviation from
\emph{what it has learned to be}, not from ignorance.

\subsection{Implications for Statistical Physics}

From a statistical physics perspective, the BRG agent is a non-equilibrium
system driven by its game-theoretic environment.  The reservoir's stationary
state distribution $p_{\alpha,\mathcal{E}}$ is a non-equilibrium steady state
maintained by the continuous flow of actions and observations.  The
complexity cost $\KL(p_{\alpha,\mathcal{E}} \| p_{\mathrm{hab}})$ is the
thermodynamic cost of maintaining this steady state relative to the
habituated equilibrium, connecting to \citet{still2012}'s framework for
the thermodynamics of prediction.

Recent advances in stochastic thermodynamics provide a finer-grained
foundation for this connection.  \citet{wolpert2024} argues that real
computers---biological and digital---operate far from thermal equilibrium
under constraints that Landauer's idealized framework does not capture:
finite speed, limited degrees of freedom, and irreversible intermediate
steps.  \citet{manzano2024} formalize ``mismatch cost'' as the excess
dissipation incurred when a computation's actual thermodynamic trajectory
deviates from the optimally designed protocol.  In our framework, the
state-space KL divergence $\KL(p_{\alpha} \| p_{\mathrm{hab}})$ can be
interpreted as a mismatch cost: the thermodynamic penalty for operating
at receptivity $\alpha$ in a noisy environment when the body has been
designed (habituated) for a cooperative one.  This cost is
measured in the reservoir's $d$-dimensional state space, not in the
one-dimensional action space.  An agent that overrides its body
(reducing $\alpha$) distorts its internal state trajectory even when
the resulting behavioral change is small---the body pays for the
override internally.
The free-energy-optimal $\alpha^{*}$ is then the
governance mode that minimizes this internal mismatch cost subject to the
payoff constraint.

The self-consistency equation \eqref{eq:self_consistency} is formally
analogous to mean-field equations in spin glasses and neural networks
\citep{ganguli2008}: the ``magnetization'' (body output $a^{*}$) is
determined self-consistently by the ``effective field'' (reservoir dynamics
plus self-feedback).  The cooperative and defection fixed points correspond
to ferromagnetic and anti-ferromagnetic phases, with the habituation
history playing the role of the external field that selects between them.

The dimension sweep results add a new physical insight: the body's
smoothing power scales with its number of degrees of freedom, analogous
to the law of large numbers in statistical mechanics.
A larger reservoir averages over more independent dynamical modes, producing
a more stable macroscopic output.  This connects to recent findings in
physical reservoir computing, where \citet{lee2024rc} demonstrated that
reservoirs with access to different thermodynamic phases can reconfigure
their computational properties on demand---a physical analog of the
BRG agent's ability to shift between governance modes via $\alpha(t)$.

\subsection{Virtue and Self-Interest: A Brief Remark}
\label{sec:virtue}

The BRG framework suggests an alignment between habitual cooperation and
thermodynamic efficiency that merits brief mention, with the caveat that
the following observations are derived from a single-opponent, fixed-schedule
model and should not be over-generalized.

Within the scope of our analysis, the body-governed cooperator
does not sacrifice payoff for moral principle; rather, cooperation
\emph{is} the thermodynamically efficient response of an adapted body.
The dynamic sentinel ensures that this behavior
is not na\"ively exploitable: when the body detects threat, it activates cognitive
resources for self-protection.
The embodied cooperator's behavior is difficult to distinguish
from genuine commitment---the body does not signal the contingent nature
of its cooperation---giving it properties of a credible commitment device
\citep{frank1988}.  Whether this alignment between virtue and efficiency
holds in richer settings (multi-agent, adaptive opponents, asymmetric
information) remains an open question that deserves separate treatment.

\subsection{Limitations and Future Work}
\label{sec:limitations}

Several limitations suggest directions for future research:

\begin{enumerate}[label=(\roman*)]
  \item \textbf{Limited strategy comparison:}
    Our primary cognitive baseline is continuous Tit-for-Tat, which is
    maximally noise-sensitive.  Experiment~10
    (Section~\ref{sec:exp_ema_baseline}) partially addresses this by
    comparing against EMA-filtered TfT, showing that the reservoir's
    advantage lies not in variance reduction alone (EMA achieves up to
    $158\times$) but in the combination of smoothing, perturbation
    absorption, instant recovery, and anomaly detection---capabilities
    that a one-dimensional filter cannot provide.
    A broader comparison across strategy taxonomies, including Generous
    TfT, Win-Stay-Lose-Shift, quantal response equilibria
    \citep{mckelvey1995}, and level-$k$ models, would further clarify the
    reservoir's added value and is deferred to future work.

  \item \textbf{Single opponent model:}  Our analysis considers a single
    opponent (cooperative, noisy, or defecting).  Extending to populations
    of heterogeneous agents, evolutionary dynamics
    \citep{nowak2006, kandori1993}, and multi-agent reservoir interactions
    is an important direction.

  \item \textbf{Strategic opponents:}  All opponents in our experiments
    follow fixed schedules rather than adaptive strategies.  Testing the
    BRG agent against learning opponents (e.g., reinforcement learners,
    active-inference agents \citep{demekas2024},
    other BRG agents) would assess robustness in more realistic settings.
    In particular, two BRG agents with different $\alpha$ values or
    habituation histories could exhibit emergent coordination or
    competition dynamics that the current single-agent analysis cannot
    capture.

  \item \textbf{Reservoir architecture:}  We use a standard ESN with $\tanh$
    nonlinearity.  More biologically realistic reservoir models (spiking
    networks, dendritic computation) may exhibit different smoothing
    properties.  The theoretical framework applies to any contractive
    recurrent map, but the quantitative predictions are architecture-dependent.

  \item \textbf{Symmetric training inputs:}  During developmental learning,
    the reservoir is driven with symmetric states ($a = a_{\mathrm{opp}}$)
    representing mutual cooperation or mutual defection.  Asymmetric
    input combinations (e.g., cooperate against defector) are not included
    in the training set; the agent encounters them only during game play.
    Whether alternative training protocols improve or degrade robustness
    remains to be investigated.

  \item \textbf{Analytical tightness:}  The smoothing bound in
    Theorem~\ref{thm:smoothing} uses the operator norm $\|J_\Phi\|$
    (which upper-bounds $\rho(J_\Phi)$) and a first-order linearization
    that is conservative for large perturbations.  Tighter bounds using
    structured random matrix theory or the actual spectral radius
    could sharpen the variance reduction estimate.

  \item \textbf{Empirical validation:}  The model makes testable predictions
    about human cooperation: habitual cooperators should show lower
    physiological variability when facing noisy opponents, and their
    cooperation should be less affected by isolated defections.

  \item \textbf{Sentinel learning:}  The sentinel parameters
    ($\alpha_0, \theta, \eta_{\uparrow}, \eta_{\downarrow}$) are fixed.
    A natural extension is meta-learning of these governance parameters
    based on long-term performance, connecting to multi-timescale learning
    in reinforcement learning.

  \item \textbf{Baseline drift:}  The sentinel's discomfort signal uses
    baselines ($\bar{\xstate}, \bar{a}_{\mathrm{body}}$) that are either
    fixed at initialization or tracked by EMA.  In environments with
    gradual regime shifts, EMA-tracked baselines will drift, potentially
    masking slow-onset threats.  The current model is best suited to
    environments with stationary cooperative baselines punctuated by
    discrete perturbations.

  \item \textbf{Receptivity precision:}  We model the discomfort signal
    $D(t)$ as received without attenuation.  A natural extension is a
    \emph{receptivity precision} parameter $\rho_{\mathrm{rec}} \in [0,1]$
    that modulates the signal: $D_{\mathrm{eff}} = \rho_{\mathrm{rec}} \cdot D$.
    This would formalize individual differences in interoceptive sensitivity,
    but its identification from behavior alone is difficult (it is partially
    confounded with the threshold $\theta$), so we defer it to future work.
\end{enumerate}

\subsection{Broader Significance}

The BRG framework offers a resolution to a longstanding puzzle in social
science: why cooperative behavior is so robust in practice despite the
theoretical fragility of cooperative equilibria.  The answer, in our
framework, is that cooperation does not need to be \emph{maintained} by
ongoing cognitive computation; it can be \emph{embodied} in the adapted
state of a dynamical system.  The body-governed cooperator is not deciding
to cooperate each round---cooperation is the natural output of a system
that has been shaped by its history of cooperative interaction.

The dynamic sentinel ensures that this embodied cooperation is not na\"ive.
The body's own discomfort signals provide an early warning system that
activates cognitive resources when needed, without requiring a sophisticated
metacognitive detector.  The result is an agent that is simultaneously
robust (through reservoir smoothing), responsive (through the sentinel),
and efficient (through body governance).

This perspective connects to \citet{ostrom1990}'s observation that
institutions for collective action succeed when they become habits rather
than rules, to \citet{batson2011}'s argument that genuine altruism exists
alongside strategic cooperation, and to the game-theoretic commitment
literature \citep{frank1988, nesse2001}: the most credible commitment is
one that is not computed but expressed---not a strategy but a state of
being.

\section{Conclusion}
\label{sec:conclusion}

We have introduced the Body-Reservoir Governance (BRG) framework for
analyzing cooperation in repeated games from an embodied, thermodynamic
perspective.  The three-layer architecture reinterprets the roles of body,
cognition, and metacognition: the body reservoir is an implicit inferrer
whose $d$-dimensional dynamics encode interaction history and express
cooperation as a self-consistent fixed point; the cognitive filter is
an available toolkit activated on demand; and metacognition is a lightweight
governor that maintains policy without engaging in detection or strategy.

Our central finding is that the body reservoir performs implicit
high-dimensional inference over interaction history, and that this
inference reduces the complexity cost of cooperation as measured by the
state-space KL divergence from the habituated baseline---a cost borne
internally by the reservoir.
The externally observable consequence is dramatic smoothing of behavioral
variance, but the underlying mechanism is richer: the reservoir's
$d$-dimensional state constitutes an implicit model of the interaction
that far exceeds what the one-dimensional action output reveals.
The dynamic sentinel model formalizes the body's dual role: the same
reservoir dynamics that perform implicit inference also detect
environmental change through a composite discomfort signal, driving
adaptive modulation of metacognitive receptivity.  The sentinel achieves
the highest cumulative payoff across all conditions tested, demonstrating
that body-driven governance outperforms both static body trust and pure
cognitive control.

The reservoir dimension sweep establishes that implicit inference capacity
scales with bodily richness: variance reduction ranges from $23\times$ at
$d = 5$ to $1600\times$ at $d = 75$, providing a formal account of why
organisms with richer embodiment can sustain more robust habitual behavior.
Overriding the body---reducing $\alpha$ against the reservoir's
self-consistent tendency---incurs a state-space distortion cost that
increases with the degree of override, formalizing the thermodynamic
expense of acting against one's adapted nature.

The framework suggests a reinterpretation of cooperation.  What appears
as unconditional cooperation from the game-theoretic perspective is, from
the thermodynamic perspective, the most \emph{efficient} form of strategic
behavior: the minimum-dissipation response of an adapted body to its
environment.  The body-governed cooperator expresses the self-consistent
fixed point of a dynamical system shaped by cooperative experience,
with the full richness of that expression encoded in $d$ dimensions
but projected onto one.  When the environment changes, the body's own
discomfort signals trigger the transition to cognitive override.
The framework thus reinterprets cooperation not as a computed strategy
but as an emergent property of embodied dynamics---one that the
sentinel mechanism can modulate when circumstances demand.

\bigskip
\noindent\textbf{Acknowledgments.}
The author thanks the members of the computational social science seminar
at The Open University of Japan for helpful discussions.

\bibliographystyle{plainnat}
\bibliography{references}

\end{document}